\newcommand{\ignore}[1]{}
\newtheorem{theorem}{Theorem}
\newtheorem{corollary}[theorem]{Corollary}
\newtheorem{lemma}[theorem]{Lemma}
\newtheorem{definition}[theorem]{Definition}
\newtheorem{claim}[theorem]{Claim}
\newcommand{\claimproof}[2]%
{\noindent{\em Proof of Claim \ref{#1}.}
#2\hspace*{\fill}$\Box$~~~~\vspace{3.5mm} }
\newcommand{\G}{\mathbb{G}}
\newcommand{\F}{\mathbb{F}}
\newcommand{\N}{\mathbb{N}}
\newcommand{\R}{\mathbb{R}}
\newcommand{\Q}{\mathbb{Q}}
\newcommand{\Z}{\mathbb{Z}}
\newcommand{\calZ}{\mathcal{Z}}
\newcommand{\poly}{\text{poly}}
\newcommand{\len}{\text{len}}
\newcommand{\des}{\text{des}}
\newenvironment{breakablealgorithm}
  {
   \begin{center}
     \refstepcounter{algorithm}
     \hrule height.8pt depth0pt \kern2pt
     \renewcommand{\caption}[2][\relax]{
       {\raggedright\textbf{\ALG@name~\thealgorithm} ##2\par}%
       \ifx\relax##1\relax 
         \addcontentsline{loa}{algorithm}{\protect\numberline{\thealgorithm}##2}%
       \else 
         \addcontentsline{loa}{algorithm}{\protect\numberline{\thealgorithm}##1}%
       \fi
       \kern2pt\hrule\kern2pt
     }
  }{
     \kern2pt\hrule\relax
   \end{center}
  }
\newcommand{\algmargin}{\the\ALG@thistlm}
\newlength{\whilewidth}
\algnewcommand{\parState}[1]{\State%
  \parbox[t]{\dimexpr\linewidth-\algmargin}{\strut #1\strut}}
\begin{document}

\pagenumbering{gobble}


\title{\bf Counting basic-irreducible factors mod $p^k$ in deterministic poly-time and $p$-adic applications}

\author{
Ashish Dwivedi \thanks{CSE, Indian Institute of Technology, Kanpur, \texttt{ashish@cse.iitk.ac.in} }
\and
Rajat Mittal \thanks{CSE, Indian Institute of Technology, Kanpur, \texttt{rmittal@cse.iitk.ac.in} }
\and
Nitin Saxena \thanks{CSE, Indian Institute of Technology, Kanpur, \texttt{nitin@cse.iitk.ac.in} }
}

\date{}
\maketitle

\begin{abstract}
Finding an irreducible factor, of a polynomial $f(x)$ modulo a prime $p$, is not known to be in deterministic polynomial time. Though there is such a classical algorithm that {\em counts} the number of irreducible factors of $f\bmod p$. We can ask the same question modulo prime-powers $p^k$. The irreducible factors of $f\bmod p^k$ blow up exponentially in number; making it hard to describe them. Can we count  those irreducible factors $\bmod~p^k$ that remain irreducible mod $p$? These are called {\em basic-irreducible}. A simple example is in $f=x^2+px \bmod p^2$; it has $p$ many basic-irreducible factors. Also note that, $x^2+p \bmod p^2$ is irreducible but not basic-irreducible!

We give an algorithm to count the number of basic-irreducible factors of $f\bmod p^k$ in deterministic poly($\deg(f),k\log p$)-time. This solves the open questions posed in (Cheng et al, ANTS'18 \& Kopp et al, Math.Comp.'19). In particular, we are counting roots $\bmod\ p^k$; which gives the first deterministic poly-time algorithm to compute Igusa zeta function of $f$. Also, our algorithm efficiently partitions the set of all basic-irreducible factors (possibly exponential) into merely $\deg(f)$-many disjoint sets, using a compact tree data structure and {\em split} ideals. 

\end{abstract}

\vspace{-.30mm}
\noindent
{\bf 2012 ACM CCS concept:} Theory of computation-- Algebraic complexity theory,  Pseudorandomness and derandomization; Computing methodologies-- Algebraic/ Number theory algorithms, Hybrid symbolic-numeric methods; Mathematics of computing-- Combinatoric problems.

\vspace{-.45mm}
\noindent
{\bf Keywords:} deterministic, root, counting, modulo, prime-power, tree, basic irreducible, unramified.  


\pagenumbering{arabic}

\vspace{-1mm}

\section{Introduction}
\vspace{-1mm}

Factoring a univariate polynomial, over {\em prime} characteristic, is a highly well studied problem. Though efficient factoring has been achieved using randomization, still efficient derandomization is a longstanding problem. A related question of equal importance is root finding, but this is known to be equivalent to factoring in deterministic poly-time. Surprisingly, testing irreducibility, or even counting irreducible factors, is easy in this regime. The main tool here is the magical Frobenius morphism of prime $p$ characteristic rings:  $x\mapsto x^p$.

Though much effort has been put in prime characteristic, few results are known in {\em composite} characteristic $n$ \cite{shamir1993generation}. Even irreducibility testing of a polynomial, with the prime factorization of $n$ given, has no efficient algorithm known. This reduces to {\em prime-power} characteristic $p^k$ \cite{von1998factoring}. Deterministic factoring in such a ring is a much harder question (at least it subsumes deterministic factoring mod $p$). In fact, even randomized algorithms, or practical solutions, are currently elusive \cite{von1996factorization, von1998factoring, klivans1997factoring, salagean, sircana2017factorization, dwivedi2019efficiently}. The main obstruction is non-unique factorization.

Being a non-unique factorization domain, there could be exponential number of roots, or irreducible factors, modulo prime-powers \cite{von1996factorization}. So one could ask a related question about counting all the irreducible factors (respectively roots) modulo prime-powers. Efficiently solving this counting problem will give us an efficient irreducibility testing criteria, which is the first question one wants to try. Recall that prime characteristic allows such an efficient method.

Motivated by this, we ask--- Could we describe all factors which remain irreducible mod $p$? Such factors are called {\em basic}-irreducible in the literature. This is much more than counting roots mod $p^k$ (as, $f(\alpha)=0$ iff $x-\alpha$ is a basic-irreducible factor of $f$). These roots, besides being naturally interesting, have various applications in--- factoring \cite{chistov1987efficient, chistov1994algorithm, cantor2000factoring}, coding theory \cite{berthomieu2013polynomial, salagean}, elliptic curve cryptography \cite{lauder2004counting}, arithmetic algebraic-geometry \cite{zuniga2003computing, denef2001newton, denef1991report, igusa1974complex}. Towards this we design a machinery, yielding the following result:

{\em Given a degree $d$ integral polynomial $f(x)$ and a prime-power $p^k$, we partition the set of all basic-irreducible factors of $f\bmod p^k$ into at most $d$ (compactly provided) subsets in deterministic $\poly(d, k\log p)$-time; in the same time we count the number of factors in each of these subsets.

Also, we can compactly partition (and count) the roots of $f\bmod p^k$ in deterministic poly-time. }

\smallskip
This efficient partitioning of (possibly exponentially many) roots into merely $d$ subsets is reminiscent of the age-old fact: there are at most $\deg(g)$ roots of a polynomial $g(x)$ over a field.
Root sets mod $p^k$ are curious objects; not every subset of $\Z/p^k\Z$ is a root set (except when $k=1$). Their combinatorial properties have been studied extensively \cite{sierpinski1955remarques, chojnacka1956congruences, bhargava1997p, dearden1997roots, maulik2001root}. In this regard, our result is one more step to understand the hidden properties of root-sets mod prime-powers.

\smallskip
Factoring mod $p^k$ has applications in factoring over {\em local} fields \cite{chistov1987efficient, chistov1994algorithm, cantor2000factoring}. Previously, the latter was achieved through randomized factoring mod $p$ \cite{cantor1981new} and going to extensions of $\Q_p$. Directly factoring mod $p^k$, for arbitrary $k$, would imply a new and more natural factoring algorithm over $p$-adic fields. In fact, {\em our method gives the first deterministic poly-time algorithm to count basic-irreducible factors of $f\in \Q_p[x]$}; by picking $k$ such that $p^k\nmid\text{ discriminant}(f)$. This derandomization was not known before, though $\Q_p[x]$ is indeed a unique factorization domain.

\vspace{-2mm}
\subsection{Previously known results}
\vspace{-1mm} 

The questions of root finding and root counting of $f\bmod p^k$ are of classical interest, see \cite{niven2013introduction, apostol2013introduction}.
Using Hensel lifting (Section \ref{appen-prelim}) we know how to `lift' a root, of multiplicity one, of $f$ mod $p$ to a root of $f$ mod $p^k$, in a unique way. But this method breaks down when the root (mod $p$) has multiplicity more than one. \cite[Cor.4]{berthomieu2013polynomial} was the first work to give an efficient randomized algorithm to count, as well as find, all the roots of $f$ mod $p^k$. In this line of progress, very recently \cite{cheng2017counting} gave a deterministic algorithm to count roots in time {\em exponential} in the parameter $k$. Extending the idea of \cite{cheng2017counting}, \cite{kopp2018randomized} gave another efficient randomized algorithm to count roots of $f\bmod p^k$. Note that finding the roots deterministically seems a difficult problem because it requires efficient deterministic factoring of $f$ mod $p$ (which is a classical open problem). But counting the roots mod $p^k$ deterministically may be an easier first step.

Recently there has been some progress in factoring $f\bmod p^k$ when $k$ is constant. \cite{dwivedi2019efficiently} gave the first efficient randomized algorithm to factor $f\bmod p^k$ for $k\leq 4$. This gives an exponential improvement over the previous best algorithms of \cite{sircana2017factorization, von1998factoring, von1996factorization} mod $p^k$ ($k\leq 4$). In fact, they generalized Hensel lifting method to mod $p^k$, for $k\leq 4$, in the difficult case when $f\bmod p$ is power of an irreducible. The related derandomization questions are all open.

The case of factoring $f\bmod p^k$ when $k$ is ``large''--- larger than the maximum power of $p$ dividing the discriminant of the integral $f$ ---has an efficient randomized algorithm due to \cite{von1998factoring}. They showed, assuming large $k$, that factorization mod $p^k$ is well behaved and corresponds to the unique $p$-adic factorization of $f$ (i.e.~in $\Q_p[x]$). In turn, $p$-adic factoring has known efficient randomized algorithms \cite{chistov1987efficient, chistov1994algorithm, cantor2000factoring}. The derandomization questions are all open.
 
We now give a deterministic method to count all the roots (resp.~basic-irreducible factors) efficiently. In fact, our proof can be seen as a deterministic poly-time reduction of basic-irreducible factor finding mod $p^k$ to root finding mod $p$. 
In particular, it subsumes all the results of \cite{berthomieu2013polynomial}.

\vspace{-2mm}
\subsection{Our results }\label{sec-results}
\vspace{-1mm}


\vspace{-0.5mm}
\begin{theorem}[Root count]
\label{thm1}
Let $p$ be a prime, $k\in\N$ and $f(x)\in\Z[x]$. Then, all the roots of $f \bmod p^k$ can be counted in deterministic poly($\deg f, k\log p$)-time. 
\end{theorem}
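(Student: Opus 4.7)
The plan is a recursion on the precision $k$. For the base case $k=1$, compute $g(x) := \gcd(x^p-x,\, f(x))$ in $\F_p[x]$ (via repeated squaring of $x$ modulo $f$ to obtain $x^p \bmod f$), and return $\deg g$; since $x^p-x$ splits completely over $\F_p$, the polynomial $g$ is the product of $(x-\alpha)$ over the distinct roots $\alpha \in \F_p$ of $f$. For $k \geq 2$, I would reduce to lower precision by partitioning the roots of $f \bmod p^k$ according to their reduction mod $p$ and handling each class \emph{symbolically}, since deterministic root-finding in $\F_p$ is itself open.

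The analytic input is the Taylor expansion $f(\alpha + p y) = \sum_{j\ge 0} p^j\, T_j(\alpha)\, y^j$ with $T_j := f^{(j)}/j! \in \Z[x]$. Set $e(\alpha) := \min_j \bigl(j + v_p(T_j(\alpha))\bigr)$ and $h_\alpha(y) := f(\alpha + p y)/p^{e(\alpha)}$, so $h_\alpha \bmod p$ is nonzero. Then the lifts of $\alpha$ to roots of $f$ mod $p^k$ correspond, with a multiplicative factor $p^{e(\alpha)-1}$, to the roots of $h_\alpha$ mod $p^{k-e(\alpha)}$ when $e(\alpha) < k$, and otherwise number $p^{k-1}$ unconditionally. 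Since $\alpha$ is a root of $f \bmod p$, always $e(\alpha) \geq 1$, so each recursive call strictly decreases $k$.

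To execute this symbolically without enumerating the (possibly many) $\alpha$'s, first split $g = g_1 \cdot g_2$ with $g_1 := \gcd(g,\, f' \bmod p)$: each root $\alpha$ of $g_2$ has $f'(\alpha) \neq 0$ and Hensel-lifts uniquely, so $g_2$ contributes $\deg g_2$ to the count. For $g_1$ (squarefree, since $g$ is), Hensel-lift to $\hat g_1 \in (\Z/p^k\Z)[x]$ and work in the Artinian ring $A := (\Z/p^k\Z)[x]/(\hat g_1) \cong \prod_\alpha \Z/p^k\Z$. Compute $f(x + p y) \bmod (\hat g_1, p^k)$ as an element of $A[y]$, and extract $e$ and $h$ by reading $p$-adic valuations of the coefficients; where $e$ differs across components of $A$, further split $\hat g_1$ by GCDing its mod-$p$ reduction against the relevant $T_j/p^{j-?}$ to obtain a refined sub-product over which $e$ is constant. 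On each refined piece, factor out $p^e$ and recurse on $h$ with precision $k - e$, aggregating the counts with the multipliers $p^{e-1}$ (or $p^{k-1}$ in the saturated case).

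The main obstacle is to keep the branching polynomially bounded. I would track the recursion as a tree whose leaves at any fixed level partition some divisor of $\gcd(x^p-x,\, f) \bmod p$, so their total degree is at most $d = \deg f$ at every level. The depth is at most $k$ since each step shrinks the precision by $e \geq 1$, and all ring arithmetic happens in quotients of $(\Z/p^k\Z)[x]$ by polynomials of degree $\leq d$, each operation costing $\poly(d, k\log p)$. Summing over the $\le k \cdot d$ nodes of the tree yields the claimed $\poly(d, k\log p)$ runtime; this will also be the scaffold for the more general basic-irreducible factor count promised in the introduction.
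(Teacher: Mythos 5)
Your overall scheme — decompose roots $p$-adically via $x\mapsto \alpha+py$, work symbolically in an Artinian quotient ring rather than enumerating $\alpha$, split the ring whenever the behaviour (here, the valuation $e$) is not uniform across its components, and organize the recursion as a tree — is the same as the paper's (split ideals, stack/roots-tree). The correspondence between lifts and roots of $h_\alpha$ mod $p^{k-e(\alpha)}$ with multiplier $p^{e(\alpha)-1}$, and the saturated count $p^{k-1}$ when $e(\alpha)\ge k$, match the paper's treatment exactly.

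The gap is in the complexity analysis, and it is precisely where the paper's technical work sits. You assert that the tree stays polynomial because the ``leaves at any fixed level partition some divisor of $\gcd(x^p-x,f)\bmod p$, so their total degree is at most $d$.'' That justification is wrong for any level $\ge 2$: a level-$\ell$ leaf is a multivariate object (a quotient of $(\Z/p^k)[x][y][z]\cdots$ by a chain of polynomials), and the quantity you must control is the \emph{product} of the degrees in all the chained variables (equivalently, the rank of the Artinian ring in which you do arithmetic), which is not a divisor of anything in $\F_p[x]$. A naive bound gives $d^k$. What actually saves you is a multiplicity argument: if $\beta$ is an $\F_p$-root of $h_\alpha \bmod p$ with multiplicity $m$, then the next-level shifted polynomial $h_\alpha(\beta+pz)/p^{e'}$ reduced mod $p$ has degree at most $m$, so degree ``distributes'' from a node to its children, and telescoping from the root gives total product-degree $\le d$ at every level. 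This is the paper's Lemma~\ref{lemma-property-of-degree} and its proof requires the split-ideal structure (Chinese remaindering, Claim~\ref{claim-monic-degree}) plus a Hensel factorization of $\tilde g_I(\bar a,x)$ near each root. Without this, your recursion has no polynomial size bound. Separately, your description of the splitting step (``GCDing its mod-$p$ reduction against the relevant $T_j/p^{j-?}$'') is not a procedure: zerodivisors in $A$ can arise not only when $e$ is non-uniform but inside every gcd and inversion, and you need a uniform mechanism (the paper's \textsc{Test-Zero-Div}/\textsc{GCD} modulo triangular ideals, Lemmas~\ref{lemma-zero-div}--\ref{lemma-gcd}) that returns a nontrivial factorization of a generator whenever it fails, together with the invariant that the resulting pieces remain split ideals and stay prefix-free. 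As written, the proposal is the right idea with the two load-bearing lemmas left unproved.
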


This is the first efficient derandomization of the randomized root counting algorithms \cite{berthomieu2013polynomial, kopp2018randomized}, and an exponential improvement over the recent deterministic algorithm of \cite{cheng2017counting}. The challenge arises from the fact that we need to count the possibly exponentially many roots without being able to find them.

\medskip\noindent{\bf Remarks.}
{\bf 1)} 
In the algorithm, the (possibly exponential) root-set of $f \bmod p^k$ gets partitioned into at most $\deg(f)$-many disjoint subsets and we output a compact representation, called {\em split ideal}, for each of these subsets. We do count them, but do not yet know how to find a root deterministically.

{\bf 2)} 
This gives an efficient way to deterministically compute the Igusa zeta function, given an integral univariate $f$ and a prime $p$. This follows from the fact that we just need to compute $N_k(f):= $number of roots of $f$ mod $p^k$, for $k\in[\ell]$ s.t.~$p^\ell\nmid\text{ discriminant}(f)$, to estimate {\em Poincar\'e series} $\sum_{i=0}^{\infty} N_i(f)x^i$ \cite{denef1991report, igusa1974complex}. Interestingly, it converges to a rational function!

{\bf 3)} 
This is the first deterministic poly-time algorithm to count the number of lifts of a {\em repeated} root of $f \bmod p$ to $f \bmod p^k$.

{\bf 4)} 
This gives the first deterministic poly-time algorithm to count the number of $p$-adic integral roots of a given $p$-adic polynomial $f\in \Q_p[x]$. (Count roots mod $p^\ell$ where $p^\ell\nmid\text{ discriminant}(f)$.)

\smallskip
Next, we extend the ideas for counting roots to count all the basic-irreducible factors of $f\bmod p^k$ in deterministic polynomial time. Recall that a {\em basic-irreducible} factor of $f\bmod p^k$ is one that remains irreducible in mod $p$ arithmetic. 

\vspace{-0.7mm}
\begin{theorem}[Factor count]\label{thm2}
Let $p$ be a prime, $k\in\N$ and $f(x)\in\Z[x]$. Then, all the basic-irreducible factors of $f \bmod p^k$ can be counted in deterministic poly($\deg f, k\log p$)-time. 
\end{theorem}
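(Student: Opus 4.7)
The plan is to reduce basic-irreducible factor counting to a root-counting problem over unramified ring extensions of $\Z/p^k\Z$, and then apply a mild generalization of Theorem~\ref{thm1}. The key bijection is: a basic-irreducible factor of $f\bmod p^k$ of degree $d$ corresponds to a Frobenius orbit, of size exactly $d$, among the roots of $f$ in the unramified extension $W_d:=\Z_{p^d}/p^k\Z_{p^d}$ whose residue in $\F_{p^d}$ generates that residue field (equivalently, has primitive degree $d$ over $\F_p$).

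First, I would compute the distinct-degree decomposition of $f\bmod p$ using the classical $\gcd(f,x^{p^d}-x)\bmod p$ trick, extracting for each $d$ the squarefree product $g_d\in\F_p[y]$ of the degree-$d$ irreducible factors of $f\bmod p$. By coprime-factor Hensel lifting, decompose $f\equiv\prod_{d}F_d\pmod{p^k}$ where $F_d\bmod p$ is a power-product of degree-$d$ irreducibles only. Every basic-irreducible factor of $f\bmod p^k$ of degree $d$ divides exactly one $F_d$, so I can treat each $d$ independently. If $g_d=0$, the degree-$d$ count is zero; otherwise set $r_d:=\deg(g_d)/d$, the number of distinct degree-$d$ irreducible factors.

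Next, Hensel-lift the squarefree $g_d$ uniquely to $\tilde g_d\in(\Z/p^k\Z)[y]$ and form the base ring $B_d:=(\Z/p^k\Z)[y]/\tilde g_d(y)\cong \prod_{i=1}^{r_d}W_d$ (by the Chinese remainder theorem applied to the coprime Hensel lifts of the $\pi_i$). Since $F_d\bmod p$ is supported only on degree-$d$ irreducibles, every root of $F_d$ in any copy of $W_d$ has primitive residue in $\F_{p^d}$, so its Frobenius orbit has full size $d$ and corresponds to a unique basic-irreducible factor of $F_d\bmod p^k$ of degree $d$. Letting $M_d$ be the number of roots of $F_d$ in a single $W_d$, the count for degree $d$ is $r_d M_d/d$, while the number of roots of $F_d$ in the whole product $B_d$ equals $M_d^{r_d}$. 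So I would invoke Theorem~\ref{thm1}'s algorithm, generalized to coefficient ring $B_d$ in place of $\Z/p^k\Z$, to compute the latter, recover $M_d$ by an integer $r_d$-th root, and finally sum $r_d M_d/d$ over $d=1,\dots,\deg f$.

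The core technical step will be extending the split-ideal/tree algorithm of Theorem~\ref{thm1} from the base ring $\Z/p^k\Z$ to $B_d$. The properties used by that algorithm are that the coefficient ring is a product of Henselian local rings with uniformizer $p$ and a finite residue field; these hold for $B_d$ with residue ring $\prod_i\F_{p^d}$, where the residue-level root-count needed at each node can still be obtained via $\gcd(\,\cdot\,,x^{p^d}-x)$. I expect the split-ideal construction, multiplicity analysis, and complexity bounds of Theorem~\ref{thm1} to port essentially verbatim, with polynomial arithmetic over $B_d[x]$ remaining polynomial-time in $\deg f$, $d$, and $k\log p$. Confirming this port, and bounding the depth/branching of the tree uniformly in $d$, is where the main technical work will lie.
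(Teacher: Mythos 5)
Your underlying bijection is the right one and matches the paper's (Lemma~\ref{lemma-root-to-bif} and Theorem~\ref{theorem-reduction}): a degree-$d$ basic-irreducible factor of $F_d\bmod p^k$ corresponds to a Frobenius orbit of size exactly $d$ among the roots of $F_d$ in the unramified extension $W_d\cong G(p^k,d)$. But your final counting formula is off by a factor of $r_d$. Since each such factor has exactly $d$ roots in $W_d$, and the orbits partition the root set, the number of degree-$d$ basic-irreducible factors of $F_d$ is $M_d/d$, \emph{not} $r_d M_d/d$. A sanity check: take $f=x(x+1)\bmod p^2$, $p>2$. Here $d=1$, $r_1=2$, $W_1=\Z/p^2\Z$, and $F_1=f$ has two roots in $W_1$ (Hensel lifts of $0$ and $-1$), so $M_1=2$; there are exactly $2=M_1/d$ basic-irreducible factors, whereas $r_1 M_1/d=4$. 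The correct total over $d$ is $\sum_d M_d/d$.

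Beyond the arithmetic slip, the route through $B_d:=(\Z/p^k\Z)[y]/\tilde g_d(y)\cong\prod_{i=1}^{r_d}W_d$ has an unaddressed gap: $B_d$ is not a local ring, so the subroutines underpinning Theorem~\ref{thm1} (in particular \textsc{Test-Zero-Div} and \textsc{GCD}, Lemmas~\ref{lemma-zero-div} and~\ref{lemma-gcd}) do not port ``essentially verbatim.'' Those routines assume the coefficient ring is a Galois ring; when they meet a zerodivisor they return a nontrivial factorization of a split-ideal generator. Over $B_d$ you will additionally meet zerodivisors coming from the base ring itself (idempotents like $(1,0,\dots,0)$), which do not factor any generator $h_i$ and would have to be handled by decomposing $B_d$ into its $r_d$ components --- at which point you are back to running the algorithm over a single $W_d$ per component, and the $M_d^{r_d}$/$r_d$-th-root detour disappears. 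Relatedly, the maximal-split-ideal count $D\cdot q^{k-l-1}$ of Lemma~\ref{lemma-MSI-root-represent} uses that the residue ring is a field of size $q$; with residue ring $\prod_i\F_{p^d}$ this accounting no longer applies directly.

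The paper sidesteps all of this without base-changing at all: it keeps the coefficient ring $\Z/p^k\Z$ and the split-ideal generators in $\F_p[\bar x_l]$, and the only modification to Algorithm~\ref{algo1} is to replace the Frobenius polynomial $x^p-x$ by $x^{q}-x$ (with $q=p^d$) in the gcd steps. The resulting $h_0=\varphi_1\cdots\varphi_t$ already encodes the entire product structure you are constructing by hand in $B_d$ --- here the ``$y$'' and the ``$x_0$'' are one and the same variable, and the split-ideal factoring mechanism handles the decomposition into components automatically. The factor count then comes out as $(\sum_j D_j q^{k-l_j})/d$, precisely your $M_d/d$. If you want to salvage your route, you need to (i) fix the $r_d$ factor, and (ii) either run the ported algorithm per component of $B_d$ (which makes the port trivial but also makes $B_d$ redundant) or genuinely extend the zerodivisor/gcd machinery to non-local base rings.
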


We achieve this by extending the idea of counting roots to more general $p$-adic integers. Essentially, we efficiently count all the roots of $f(x)$ in $\mathcal{O}_K/\langle p^k\rangle$, where $\mathcal{O}_K$ is the ring of integers of a $p$-adic {\em unramified} extension $K/\mathbb{Q}_p$ (refer \cite{koblitz1977p} for the standard notation). Currently, there is no fast, practical method known to find/count roots when $K$ is {\em ramified}.

\begin{corollary}
Consider (an unknown) $p$-adic extension $K:= \mathbb{Q}_p[y]/\langle g(y)\rangle$, which is unramified and has degree $\Delta$. Let $f(x) \in \mathbb{Z}[x]$, $p, k, \Delta$ be given as input (in binary).

Then, we can count all the roots of $f$, in $\mathcal{O}_K/\langle p^k\rangle$, in deterministic poly($\deg(f), k\log p, \Delta$)-time.
\end{corollary}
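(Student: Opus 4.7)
Since $K/\mathbb{Q}_p$ is unramified of degree $\Delta$, the ring $R := \mathcal{O}_K/\langle p^k\rangle$ is local with residue field $\mathbb{F}_{p^\Delta}$, and each root $\alpha$ of $f$ in $R$ gives a distinct monic linear factor $(x-\alpha)$ of $f$ in $R[x]$ (and vice versa). So the target quantity is precisely the analog of Theorem~\ref{thm1}'s root-count, but over the larger base ring $R$. The plan is to execute the algorithm behind Theorem~\ref{thm1} with its ``residue field'' $\mathbb{F}_p$ reinterpreted as $\mathbb{F}_{p^\Delta}$.

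Crucially, we will not build $R$ (nor $K$) explicitly, since that would appear to require a deterministic construction of a degree-$\Delta$ irreducible polynomial over $\mathbb{F}_p$---a long-standing open problem. Instead, all polynomial arithmetic will be carried out in $(\mathbb{Z}/p^k\mathbb{Z})[x]$, and every residue-field query issued by Theorem~\ref{thm1}'s algorithm will be translated into a question about roots in $\mathbb{F}_{p^\Delta}$. Such a query takes the form ``how many roots of an $\mathbb{F}_p$-polynomial $\bar u(x)$ lie in $\mathbb{F}_{p^\Delta}$?'' and is answered deterministically in $\poly(\deg \bar u,\Delta,\log p)$ time as $\deg\gcd(\bar u(x),\,x^{p^\Delta}-x)$ via repeated squaring; equivalently, each $\mathbb{F}_p$-irreducible factor of $\bar u$ of degree $d$ contributes $d$ such roots iff $d\mid\Delta$.

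With this single substitution, the split-ideal / compact-tree machinery of Theorem~\ref{thm1} should still partition the root-set of $f$ in $R$: at every leaf, the classical Hensel's lemma for unramified local rings lifts each simple residue-field root uniquely to $R$, while the repeated-root recursion remains identical to the original, except that the residue-field ``multiplicity'' at each stage is now a degree-$d$ count for various $d\mid\Delta$. Summing the leaf counts will yield the required count of roots in $R$.

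The main obstacle will be verifying that the split-ideal invariants, constructed as polynomial data over $(\mathbb{Z}/p^k\mathbb{Z})[x]$, remain valid enumerators of the (possibly exponentially larger) root-set over $R$. Concretely, one must check that the recursive polynomial and ideal operations of Theorem~\ref{thm1} commute with the flat base change $\mathbb{Z}/p^k\mathbb{Z}\hookrightarrow R$---a check that should go through since $R$ is the unique unramified lift of $\mathbb{F}_{p^\Delta}$ modulo $p^k$, but one that has to be carried out uniformly across every stage of the recursion. Once this commutation is in hand, the runtime analysis follows from Theorem~\ref{thm1}'s bound together with a $\poly(\Delta)$ overhead from the $x^{p^\Delta}$ GCDs, yielding the claimed $\poly(\deg f,\,k\log p,\,\Delta)$-time bound.
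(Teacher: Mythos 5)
Your plan matches the paper's own proof. Section~\ref{sec-count-root} runs Algorithm~\ref{algo1} with the Frobenius polynomial $x^p-x$ replaced by $x^{p^\Delta}-x$ in Steps~3 and~12, keeping all split-ideal generators over $\F_p$ (so the unknown Galois ring $G(p^k,\Delta)\cong\mathcal{O}_K/\langle p^k\rangle$ is never explicitly constructed) while reinterpreting zerosets over $\F_{p^\Delta}$; the ``base-change commutation'' you flag is precisely the observation made there that the split-ideal lemmas of Section~\ref{sec-pre} carry over verbatim with $\F_q$ in place of $\F_p$.
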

\noindent{\bf Remarks.}
{\bf 1)} 
This gives the first deterministic poly-time algorithm to count the number of (unramified $p$-adic integral) roots of a given $p$-adic polynomial $f\in K[x]$.

{\bf 2)} 
Our method generalizes to efficiently count all the roots of a given polynomial $f(x) \in (\mathbb{F}[t]/\langle h(t)^k\rangle)[x]$ for a given polynomial $h$ (resp.~$f\in\F[[t]][x]$ with power-series coefficients); assuming that $\mathbb{F}$ is a field over which root counting is efficient (eg.~$\mathbb{Q}, \R, \mathbb{F}_p$ and their algebraic extensions).

\vspace{-1mm}
\subsection{Proof techniques} \label{sec-tech}
\vspace{-1mm}

Our implementation involves constructing a {\em list data structure} $\mathcal{L}$ which implicitly partitions the root-set of $f\bmod p^k$ into at most $\deg(f)$-many disjoint subsets; and count the number of roots in each such subset. The construction of $\mathcal{L}$ is incremental, by doing arithmetic modulo special ideals, 

{\bf Split ideals.}
A {\em split ideal} $I_l$ of length $l+1$, and degree $b$, is a `triangular' ideal defined as $I_l=\langle h_0(x_0),h_1(\bar{x}_1),\ldots,h_{l}(\bar{x}_l)\rangle$, where the notation $\bar{x}_i$ refers to the variable set $\{x_0,\ldots,x_i\}$ and $b=\prod_{0\le i\le l} \deg_{x_i}(h_i)$. It implicitly stores a size-$b$ subset of the root-set of $f\bmod p^k$, where a root looks like $\sum_{0\le i\le l} x_ip^i$ till precision $p^{l+1}$. Note that a root $r$ of $f\bmod p^k$ is also a root of $f\bmod p^l$ for all $l\in[k]$. Since we cannot access them directly, we `virtualize' them in the notation $\bar{x}_l$.

The structure of these ideals is quite nice and recursive (Section \ref{sec-pre}). So it may keep splitting (in Algorithm \ref{algo1}) till it becomes a {\em maximal ideal}, which corresponds to a single point in $(\F_p)^l$ and has degree one. Or, the algorithm may halt earlier, due to `stable clustering' of roots, and then we call the ideals-- {\em maximal split ideal}; in fact, $\mathcal{L}$ has only maximal split ideals. These do not give us the actual roots but do give us their count!

\textbf{List data structure.}
$\mathcal{L}$ implicitly stores, and may partition, the root-set of $f\bmod p^k$. Essentially, $\mathcal{L}$ is a set of at most $d$ maximal split ideals, i.e.~$\mathcal{L}=\{I_1(l_1,d_1),\ldots,I_n(l_n,d_n)$ $\}$, where each ideal $I_j \subseteq \F_p[\bar{x}_{k-1}]$ has two parameters--- length $l_j$ and degree $d_j$. A maximal split ideal $I(l,D)$ implicitly stores a size-$D$ subset of the root-set of $f\bmod p^k$. This yields a simple count of $ D p^{k-l}$ for the corresponding roots. Ideals in $\mathcal{L}$ have the property that they represent disjoint subsets of roots; and they collectively represent the whole root-set of $f\bmod p^k$. Thus, $\mathcal{L}$ gives us both the (implicit) structure and the (exact) size of the root-set of $f\bmod p^k$. In the intermediate steps of the algorithm, for efficiency reasons, we will store a tuple $(I_j, f_{I_j})$ in a changing {\em stack} $S$. Where, $f_{I_j}(\bar{x}_{l_j-1},x) := f(x_0+px_1+\cdots+p^{l_j-1}x_{l_j-1}+p^{l_j}x)\bmod \hat{I}_j$ is a `shifted and reduced' version of $f$ tagging along (with $x$ as the only {\em free} variable).

%
%

%
%
%

\textbf{Roots-tree data structure.}
Most importantly, we need to prove that $|\mathcal{L}|$, and the degree of the split ideals in $\mathcal{L}$, remains at most $\deg(f)$ at all times in the algorithm (while $f\bmod p^k$ may have exponentially many roots). To achieve this, we use a different way to look at the data structure $\mathcal{L}$--- in {\em tree} form $RT$ where each generator $h_i$ appearing in an $I\in \mathcal{L}$ appears as an edge of the tree; conversely, each tree node $v$ denotes the intermediate split ideal corresponding to the path from the root (of the tree $RT$) to $v$. 

The roots-tree $RT$ has a useful parameter at every node-- degree. Degree of a node measures the possible extensions to the next level, and it possesses the key  property: it `distributes' to its children degrees. This helps us to simultaneously bound the width of $RT$ and degree of split ideals, to be at most the degree $\deg(f)$ of the root node. Otherwise, since we compute with $k$-variate polynomials, a naive analysis of the tree-size (resp.~degree of split ideals) would give a bound of $\deg(f)^k$, or a slightly better $\deg(f)2^k$ as in \cite[pg.9]{cheng2017counting}; which is exponential in the input size $\deg(f)\cdot k\log p$.


\vspace{-1mm}
\subsection{Proof overview}\label{sec-overview}
\vspace{-1mm}


\textbf{Proof idea of Theorem \ref{thm1}.} 
Let $R:=\Z/\langle p^k\rangle$; so $R/\langle p\rangle\cong \F_p$. Let $\calZ_{R}(f)$ be the zeroset of $f\bmod p^k$.


The idea to count roots of $f\bmod p^k$ comes from the elementary fact:
Any root $r\in R$ of $f\bmod p^k$ can be seen in a $p$-adic (or base-$p$) representation as 
$r=: r_0 + p r_1+ p^2 r_2+\ldots + p^{k-1} r_{k-1}$, for each $r_i \in \F_p$.
Thus, we decompose our formal variable $x$ into multi-variables $x_0,\ldots,x_{k-1}$ being related as, $x = x_0 + p x_1+ p^2 x_2+\ldots + p^{k-1} x_{k-1}$. 

Though, getting roots of $f(x_0)\bmod p$ deterministically is difficult, we can get the count on the number of roots of $f(x_0)\bmod p$ from the degree of a polynomial $h(x_0)\in \F_p[x_0]$, which is the gcd of $f$ and {\em Frobenius} polynomial $x_0^{p}-x_0 \bmod p$. This way of implicitly representing a set of desired objects by a polynomial and using its properties (eg.~degree) to get a count on the objects is widely termed as {\em polynomial method}.

This gives us a length-$1$ and degree-$\deg_{x_0}(h_0)$ split ideal $I_0:= \langle h_0(x_0)\rangle$. Since $I_0$ represents all roots of $f\bmod p$, we can again apply the polynomial method to incrementally build on ideal $I_0$ to get greater length split ideals representing roots of $f$ with greater precision, say $\bmod\ p^{l+1}$.

To do this, we trivially lift $I_0$ to make it an ideal $\hat{I}_0$ in $R$. Solve $f(x_0+p x)\equiv p^{\alpha} g(x_0,x) \bmod \hat{I}_0$ for $\alpha\in\N$ and $g\not \equiv 0\bmod p$. Reduce $g(x_0,x)$ over $\F_p$ again, and calculate the next set of candidates for $x_1$ implicitly in a polynomial $h_1\in \F_p[x_0,x]$ defined as, $h_1:= \Call{GCD}{g(x_0,x)\bmod p, x^{p}-x} \bmod I_0$. Using the properties of split ideal (Lemma \ref{lemma-actual-gcd}), multivariate-gcd modulo $I_0$ yields $h_1$ that `stores' all the candidates for $x_1$, for each root $x_0$ represented by $I_0$. So, we get a length $2$ split ideal $I_1:=I_0+\langle h_1(x_0,x_1)\rangle$.

In every iteration, we add a new variable, by solving equations like $f(x_0+p x_1+p^2 x_2+\ldots+p^l x_l+p^{l+1} x)\equiv p^{\alpha} g(\bar{x}_l,x)$ modulo a length $l+1$ triangular ideal $\hat{I}_l$, for $\alpha\in\N$ and $g\not \equiv 0\bmod p$. This gives us the next candidate $h_{l+1}(\bar{x}_l,x):= \Call{GCD}{g(\bar{x}_l,x)\bmod p, x^{p}-x} \bmod I_{l}$; moving to a more precise split ideal. Sometimes we get that $g$ and $x^p-x$ are coprime mod $I_{l}$, those cases indicate {\em dead-end} and we stop processing those branches.
Finally we reach $\alpha=k$, which indicates {\em full precision}; and we get a maximal split ideal $I_l$ which we add to the list $\mathcal{L}$.

{\bf Division by `zero'.}
Some computations modulo a split ideal may not be possible. These cases arise only due to {\em zerodivisors}. In those cases, we will exploit the zerodivisor to split/factor the current split ideal into more split ideals of smaller degree. We can keep track of all these split ideals using a stack and keep performing the same computations iteratively. Since a split ideal has finite length, the process must terminate. The real challenge lies in proving a good bound.

\smallskip {\bf Efficiency via roots-tree.}
Now, we need to show that the algorithm to construct $\mathcal{L}$ is efficient and that $|\mathcal{L}|\le \deg(f)$ (in fact, sum of degrees of all maximal split ideals in $\mathcal{L}$ is at most $\deg(f)$). In a particular iteration, the algorithm just performs routine computations like--  reduction modulo the current split ideal $I$, inversion, zerodivisor testing, gcd, exponentiation, and computing $p$-valuations or multiplicities; which are clearly bounded by $\poly(\deg(f), k\log p, \deg(I))$ (Sections \ref{appen-reduce-valuation} \& \ref{appen-GCD-zerodiv}). It is harder to bound the number of iterations and $\deg(I)$.

To understand the number of iterations, we review the construction of $\mathcal{L}$ as the formation of a tree, which we call roots-tree $RT$. A node of $RT$ corresponds to an intermediate split ideal $I$, where an edge at level $i$ on the path from the root (of $RT$) to the node corresponds to the generator $h_i(\bar{x}_i)$ of $I$. Each time we update a split ideal $I_{l-1}$ to $I_l:=I_{l-1}+\langle h_l\rangle$ we add a child, to the node corresponding to $I_{l-1}$, hanging by a new edge labelled $h_l$. 
	Similarly, splitting of an ideal at some generator $h_i(\bar{x}_i)$ into $m$ ideals corresponds to creating $m$ subtrees hanging by edges which are $m$ copies of the edge labelled $h_i$. This way the roots-tree upper bounds the number of iterations; moreover, the maximal split ideals in $\mathcal{L}$ appear as leaves in $RT$.

{\em Degree distribution in $RT$.}
Each node $N$ of $RT$ has an associated parameter, `degree of node' $[N]$ (Definition \ref{def-deg-node}), which is defined in such a way that it distributes to degree of its children (i.e.~$[N]$ is at least the sum of degrees of its child nodes). This is intended to measure the possible extensions $x_l$ modulo the corresponding split ideal $I_{l-1}$, and is a suitable multiple of $\deg(I_{l-1})$. Applying degree's property inductively, we get that the degree of root node of $RT$, which is $\deg(f)$, distributes to the degree of the leaves and so the sum of degrees of all maximal split ideals in $\mathcal{L}$ is at most $\deg(f)$. The distributive property of $[N]$, corresponding to ideal $I_{l-1}$, comes from the fact: the degree of a child $C$ corresponding to ideal $I_l=I_{l-1}+\langle h_l\rangle$ is bounded by the {\em multiplicity} of roots of $h_{l}(\bar{a},x)$ times $\deg(I_{l-1})$, corresponding to some root $\bar{a}$ of $I_{l-1}$; and the overall sum of these multiplicities for every child of $N$ is naturally bounded by the degree  of $N$ (Lemma \ref{lemma-property-of-degree}).

The details are given in Section \ref{sec-main1}.

\medskip\noindent
\textbf{Proof idea of Theorem \ref{thm2}.} 
The idea, and even the algebra, is the same as for Theorem \ref{thm1}. The definition of list $\mathcal{L}$ easily extends to implicitly store all the basic-irreducible factors of $f\bmod p^k$ of some degree $b$ (a generalization over roots which corresponds to degree $b=1$ basic-irreducible factors). This uses a strong property possessed by basic-irreducible factors. A basic-irreducible factor $g(x)\in (\Z/\langle p^k\rangle)[x]$ of $f \bmod p^k$, of degree $b$, completely splits over the {\em Galois ring} $G(p^k,b):= \Z[y]/\langle p^k, \varphi(y)\rangle$, where $\varphi(y) \bmod p$ is an irreducible of degree $b$ (Section \ref{appen-prelim}). Conversely, if we find a root of $f(x)$, in $G(p^k,b)$, then we find a degree-$b$ basic-irreducible factor of $f\bmod p^k$.

By distinct degree factorization we can assume $f(x)\equiv (\varphi_1 \ldots \varphi_m)^e + p h(x) \bmod p^k$, where each $\varphi_i(x) \bmod p$ is irreducible and degree-$b$. We construct $\mathcal{L}$ by applying the algorithm of Theorem \ref{thm1}, with one change: every time to update a length-$l$ split ideal $I_{l-1}$ to a length $l+1$ ideal $I_l:=I_{l-1}+\langle h_l\rangle$, we compute $h_l$ using the Frobenius polynomial $x^{q}-x \bmod p$, where $q:=p^b$. Basically, for $x$, we focus on $\F_q$-roots instead of the erstwhile $\F_p$-roots.

We count the number of (distinct, monic, degree-$b$) basic-irreducible factors represented by each maximal split ideal $I(l,D)\in \mathcal{L}$ as: $Dq^{k-l}/b$. The details are given in Section \ref{sec-main2}.
 
%
%

\section{Preliminaries}
\label{sec-pre}
\vspace{-1mm}

Here we introduce our main tool - `split ideals'. Proofs for this section have been moved to Section \ref{app-split-proofs}. Basic introduction to Galois rings (i.e.~non-prime characteristic analog of finite fields), Hensel lifting, randomized factoring over finite fields, etc.~have been moved to Section \ref{appen-prelim}.

We will be given a univariate polynomial $f(x)\in \mathbb{Z}[x]$ of degree $d$ and a prime power $p^k$ (for a prime $p$ and a positive integer $k\in \mathbb{N}$). 
Wlog, we assume that $f$ is monic over $\mathbb{F}_p$.

A tuple of variables $(x_0,\ldots,x_l)$ will be denoted by $\bar{x}_l$. Often, an $(l+1)$-variate polynomial $a(x_0,x_1,\ldots,x_l)$ will be written as $a(\bar{x}_l)$, 
and the polynomial ring $\F_p[x_0,\ldots,x_l]$ as $\F_p[\bar{x}_l]$. 

We denote the ring $\mathbb{Z}/\langle p^k\rangle$ by $R$ (ring $R/\langle p\rangle$ is the same as field $\mathbb{F}_p$). 
An element $a\in R$ can be seen in its $p$-adic representation as $a= a_0+ p a_1+\ldots+p^{k-1}a_{k-1}$, where $a_i \in \F_p$ for $i\in \{0,\ldots,k-1\}$.

$\mathcal{Z}_R(g):= \{ r\in R \mid g(r)\equiv 0 \bmod p^k \}$ denotes the {\em zeroset} of a polynomial $g(x) \in R[x]$.

{\em Zeroset of an ideal} $I \subseteq \F_p[x_0,\ldots,x_l]$ is defined as the intersection of zeroset of all polynomials in $I$,
$\mathcal{Z}_{\mathbb{F}_p}(I):= \{ \bar{a}=(a_0,\ldots,a_l) \in (\mathbb{F}_p)^{l+1} \mid g(\bar{a}) \equiv 0 \bmod p, \forall g \in I \}$.

We will heavily use ideals of the form $I:= \langle h_0(\bar{x}_0),h_1(\bar{x}_1),\ldots,h_l(\bar{x}_l) \rangle$ satisfying the condition--- for any $i\in[l+1]$ and 
$\bar{a} \in \mathcal{Z}_{\mathbb{F}_p}(\langle h_0(\bar{x}_0),h_1(\bar{x}_1),\ldots,h_{i-1}(\bar{x}_{i-1}) \rangle)$, polynomial $h_{i}(\bar{a}, x_i)$ splits completely into distinct linear factors. 
They are formally defined as:

\begin{definition}[Split ideal]
\label{def-split-ideal}
We will call a polynomial {\em monic wrt $x$} if the leading-coefficient is one.
Given $f(x)\in R[x]$, an ideal $I$, in $\mathbb{F}_p[\bar{x}_l]$, is called a {\em split ideal wrt $f\bmod p^k$} if,
\\\noindent
1) $I$ is a {\em triangular ideal of length $l+1$}, meaning: $I=: \langle h_0(\bar{x}_0),h_1(\bar{x}_1),\ldots,h_l(\bar{x}_l) \rangle$, for some $0\le l\leq k-1$; $h_i(\bar{x}_i)\in \F_p[\bar{x}_i]$ is monic wrt $x_i$, for all $ i \in \{0,\ldots,l\}$,
\\\noindent
2) $|\mathcal{Z}_{\mathbb{F}_p}(I)| = \prod_{i= 0}^{l} \deg_{x_i}(h_i)$, and
\\\noindent
3) $\forall (a_0,\ldots,a_l) \in \mathcal{Z}_{\F_p}(I)$, $f(a_0+p a_1+\ldots+p^l a_l)\equiv 0 \bmod p^{l+1}$.

\smallskip
The {\em length of $I$} is $l+1$ and its {\em degree} is $\deg(I) := \prod_{i= 0}^{l} \deg_{x_i}(h_i)$.
\end{definition}

Split ideal $I$ relates to possible roots of $f\bmod p^k$. Since $f,p,k$ are fixed, we will call $I$ a {\em split ideal}. 
The definition of a split ideal implies that its roots represent a set of ``potential'' roots of $f$, i.e.~roots of $f$ modulo some $p^{l+1}$ for $0\le l<k$.
Restriction of a split ideal is also a split ideal. 

\begin{lemma}[Restriction of a split ideal]
\label{lemma-split-ideal-is-recursive}
Let $I_l:=\langle h_0(\bar{x}_0),\ldots,h_l(\bar{x}_l) \rangle$ be a split ideal in $\F_p[x_0,\ldots,x_l]$, then ideal $I_{j}:=\langle h_0(\bar{x}_0),\ldots,h_{j}(\bar{x}_{j}) \rangle$ is 
also a split ideal in $\F_p[x_0,\ldots,x_{j}]$, for all $0\le j\leq l$.
\end{lemma}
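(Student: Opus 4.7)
The plan is to verify the three defining properties of a split ideal for the truncation $I_j = \langle h_0(\bar x_0), \ldots, h_j(\bar x_j) \rangle$, using the fact that they already hold for $I_l$. Condition (1) of Definition~\ref{def-split-ideal} is essentially free: the generators $h_0, \ldots, h_j$ inherited from $I_l$ are already triangular and each $h_i$ is monic with respect to $x_i$, so $I_j$ is a triangular ideal of length $j+1$ in $\F_p[\bar x_j]$.

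The heart of the argument is condition (2), showing $|\mathcal{Z}_{\F_p}(I_j)| = \prod_{i=0}^{j} \deg_{x_i}(h_i)$. I would establish this via a projection argument. Consider the truncation map $\pi : (\F_p)^{l+1} \to (\F_p)^{j+1}$ sending $(a_0, \ldots, a_l) \mapsto (a_0, \ldots, a_j)$; clearly $\pi(\mathcal{Z}_{\F_p}(I_l)) \subseteq \mathcal{Z}_{\F_p}(I_j)$. For the upper bound on $|\mathcal{Z}_{\F_p}(I_j)|$, fix any partial tuple $(a_0, \ldots, a_{i-1})$ satisfying $h_0, \ldots, h_{i-1}$; since $h_i(\bar a_{i-1}, x_i)$ is a univariate polynomial in $x_i$ of degree exactly $\deg_{x_i}(h_i)$ (using monicity), it has at most $\deg_{x_i}(h_i)$ roots in $\F_p$. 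Multiplying through $i = 0, \ldots, j$ gives $|\mathcal{Z}_{\F_p}(I_j)| \le \prod_{i=0}^{j} \deg_{x_i}(h_i)$, and analogously each fiber of $\pi$ has size at most $\prod_{i=j+1}^{l} \deg_{x_i}(h_i)$. For the matching lower bound, I combine these: since $|\mathcal{Z}_{\F_p}(I_l)| = \prod_{i=0}^{l} \deg_{x_i}(h_i)$ by hypothesis and each fiber has size at most $\prod_{i=j+1}^{l} \deg_{x_i}(h_i)$, the image $\pi(\mathcal{Z}_{\F_p}(I_l))$ must have size at least $\prod_{i=0}^{j} \deg_{x_i}(h_i)$, giving the reverse inequality.

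For condition (3), let $(a_0, \ldots, a_j) \in \mathcal{Z}_{\F_p}(I_j)$. The projection argument above (in particular the matching cardinalities) shows $\pi$ is surjective onto $\mathcal{Z}_{\F_p}(I_j)$, so there exists an extension $(a_0, \ldots, a_l) \in \mathcal{Z}_{\F_p}(I_l)$. By the split-ideal property of $I_l$, $f\bigl(\sum_{i=0}^{l} a_i p^i\bigr) \equiv 0 \pmod{p^{l+1}}$. Now write
\[
\sum_{i=0}^{l} a_i p^i \;=\; \sum_{i=0}^{j} a_i p^i \;+\; p^{j+1} \sum_{i=j+1}^{l} a_i p^{i-j-1},
\]
so the two inputs differ by a multiple of $p^{j+1}$; Taylor expansion of $f$ around $\sum_{i=0}^{j} a_i p^i$ therefore gives $f\bigl(\sum_{i=0}^{j} a_i p^i\bigr) \equiv f\bigl(\sum_{i=0}^{l} a_i p^i\bigr) \equiv 0 \pmod{p^{j+1}}$, as required.

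The only subtle point is the surjectivity of $\pi$ onto $\mathcal{Z}_{\F_p}(I_j)$, which is exactly what forces both inequalities in the cardinality count to be equalities — once you see it as a projection with uniformly bounded fibers, the pigeonhole match between total size and maximum fiber size gives surjectivity for free. So the "main obstacle" is really just setting up the projection argument cleanly; after that, condition (3) drops out via elementary $p$-adic Taylor.
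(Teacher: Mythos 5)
Your proof is correct and follows essentially the same strategy as the paper's: bound $|\mathcal{Z}_{\F_p}(I_j)|$ from above by counting roots of the monic generators, then use the given cardinality of $\mathcal{Z}_{\F_p}(I_l)$ together with the bounded fiber size of the truncation map to force equality, and deduce condition (3) from the resulting surjectivity plus $\sum_{i=0}^{l} a_i p^i \equiv \sum_{i=0}^{j} a_i p^i \pmod{p^{j+1}}$. The only cosmetic difference is that the paper reduces to the case $j=l-1$ and leaves the extension property of zeros to a footnote, whereas you handle general $j$ directly and state the surjectivity explicitly; your version is slightly more careful on that point but not a different argument.
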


Further, we show that a split ideal $I$ can be decomposed in terms of its zeros.

\begin{lemma}[Split ideal structure]
\label{lemma-split-ideal-str}
A split ideal $I \subseteq \F_p[x_0,\ldots,x_l]$ can be decomposed as $I=\bigcap_{\bar{a} \in \calZ_{\F_p}(I)} I_{\bar{a}}$, where each $I_{\bar{a}}:=\langle x_0-a_0,\ldots,x_l-a_l \rangle$ corresponds to root $\bar{a} =: (a_0,\ldots,a_l) \in \calZ_{\F_p}(I)$. By Chinese remainder theorem, $  R/I = \bigoplus_{\bar{a} \in \calZ_{\F_p}(I)} R/I_{\bar{a}}$ .
\end{lemma}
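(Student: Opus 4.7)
The plan is to establish the set-theoretic decomposition $I = \bigcap_{\bar a \in \calZ_{\F_p}(I)} I_{\bar a}$ by induction on the length $l+1$, and then obtain the direct-sum description as an immediate consequence of the Chinese Remainder Theorem applied to the resulting family of pairwise coprime maximal ideals.

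First I would dispose of the easy inclusion $I \subseteq \bigcap_{\bar a} I_{\bar a}$: every $g \in I$ vanishes on $\calZ_{\F_p}(I)$ by definition, and $I_{\bar a}$ is precisely the maximal ideal of $\F_p[\bar{x}_l]$ consisting of polynomials that vanish at $\bar a$. The base case $l = 0$ is standard: the split-ideal condition $|\calZ_{\F_p}(\langle h_0\rangle)| = \deg h_0$ forces $h_0 = \prod_{a \in \calZ_{\F_p}(\langle h_0\rangle)}(x_0 - a)$ to be a product of distinct linear factors, and then CRT in the PID $\F_p[x_0]$ gives $\langle h_0\rangle = \bigcap_a \langle x_0 - a\rangle$.

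For the inductive step, set $I' := \langle h_0,\ldots,h_{l-1}\rangle$. By Lemma \ref{lemma-split-ideal-is-recursive}, $I'$ is itself a split ideal, so by induction $\F_p[\bar{x}_{l-1}]/I' \cong \bigoplus_{\bar b \in \calZ_{\F_p}(I')} \F_p$ via evaluation maps. Comparing the two degree identities $|\calZ_{\F_p}(I')| = \prod_{i<l}\deg_{x_i}(h_i)$ and $|\calZ_{\F_p}(I)| = \prod_{i\le l}\deg_{x_i}(h_i)$, one concludes that for every $\bar b \in \calZ_{\F_p}(I')$ the univariate fibre polynomial $h_l(\bar b, x_l) \in \F_p[x_l]$ must split into exactly $\deg_{x_l}(h_l)$ distinct linear factors; any smaller number of distinct roots in a single fibre would drop the total count of zeroes of $I$ strictly below the claimed product. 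Writing $\F_p[\bar{x}_l]/I \cong (\F_p[\bar{x}_{l-1}]/I')[x_l]/\langle \overline{h_l}\rangle$, pushing the inductive CRT through, and applying CRT again inside each fibre yields
\[
\F_p[\bar{x}_l]/I \;\cong\; \bigoplus_{\bar b \in \calZ_{\F_p}(I')} \F_p[x_l]/\langle h_l(\bar b, x_l)\rangle \;\cong\; \bigoplus_{\bar a \in \calZ_{\F_p}(I)} \F_p,
\]
where the composite isomorphism is evaluation at $\bar a$. The kernel of this evaluation map, viewed on $\F_p[\bar{x}_l]$, is simultaneously $I$ and $\bigcap_{\bar a} I_{\bar a}$, so these two ideals coincide. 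The claimed direct-sum formula for $R/I$ is then the Chinese Remainder Theorem applied to the distinct, and hence pairwise coprime, maximal ideals $I_{\bar a}$.

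The main obstacle is exactly the step where each fibre $h_l(\bar b, x_l)$ is shown to split into $\deg_{x_l}(h_l)$ distinct linear factors; everything else is formal manipulation of CRT. This splitting must be extracted from the combinatorial equality in the split-ideal definition, since any loss of complete splitting or any repeated root in even one fibre would break the product count and prevent the inductive CRT from composing cleanly.
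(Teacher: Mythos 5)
Your proof is correct and follows essentially the same route as the paper's: induction on length using Lemma \ref{lemma-split-ideal-is-recursive}, a counting argument to show each fibre polynomial $h_l(\bar b, x_l)$ splits completely into $\deg_{x_l}(h_l)$ distinct linear factors, and then CRT. One small difference worth noting: the paper cites Claim \ref{claim-monic-degree} (which itself appeals to this very lemma applied to the shorter ideal $I'$) to get the constant fibre degree, whereas you extract the complete splitting directly from the two product formulas $|\calZ_{\F_p}(I')| = \prod_{i<l}\deg_{x_i}(h_i)$ and $|\calZ_{\F_p}(I)| = \prod_{i\le l}\deg_{x_i}(h_i)$ together with monicity of $h_l$. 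This makes the argument slightly more self-contained and sidesteps what could look like circularity to a careful reader; your explicit remark about needing the per-fibre upper bound $\deg_{x_l}(h_l)$ (which comes from monicity in $x_l$) is exactly the point that needs to be stated. Your presentation via the tower $\F_p[\bar{x}_l]/I \cong (\F_p[\bar{x}_{l-1}]/I')[x_l]/\langle \overline{h_l}\rangle$ and kernels of evaluation maps is a conceptual reformulation of the paper's ideal-product identity $I = \prod_{\bar a}(I'_{\bar a} + \langle h_l\rangle)$; both are standard CRT bookkeeping.
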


Let $I =: \langle h_0(\bar{x}_0),h_1(\bar{x}_1),\ldots,h_{l}(\bar{x}_l) \rangle$ be a split ideal. Suppose some $h_i$ factors as $h_i(\bar{x}_i)= h_{i,1}(\bar{x}_i)\ldots h_{i,m}(\bar{x}_i)$.
Define $I_j:= \langle h_0(\bar{x}_0), \ldots, h_{i-1}(\bar{x}_{i-1}), h_{i,j}(\bar{x}_{i}), h_{i+1}(\bar{x}_{i+1}), \ldots, h_{l}(\bar{x}_{l}) \rangle$, for $ j\in [m]$.
The following corollary of Lemma~\ref{lemma-split-ideal-str} is evident because root-sets of $I_j$ partition the root-set of $I$. 

\begin{corollary}[Splitting split ideals]
\label{cor-split-ideal-str}
Let $I=\langle h_0(\bar{x}_0), \ldots, h_{l}(\bar{x}_l) \rangle$ be a split ideal of $\F_p[x_0,\ldots,x_l]$. 
Let some $h_i(\bar{x}_i)$ factor as $h_i(\bar{x}_i)= h_{i,1}(\bar{x}_i)\ldots h_{i,m}(\bar{x}_i)$. 

Then, $I=\bigcap_{j=1}^{m} I_j$, 
where each $I_j :=\langle h_0(\bar{x}_0), \ldots, h_{i-1}(\bar{x}_{i-1}), h_{i,j}(\bar{x}_{i}), h_{i+1}(\bar{x}_{i+1}), \ldots, h_{l}(\bar{x}_{l}) \rangle$ is a split ideal. 
\end{corollary}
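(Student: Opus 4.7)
The plan is to verify both claims of the corollary by leveraging Lemma~\ref{lemma-split-ideal-str}. The central structural observation I will rely on is that the split-ideal axioms force $h_i(\bar a_{i-1}, x_i) \in \F_p[x_i]$ to split into $\deg_{x_i}(h_i)$ distinct linear factors over $\F_p$, for every $\bar a_{i-1} \in \calZ_{\F_p}(\langle h_0(\bar x_0),\ldots,h_{i-1}(\bar x_{i-1})\rangle)$. Indeed, any univariate polynomial has at most as many roots as its degree, so the cardinality equality $|\calZ_{\F_p}(I)| = \prod_{i'} \deg_{x_{i'}}(h_{i'})$ of Definition~\ref{def-split-ideal}(2) can only be achieved if this bound is saturated at every level.

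My first task is to verify that each $I_j$ is itself a split ideal. Triangularity and monicity in $x_i$ are immediate once we normalise the factorisation $h_i = h_{i,1}\cdots h_{i,m}$ into monic factors, which we can do since $h_i$ is monic in $x_i$. Condition~(3) of Definition~\ref{def-split-ideal} is inherited for free from $I$, because $\calZ_{\F_p}(I_j) \subseteq \calZ_{\F_p}(I)$. For condition~(2), the observation above tells me that for any admissible prefix $\bar a_{i-1}$ the factors $h_{i,j'}(\bar a_{i-1}, x_i)$ are pairwise coprime and each $h_{i,j}(\bar a_{i-1}, x_i)$ splits into exactly $\deg_{x_i}(h_{i,j})$ distinct roots in $\F_p$; combined with the fact that the later generators $h_{i'}$ (for $i' > i$) still split correctly over each admissible extension (since this is already known for $I$), a layer-by-layer product count will yield $|\calZ_{\F_p}(I_j)| = \deg_{x_i}(h_{i,j}) \prod_{i' \neq i} \deg_{x_{i'}}(h_{i'})$, as required.

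The second task, $I = \bigcap_j I_j$, then reduces to checking that the root-sets of the $I_j$ partition $\calZ_{\F_p}(I)$. They are pairwise disjoint by coprimality of $h_{i,j}(\bar a_{i-1}, x_i)$ across $j$, and they cover $\calZ_{\F_p}(I)$ because $h_i = \prod_j h_{i,j}$ forces any $\bar a \in \calZ_{\F_p}(I)$ to satisfy $h_{i,j}(\bar a_i) = 0$ for some $j$. Applying Lemma~\ref{lemma-split-ideal-str} to both $I$ and each $I_j$ and intersecting will then give
\[ \bigcap_{j=1}^m I_j \;=\; \bigcap_{j=1}^m \bigcap_{\bar a \in \calZ_{\F_p}(I_j)} I_{\bar a} \;=\; \bigcap_{\bar a \in \calZ_{\F_p}(I)} I_{\bar a} \;=\; I. \]

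The only delicate point I expect to encounter is the layered cardinality count for $I_j$: I must confirm that the distinct-roots behaviour at level $i$ interacts correctly with the analogous behaviour at later levels $i' > i$, and that nothing breaks when a prefix is extended through the modified generator $h_{i,j}$ rather than $h_i$. This is precisely where the triangular structure of split ideals and the saturation argument from the first paragraph do the real work; once this layered count is in hand, everything else reduces to the Chinese-remainder decomposition supplied by Lemma~\ref{lemma-split-ideal-str}.
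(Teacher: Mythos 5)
Your proposal is correct and follows essentially the same route as the paper: both rest on Lemma~\ref{lemma-split-ideal-str} (the Chinese-remainder decomposition of a split ideal into the maximal ideals of its zeros) plus the observation that the root-sets of the $I_j$ partition $\calZ_{\F_p}(I)$. The paper declares the corollary ``evident'' and omits the verification that each $I_j$ is itself a split ideal; you fill in that gap with the saturation argument (every admissible prefix extends to exactly $\deg_{x_{i}}(h_{i})$ roots, so $h_i(\bar a_{i-1}, x_i)$ splits into distinct linear factors), which is implicit in the paper's Lemma~\ref{lemma-split-ideal-is-recursive} and its footnote.
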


We call a split ideal $I_l:= \langle h_0,\ldots,h_{l}\rangle$ to be \emph{maximal split ideal} if,  
\\\noindent
1) for any $\bar{a}=(a_0,\ldots,a_l) \in \calZ_{\F_p}(I_l)$, $g(x):= f(a_0+pa_1+\ldots+p^la_{l}+p^{l+1} x)$ vanishes identically mod $p^k$,
\\\noindent
2) the restriction $I_{l-1}:=\langle h_0,\ldots,h_{l-1}\rangle$ does not follow the previous condition.



\begin{lemma}[Roots represented by a root of maximal split ideal]
\label{lemma-MSI-root-represent}
Let $I$ be a maximal split ideal of length $l+1$, then a zero $\bar{a}=(a_0,\ldots,a_l) \in \calZ_{\F_p}(I)$ maps to exactly $p^{k-l-1}$ zeros of $f$ in $\calZ_R(f)$. 
We will say that these $p^{k-l-1}$ roots of $f$ are \emph{represented} by $\bar{a}$. 
\end{lemma}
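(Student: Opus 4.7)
The plan is to unpack condition (1) of the maximal split ideal definition and then count. By definition, for every $\bar{a}=(a_0,\ldots,a_l)\in\calZ_{\F_p}(I)$, the polynomial $g(x):=f(a_0+pa_1+\cdots+p^l a_l+p^{l+1}x)$ vanishes identically mod $p^k$; I would interpret this as the assertion that $g$ is literally the zero polynomial in $R[x]$, i.e.\ each coefficient is $\equiv 0 \bmod p^k$. Granting this, the rest of the argument is a routine counting.

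First I would enumerate the elements $r\in R$ whose first $l+1$ base-$p$ digits equal $(a_0,\ldots,a_l)$: they are exactly the elements of the form $r = a_0+pa_1+\cdots+p^l a_l + p^{l+1} b$ with $b$ ranging over a chosen set of $p$-adic representatives of $\Z/p^{k-l-1}\Z$ inside $R$. There are exactly $p^{k-l-1}$ such $r$, and they are pairwise distinct in $R$ (since they are already distinct modulo $p^k$ by the uniqueness of the base-$p$ expansion). Substituting, $f(r) = g(b)$ in $R$; and since $g$ is the zero polynomial of $R[x]$, we get $g(b)=0$ in $R$ for every such $b$. Hence each of the $p^{k-l-1}$ candidate lifts is a genuine root of $f$ in $R$.

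For the tightness of the count I would observe the converse: any $r\in\calZ_R(f)$ whose reduction modulo $p^{l+1}$ equals $a_0+pa_1+\cdots+p^l a_l$ must be one of the $p^{k-l-1}$ elements just listed, so the upper bound matches the lower bound. Disjointness across different $\bar{a}\in\calZ_{\F_p}(I)$ is automatic, since two roots produced from distinct zeros of $I$ already differ modulo $p^{l+1}$. The only conceptual step is the first one (reading ``vanishes identically'' as vanishing of all coefficients in $R[x]$); condition (2) of maximality plays no role here, as it is used elsewhere only to justify stopping the recursion at level $l$, not to count the roots carried by a given $\bar{a}$.
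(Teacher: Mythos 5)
Your proposal is correct and takes essentially the same route as the paper: both unpack condition (1) of the maximal split ideal definition (that $g(x):=f(a_0+\cdots+p^l a_l+p^{l+1}x)$ is the zero polynomial mod $p^k$) and then count the $p^{k-l-1}$ lifts $r=a_0+\cdots+p^l a_l+p^{l+1}b$ obtained by letting $b$ range over representatives of $\Z/p^{k-l-1}\Z$, observing each is a distinct root of $f$ in $R$. Your write-up is somewhat more explicit than the paper's (making the tightness/converse direction and disjointness precise, and noting that condition (2) of maximality is not used here), but the content is the same.
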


%
%



\section{Proof of Theorem~\ref{thm1}} \label{sec-main1}
\vspace{-1mm}

The algorithm to compute a compact data-structure which stores roots of $f \bmod p^k$ will be described in Section~\ref{sec-algo-thm1}.
Algorithm's correctness will be proved in Section~\ref{sec-pf-algo1}, which involves studying the algebraic structure underlying the algorithm. Its efficiency will be shown in Section~\ref{sec-pf-thm1}, by devising an auxiliary structure called roots-tree and the important notion of `degree of a node'.

%

\subsection{Algorithm to implicitly partition the root-set of $f(x)\bmod p^k$}
\label{sec-algo-thm1}
\vspace{-1mm}


We describe our algorithm in this section. It takes a monic univariate polynomial $f(x) \in \mathbb{Z}[x]$ of degree $d$ and a prime-power $p^k$ as input (in binary), 
and outputs a list of at most $d$ maximal split ideals whose roots partition the root-set of $f$ modulo $p^k$. 

A maximal split ideal $I_j=:\langle h_0(\bar{x}_0),\ldots,h_l(\bar{x}_l)\rangle$ has $|\calZ_{\F_p}(I_j)|=\prod_{i=0}^{l} \deg_{x_i}(h_i)$ zeros, 
and each such zero `represents' $p^{k-l-1}$ actual zeros of $f\bmod p^k$ (Lemma \ref{lemma-MSI-root-represent}). Thus, this algorithm gives an exact count on the number of zeros of $f$ in $R$.




%
%
%

\textbf{Overview of Algorithm~\ref{algo1}: }
Since any root of $f$ mod $p^k$ is an extension of a root modulo $p$, the algorithm starts by initializing a stack $S$ with the ideal $I:= \langle h_0(x_0)\rangle$, where $h_0(x_0) := \gcd(x_0^p-x_0, f(x_0))$. This is a split ideal containing all the roots of $f$ mod $p$. By a {\em lift} $\hat{I}\subset R[x_0]$ of $I$, we mean the ideal generated by the generator $\{h_0\}$ when viewed as a polynomial in $R[x_0]$ (i.e.~char $p^k$).

At every intermediate iteration (Steps $4-21$), we {\em pop} a split ideal from the stack and \emph{try} to increase the precision of its root-set (equivalently, lengthen the split ideal).
This step mostly results in two cases: either we succeed and get a split ideal whose root-set has increased precision (Step $18$) by a new placeholder $x_{l+1}$, or the split ideal factors into more split ideals increasing the size of the stack $S$ (Steps $10,14,20$). We update the relevant `part of $f$' to $f_I(\bar{x}_l, x_{l+1}+p x) \bmod \hat{J}$ ($J$ is the new split ideal) that we carry around with each split ideal. This helps in efficiently increasing the precision of roots in the next iteration. Otherwise, computing $f\left(x_0+px_1+\cdots+ p^l x_l+ p^{l+1} x\right)/p^\alpha \bmod I$ is too expensive, in Step $6$, due to the underlying degree-$d$ $(l+1)$-variate monomials blowup.

If we reach a maximal split ideal (Step $7$), it is moved to a list $\mathcal{L}$. Sometimes the split ideal cannot be extended and we get a {\em dead-end} (Step $16$). 
The size of the stack decreases when we get a maximal split ideal or a dead-end. The algorithm terminates when stack becomes empty.  
List $\mathcal{L}$ contains maximal split ideals which partition, and cover, the root-set of $f$ (implicitly). This becomes our output.

The main intuition behind our algorithm: If two roots of a split ideal (representing potential roots of $f$) give rise to different number of roots of $f$, the split ideal will get factored further.
Though not at all apparent immediately, we will show that the algorithm takes only polynomial number of steps (Section~\ref{sec-pf-thm1}).
	
%
%
%
%

We will use four subroutines to perform standard ring arithmetic modulo split ideals; they are described in the Appendices \ref{appen-reduce-valuation} \& \ref{appen-GCD-zerodiv}. 
\begin{enumerate}
\item Modify $f$ (Steps 3, 18, 20) whenever pushing  in the stack (Lemma~\ref{lemma-stack-update} \&  \ref{lemma-split-reduce}).
\item \Call{Reduce}{$a(\bar{x}_l), J_{l}$} gives the reduced form of $a$ mod triangular ideal $J_l$ (over a Galois ring).
\item \Call{Test-Zero-Div}{$a(\bar{x}_l), I_l$} either reports that $a$ is a not a zero-divisor modulo triangular ideal $I_l$ or outputs a non-trivial factorization of one of the generators of $I_l$ when true.
\item \Call{GCD}{$a(\bar{x}_l,x), b(\bar{x}_{l},x), I_{l}$} either successfully computes a monic gcd, wrt $x$, of two multivariates modulo a triangular ideal $I_{l}$, or encounters a zerodivisor in intermediate computation (outputting $False$ and a non-trivial factorization of one of the generators of $I_{l}$). 
\end{enumerate}


\begin{breakablealgorithm}
 \caption{Root-counting mod $p^k$}
 \label{algo1}
\begin{algorithmic}[1]
\parState{Let $\mathcal{L}=\{ \}$ be a list and $S=\{\}$ be a stack (both initially empty).}
\parState{Let $\tilde{f}(x_0) :=f(x_0) \bmod p$ for a monic univariate $\tilde{f}\in \mathbb{F}_p[x_0]$ of degree $d$.}
\parState{[\textbf{Initializing the stack S}] Let $h_0(x_0) := \gcd(\tilde{f}(x_0),x_0^p-x_0)$, $I:=\langle h_0 \rangle$, $\hat{I} \subseteq R[x_0]$ be a lift of $I$. Compute $f_I(x_0,x):= f(x_0+ p x) \bmod \hat{I}$ using Lemma \ref{lemma-stack-update}. Update $S \leftarrow push( (\{h_0\}, f_I) )$.}
\While{$S$ is not empty}
\parState{$S_{top}\leftarrow pop(S)$. Let $S_{top}=(\{h_0(x_0),\ldots,h_l(x_0,\ldots,x_l) \}, f_{I}(\bar{x}_l,x))$ where $I=\langle h_0,\ldots,h_l \rangle \subseteq \F_p[\bar{x}_l]$ is a split ideal.
Let $\hat{I} \subseteq R[x_0,\ldots,x_l]$ be a lift of $I$.}
\parState{[\textbf{Valuation computation}] Compute $\alpha\in\N$ and $g \in R[\bar{x}_l,x]$ such that $f_{I} \equiv p^{\alpha} g(\bar{x}_l,x) \bmod \hat{I}$ and $p\not| g \bmod \hat{I}$.}

\parState{[\textbf{Maximal split ideal found}] \textbf{if}($\alpha \geq k$) \textbf{then} update List $\mathcal{L}\leftarrow \mathcal{L} \cup \{ I\}$. Go to Step $4$.}
\parState{Let $\tilde{g}:= g(\bar{x}_l,x)\bmod I$ be the polynomial in $\mathbb{F}_p[\bar{x}_{l},x]$, and let $g_1(\bar{x}_l)$ be the leading coefficient of $\tilde{g}(\bar{x}_l,x)$ wrt $x$.}
\If{\Call{Test-Zero-Div}{$g_1(\bar{x}_l)$, $I$}$=True$}
\parState{$\Call{Test-Zero-Div}{g_1(\bar{x}_l), I}$ returns a factorization $h_i(\bar{x}_i)=: h_{i,1}(\bar{x}_i)h_{i,2}(\bar{x}_i)\ldots h_{i,m}(\bar{x}_i)$ $ \bmod I_{i-1}$ of some generator $h_i(\bar{x}_i)$ of $I$. Go to Step $20$.}
\EndIf

[\textbf{Filter out distinct virtual $\F_p$-roots by taking gcd with $x^{p}-x$}]
\parState{
Recompute $\tilde{g}:= g(\bar{x}_l,x)\cdot g_1(\bar{x}_l)^{-1}\bmod I$ (Lemmas \ref{lem-div-mod-I}, \ref{lemma-reduction}). Compute $x^{p}$ by repeatedly squaring and reducing modulo the triangular ideal $I+\langle \tilde{g}\rangle$ (Algorithm \ref{algo-reduction} and Lemma \ref{lemma-reduction}). This yields $\tilde{h}_{l+1}(\bar{x}_l,x):=x^{p}-x \bmod I$ in a reduced form.}
%
\If{\Call{GCD}{$\tilde{g}$, $\tilde{h}_{l+1}$, $I$} $=False$}
\parState{The call \Call{GCD}{$\tilde{g}$, $\tilde{h}_{l+1}$, $I$} returns factorization $h_i(\bar{x}_i)=h_{i,1}(\bar{x}_i)h_{i,2}(\bar{x}_i)\ldots h_{i,m}(\bar{x}_i) \bmod I_{i-1}$ of a generator $h_i(\bar{x}_i)$ of $I$. Go to Step  $20$.}

\ElsIf{$\tilde{g}$ and $\tilde{h}_{l+1}$ are coprime}
\parState{[\textbf{Dead End}] The ideal $I$ cannot grow more, go to Step $4$.}
\Else
\parState{[\textbf{Grow the split ideal $I$}] Here $\gcd_x(\tilde{g}, \tilde{h}_{l+1}) \bmod I$ is non-trivial, say $h_{l+1}(\bar{x}_l,x)$ (monic wrt $x$). Substitute $x$ by $x_{l+1}$ in $h_{l+1}(\bar{x}_l,x)$ and update $J\leftarrow I+\langle h_{l+1}(\bar{x}_{l+1})\rangle$.

Let $\hat{J} \subseteq R[x_0,\ldots,x_{l+1}]$ be a lift of $J$.
 Substitute $x$ by $x_{l+1}+p x$ in $f_I(\bar{x}_l,x)$, and compute $f_J(\bar{x}_{l+1},x):= f_I(\bar{x}_l, x_{l+1}+p x) \bmod \hat{J}$ using Lemma \ref{lemma-stack-update}. Update $S\leftarrow push((\{h_0,\ldots,h_{l+1}\}, f_J))$, and go to Step $4$.}
\EndIf

\parState{[\textbf{Factoring split ideals}] We have a factorization $h_i(\bar{x}_i)= h_{i,1}(\bar{x}_i)h_{i,2}(\bar{x}_i)\ldots h_{i,m}(\bar{x}_i) \bmod I_{i-1}$ of a generator $h_i$ of $I$. 
Push $S_{top}$ back in stack $S$. For every entry $(U,f_{\langle U\rangle}) \in S$, where $h_i(\bar{x}_i)$ appears in $U$, find $m$ (smaller) split ideals $U_j$ (using Corollary \ref{cor-split-ideal-str}); using Lemma \ref{lemma-split-reduce} compute $f_{\langle U_j\rangle}$ and push $(U_j, f_{\langle U_j\rangle})$ in $S$, for $j\in[m]$. 
}


\EndWhile
\parState{Return $\mathcal{L}$ (the list of maximal split ideals partitioning the root-set $\calZ_R(f)$).}
\end{algorithmic}
\end{breakablealgorithm}

\subsection{Correctness of Algorithm~\ref{algo1}}
\label{sec-pf-algo1}
\vspace{-1mm}


Our main goal is to prove the following result about partitioning of root-set.

\begin{theorem}[Algo \ref{algo1} partitions $\calZ_{R}(f)$]
\label{thm-proof-algo1}
Algorithm \ref{algo1} yields the structure of the root-set $\calZ_{R}(f)$ through a list data structure $\mathcal{L}$ (a collection of maximal split ideals $I_1,\ldots,I_n$) which partitions the zeroset $\calZ_{R}(f) =: \bigsqcup_{j \in [n]}S_j $,
where $S_j$ is the set of roots of $f\bmod p^k$ represented by $\calZ_{\F_p}(I_j)$.
\end{theorem}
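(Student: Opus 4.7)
The plan is to prove Theorem~\ref{thm-proof-algo1} by establishing a loop invariant that tracks how the contents of the stack $S$ and list $\mathcal{L}$ cover $\calZ_R(f)$. For any split ideal $I$ of length $l+1$, define $T(I) := \{r \in \calZ_R(f) : (r_0,\ldots,r_l) \in \calZ_{\F_p}(I)\}$, where $r = \sum_{i=0}^{k-1} p^i r_i$ is the $p$-adic expansion. I will maintain the invariant that the family $\{T(I)\}_{I \in S \cup \mathcal{L}}$ is, at all times, a disjoint partition of $\calZ_R(f)$. Upon termination, $S = \emptyset$ and every $I \in \mathcal{L}$ is maximal, so Lemma~\ref{lemma-MSI-root-represent} gives $|T(I)| = |\calZ_{\F_p}(I)|\cdot p^{k-l-1}$ for each $I$ of length $l+1$, and the theorem statement follows.

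To establish the invariant I will check each code path that mutates $S \cup \mathcal{L}$. Initialization (Step~3) sets $S = \{\langle h_0\rangle\}$ with $h_0 = \gcd(f \bmod p,\, x_0^p - x_0)$; since every $r \in \calZ_R(f)$ reduces to $r_0 \in \F_p$ with $f(r_0) \equiv 0 \bmod p$ and $r_0^p = r_0$, we have $T(\langle h_0\rangle) = \calZ_R(f)$, so the invariant holds trivially. The maximal-ideal branch (Step~7) simply moves $I$ from $S$ to $\mathcal{L}$ unchanged. The splitting branches (Steps~10, 14, and the re-factoring in Step~20) produce a nontrivial factorization $h_i = \prod_j h_{i,j} \bmod I_{i-1}$ of some generator; Corollary~\ref{cor-split-ideal-str} then decomposes $I$ as $\bigcap_j I_j$ with $\calZ_{\F_p}(I) = \bigsqcup_j \calZ_{\F_p}(I_j)$, so $T(I) = \bigsqcup_j T(I_j)$ and replacing $I$ by the $I_j$'s preserves the invariant, with Lemma~\ref{lemma-split-reduce} ensuring that the tagged-along $f_{\langle U\rangle}$ of every other stack entry containing $h_i$ is updated consistently.

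The interesting branches are the extension steps. In both of them, any $r \in T(I)$ can be written as $r_0 + \cdots + p^l r_l + p^{l+1} r'$ with $r' = r_{l+1} + p r_{l+2} + \cdots$; the identity $f_I \equiv p^\alpha g \bmod \hat I$ together with $f(r) \equiv 0 \bmod p^k$ forces $g(r_0,\ldots,r_l, r') \equiv 0 \bmod p^{k-\alpha}$, and since $\alpha < k$ on this branch, reducing mod $p$ gives $\tilde g(r_0,\ldots,r_l, r_{l+1}) \equiv 0 \bmod p$, while also $r_{l+1}^p - r_{l+1} = 0$ in $\F_p$. Thus $r_{l+1}$ is a common root of $\tilde g(r_0,\ldots,r_l, x)$ and $x^p - x$. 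At a dead end (Step~16), $\gcd(\tilde g, \tilde h_{l+1})$ is trivial modulo $I$; using Lemma~\ref{lemma-split-ideal-str} and CRT, $\F_p[\bar x_l]/I \cong \prod_{\bar a \in \calZ_{\F_p}(I)} \F_p$, so coprimality modulo $I$ specializes to coprimality at every $\bar a$, contradicting the existence of $r_{l+1}$; hence $T(I) = \emptyset$ and dropping $I$ is safe. In the growth branch (Step~18), $h_{l+1}$ is exactly that gcd, and the same specialization argument shows that $r_{l+1}$ is a root of $h_{l+1}(r_0,\ldots,r_l, x_{l+1})$, giving $T(I) = T(J)$ with $J := I + \langle h_{l+1}\rangle$.

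The main obstacle is the dead-end/growth argument just sketched: it rests on the principle that coprimality of two multivariate polynomials modulo a split ideal $I$ is equivalent to pointwise coprimality after evaluating at every $\bar a \in \calZ_{\F_p}(I)$, which is exactly what Lemma~\ref{lemma-split-ideal-str} delivers through its Chinese-remainder decomposition. A secondary bookkeeping point is verifying that the tag $f_J$ constructed in Step~18 via Lemma~\ref{lemma-stack-update} really coincides with $f$ shifted to the new precision, so that the valuation $\alpha$ extracted in later iterations matches the definition of a maximal split ideal. Termination of the while loop is not part of this theorem; it is deferred to Section~\ref{sec-pf-thm1}, where the roots-tree bounds the number of iterations by $\poly(d, k\log p)$.
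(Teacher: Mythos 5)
Your proposal is correct and follows essentially the same route as the paper, just repackaged: the paper proves disjointness via the prefix-freeness invariant of Lemma~\ref{lemma-content-stack} and coverage via repeated application of Lemma~\ref{lemma-root-extension}, whereas you fold both into a single loop invariant that $\{T(I)\}_{I\in S\cup\mathcal{L}}$ partitions $\calZ_R(f)$. The case analysis you give per branch (growth preserves $T(I)$, dead-ends have $T(I)=\emptyset$ via CRT/coprimality, factoring refines the partition via Corollary~\ref{cor-split-ideal-str}) is exactly the content of Lemma~\ref{lemma-root-extension}'s three-way case split, and your appeal to $f(\sum_{i\le l} x_ip^i + xp^{l+1})\equiv f_I\bmod\hat I$ matches the invariant the paper establishes inside the proof of Lemma~\ref{lemma-content-stack}. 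One minor correction: termination is not deferred to Section~\ref{sec-pf-thm1} in the paper; it is established as Lemma~\ref{lemma-algo1-terminates} within the correctness section and is needed here (since the invariant only gives the conclusion once $S=\emptyset$), so you should cite that lemma rather than the efficiency analysis. You also implicitly rely on the stack ideals being genuine split ideals (needed for Corollary~\ref{cor-split-ideal-str}, Lemma~\ref{lemma-actual-gcd}, and the CRT decomposition); that is invariant~2 of Lemma~\ref{lemma-content-stack} and deserves an explicit mention in your invariant.
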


Later, we will show a surprising property: $n \leq d$ (Section \ref{sec-pf-thm1}).

%

\begin{proof}[Proof of Theorem \ref{thm-proof-algo1}]
From Lemmas~\ref{lemma-content-stack}, \ref{lemma-algo1-terminates} and the definition of maximal split ideal, it is clear that Algorithm \ref{algo1} returns a list $\mathcal{L}$ containing maximal split ideals $I_1,\ldots,I_n$, for $n\in \N$.
Further, we show:
\\\noindent
1) The root-set of $I_j$ ($1\leq j \leq n$) yields a subset $S_j$ of $\calZ_{R}(f)$, and they are pairwise disjoint.
\\\noindent
2) Given a root $r\in \calZ_{R}(f)$, there exists $j$ such that $r$ is represented by a root in $\calZ_{\F_p}(I_j)$. 

\smallskip
For the first part, root-sets for different maximal split ideals $I_j$ are pairwise disjoint because of Lemma~\ref{lemma-content-stack}. Each of these root-set yields a subset of the zeroset of $f \bmod p^k$ (follows from the definition of maximal split ideal).

For the second part, let $r=: \sum_{i=0}^{k-1} r_i p^i$ be a root in $\calZ_{R}(f)$. Stack $S$ was initialized by the split ideal $\langle h_0 := \gcd(f(x_0)\bmod p, x_{0}^{p}-x_0)\rangle$; so $r_0 \in \calZ_{\F_p}(I_0)$, as $f(r_0) \equiv f(r)\equiv0 \bmod p$. 

Assume that $I_0$ is not a maximal split ideal (otherwise we are done).
Applying Lemma~\ref{lemma-root-extension}, there must exist an $I_1$ whose root-set contains $(r_0,r_1)$. 
Repeated applications of Lemma~\ref{lemma-root-extension} show that we will keep getting split ideals of larger lengths, partially representing $r$; finally, reaching a maximal split ideal (say $I_j$) fully representing $r$. 


We showed that each root $r$ of $f\bmod p^k$ is represented by a {\em unique} maximal split ideal $I$, given by Algorithm~\ref{algo1}, and they collectively represent exactly the roots of $f$ modulo $p^k$. 
Hence, root-sets of ideals in $\mathcal{L}$ partition the zeroset $\calZ_{R}(f)$. 
\end{proof}

Now, let us see the properties of our algorithm which go in proving Theorem~\ref{thm-proof-algo1}.
Given a polynomial $g(\bar{x}_l) \in \F_p[\bar{x}_l]$ and an element $\bar{a} \in \F_p^l$, consider the {\em projection} $g_{\bar{a}}(x_l) := g(\bar{a},x_l)$. 
Using Chinese remainder theorem (Lemma~\ref{lemma-split-ideal-str}) we easily get the following degree condition. (Here, $\text{lc}_{x}$ refers to the leading coefficient wrt variable $x$.)

\begin{claim}
\label{claim-monic-degree}
Let $I$ be a split ideal of $\F_p[\bar{x}_{l-1}]$ and $g \in \F_p[\bar{x}_l]$. Then, $\text{lc}_{x_l}(g)$ is unit mod $I$ ~iff
$ \forall \bar{a} \in \calZ_{\F_p}(I), \ \deg(g_{\bar{a}}(x_l)) \ =\  \deg_{x_l}(g(\bar{x}_l) \bmod I) .$
\end{claim}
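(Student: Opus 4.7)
The plan is to reduce the claim to a coordinate-wise computation via the Chinese remainder decomposition supplied by Lemma~\ref{lemma-split-ideal-str}. That lemma gives $\F_p[\bar{x}_{l-1}]/I \cong \bigoplus_{\bar{a} \in \calZ_{\F_p}(I)} \F_p$, where the isomorphism sends a residue $[c(\bar{x}_{l-1})]$ to the tuple of evaluations $(c(\bar{a}))_{\bar{a}}$. Extending scalars to $x_l$ yields $(\F_p[\bar{x}_{l-1}]/I)[x_l] \cong \bigoplus_{\bar{a}} \F_p[x_l]$, under which $g \bmod I$ is identified with the tuple of specializations $(g_{\bar{a}}(x_l))_{\bar{a}}$.

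With that in hand, I would set $D := \deg_{x_l}(g \bmod I)$ and $c(\bar{x}_{l-1}) := \text{lc}_{x_l}(g \bmod I)$, so that modulo $I$ we have $g \equiv c\,x_l^D + (\text{lower powers of } x_l)$ with $c \not\equiv 0 \bmod I$. Transporting this expression through the isomorphism above, $c$ corresponds to the tuple $(c(\bar{a}))_{\bar{a}}$, and the trivial bound $\deg(g_{\bar{a}}) \le D$ becomes an equality precisely when $c(\bar{a}) \neq 0$. Thus the condition ``$\deg(g_{\bar{a}}(x_l)) = \deg_{x_l}(g(\bar{x}_l) \bmod I)$ for every $\bar{a} \in \calZ_{\F_p}(I)$'' is equivalent to ``$c(\bar{a}) \neq 0$ for every $\bar{a}$''.

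To close the equivalence, I would invoke the standard fact that in the finite product of fields $\bigoplus_{\bar{a}} \F_p$ an element is a unit if and only if every coordinate is non-zero. Hence $c$ is a unit in $\F_p[\bar{x}_{l-1}]/I$ iff all evaluations $c(\bar{a})$ are non-zero, which by the previous paragraph is exactly the asserted degree condition. Both directions follow immediately.

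No serious obstacle is anticipated: the claim is essentially a bookkeeping consequence of the CRT structure guaranteed by the split-ideal definition together with the characterization of units in a product of copies of $\F_p$. The only care needed is to verify that $c$ really is a well-defined leading coefficient in the quotient, which is ensured by writing $g$ in the reduced form given by first reducing each $x_l$-coefficient modulo $I$ before reading off the leading term.
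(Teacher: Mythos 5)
Your proof is correct and takes essentially the same route the paper intends: the paper gives no argument beyond citing the Chinese remainder decomposition of Lemma~\ref{lemma-split-ideal-str}, and you simply fill in the details, transporting $g$ through the isomorphism $(\F_p[\bar{x}_{l-1}]/I)[x_l]\cong\bigoplus_{\bar{a}}\F_p[x_l]$ and using that a tuple in a product of fields is a unit iff every coordinate is nonzero. Your closing caveat about reading $\text{lc}_{x_l}$ off the reduced form of $g$ is well taken (and necessary: for $I=\langle x_0\rangle$ and $g=x_0x_1^2+x_1$ the unreduced leading coefficient vanishes mod $I$ while the degree condition holds), and matches how the claim is actually invoked in the algorithm, where $\tilde g$ is already reduced mod $I$.
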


Chinese remaindering also gives us a gcd property under projections.

\begin{lemma}
\label{lemma-actual-gcd}
Let $w(\bar{x}_l),z(\bar{x}_l)\in \F_p[\bar{x}_l]$ and $I_{l-1} \subseteq \F_p[\bar{x}_{l-1}]$ be a split ideal. Suppose Algorithm \ref{algo-gcd} succeeds in computing gcd of $w$ and $z$ mod $I_{l-1}$:
define $h(\bar{x}_l):=\Call{GCD}{w(\bar{x}_l),z(\bar{x}_l),I_{l-1}}$. 
	Then, for all $\bar{a} \in \mathcal{Z}_{\F_p}(I_{l-1})$: 
	$h_{\bar{a}}(x_l)$ equals $\gcd(w_{\bar{a}}(x_l), z_{\bar{a}}(x_l))$ up to a unit multiple (in $\F^*_p$).
\end{lemma}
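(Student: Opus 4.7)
The plan is to invoke the Chinese Remainder decomposition of the split ideal $I_{l-1}$ (Lemma~\ref{lemma-split-ideal-str}) and then track the Euclidean-like GCD algorithm coordinate by coordinate through this decomposition. Concretely, evaluation at each $\bar{a} \in \calZ_{\F_p}(I_{l-1})$ yields a ring isomorphism
\[
\F_p[\bar{x}_{l-1}]/I_{l-1} \;\cong\; \bigoplus_{\bar{a} \in \calZ_{\F_p}(I_{l-1})} \F_p,
\]
and adjoining the free variable $x_l$ gives
\[
\F_p[\bar{x}_l]/I_{l-1} \;\cong\; \bigoplus_{\bar{a}} \F_p[x_l],
\]
under which any $g(\bar{x}_l)$ projects to $g_{\bar{a}}(x_l)$ in the $\bar{a}$-coordinate. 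I would work systematically through this isomorphism.

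Next, I would observe that Algorithm~\ref{algo-gcd} is an Euclidean-style routine whose only ``ring-dependent'' steps are: (i) identifying a leading coefficient $c(\bar{x}_{l-1})$ of the current remainder wrt $x_l$; (ii) verifying via a zerodivisor test that $c$ is a unit mod $I_{l-1}$ (else halting with a nontrivial factorization); and (iii) inverting $c$ to normalize and perform one polynomial-division step. Under the hypothesis that the routine succeeds, every leading coefficient encountered is a unit mod $I_{l-1}$. By Claim~\ref{claim-monic-degree}, this means the $x_l$-degree of each reduced polynomial in $\F_p[\bar{x}_l]/I_{l-1}$ agrees exactly with the $x_l$-degree of its $\bar{a}$-projection for \emph{every} $\bar{a} \in \calZ_{\F_p}(I_{l-1})$. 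Hence one step of polynomial division mod $I_{l-1}$ projects, in each CRT coordinate, to one ordinary polynomial-division step in $\F_p[x_l]$. An induction on the number of Euclidean iterations then shows that the successive remainders produced by the routine on $(w,z,I_{l-1})$ project, in coordinate $\bar{a}$, to the successive remainders of the classical Euclidean algorithm on $(w_{\bar{a}}, z_{\bar{a}})$, each rescaled by a unit of $\F_p^*$ coming from monic-normalization. The output $h$ therefore projects to $\gcd(w_{\bar{a}}, z_{\bar{a}})$ up to an $\F_p^*$-unit in every component, which is the claim.

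The delicate point---and what I expect to be the main obstacle---is ensuring that the algorithm's ``degree drops'' over $\F_p[\bar{x}_l]/I_{l-1}$ truly mirror the componentwise degree drops. If some intermediate leading coefficient were a zerodivisor mod $I_{l-1}$, its ambient $x_l$-degree could strictly exceed its projected $x_l$-degree in certain CRT components, and then a single reduction step mod $I_{l-1}$ would fail to correspond to a Euclidean step in every $\F_p[x_l]$-factor (some factors would be left untouched, breaking the inductive coupling). The success hypothesis on the GCD routine---precisely the condition that every such leading coefficient is a unit mod $I_{l-1}$---combined with Claim~\ref{claim-monic-degree}, rules this mismatch out and keeps the execution in lock-step with the CRT decomposition.
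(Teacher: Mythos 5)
Your proof is correct, but it takes a genuinely different route from the paper's. You track the execution of the Euclidean routine through the CRT decomposition of $I_{l-1}$ (Lemma~\ref{lemma-split-ideal-str}): you argue that, because every leading coefficient encountered passes the unit test, Claim~\ref{claim-monic-degree} forces the $x_l$-degree to be uniform across all CRT coordinates at every stage, so each ambient division step projects to one ordinary division step over $\F_p[x_l]$ in each component, and induction on iterations yields the claim. The paper instead ignores the execution entirely and works only from the \emph{output characterization} in Lemma~\ref{lemma-gcd}: the returned $h$ is monic, divides both $w$ and $z$ modulo $I_{l-1}$, and satisfies a B\'ezout relation $h = uw + vz \bmod I_{l-1}$. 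Projecting divisibility to a fixed $\bar{a}$ (using monicity to ensure $h_{\bar{a}} \neq 0$ with the right degree) gives $h_{\bar{a}} \mid \gcd(w_{\bar{a}}, z_{\bar{a}})$, and projecting the B\'ezout relation gives the reverse divisibility, hence equality up to a unit. The paper's argument is shorter and more modular, since it treats the algorithm as a black box certified by Lemma~\ref{lemma-gcd}; your argument is more hands-on and actually re-derives, in passing, the B\'ezout and divisibility properties that the paper simply cites. Both are valid; your ``lock-step'' observation about why the success hypothesis is essential is exactly the right reason the projection commutes with the algorithm, and the one place you could tighten the write-up is to make the inductive step on Euclidean iterations explicit (what invariant is carried, and how monic normalization introduces the $\F_p^*$ rescaling in each coordinate).
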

\begin{proof}
Lemma~\ref{lemma-gcd} proves, $h(\bar{x}_l)$ is a monic polynomial mod $I_{l-1}$, s.t., $h | w $ and $h | z $ (mod $I_{l-1}$). Fix $\bar{a} \in \mathcal{Z}_{\F_p}(I_{l-1})$.
Since $h_{\bar{a}}(x_l) \not\equiv 0 \bmod p$ ($\because h$ is monic), restricting $\bar{x}_{l-1}$ to $\bar{a}$ gives $h_{\bar{a}} | w_{\bar{a}}$ and $h_{\bar{a}} | z_{\bar{a}}$, showing $h_{\bar{a}} | \gcd(w_{\bar{a}}, z_{\bar{a}})$, in $\F_p[x_l]$. 

Lemma \ref{lemma-gcd} also shows that there exists $u, v \in (\F_p[\bar{x}_{l-1}]/I_{l-1})[x_l]$, such that, $h = u w + v z$. 
Restricting first $l$ co-ordinates to $\bar{a}$, we get $h_{\bar{a}} = u_{\bar{a}} w_{\bar{a}} + v_{\bar{a}} z_{\bar{a}}$. This equation implies $\gcd(w_{\bar{a}}, z_{\bar{a}})| h_{\bar{a}}$. Thus, we get an equality up to a unit multiple.
\end{proof}

Let $I\subseteq \F_p[\bar{x}_i], J \subseteq \F_p[\bar{x}_j]$ be two split ideals (say $i \leq j$). 
$I$ and $J$ are called \emph{prefix-free} iff
$ \nexists \ \bar{a} = (a_0, a_1, \ldots, a_i) \in \calZ_{\F_p}(I), \ \bar{b} = (b_0, b_1,\ldots, b_j) \in \calZ_{\F_p}(J): a_k = b_k \ \forall k \leq i$. 

(Note that it may still happen that $(a_0,\ldots, a_{i-1}) = (b_0,\ldots, b_{i-1})$ above.)

Our next lemma shows an invariant about Algorithm~\ref{algo1}.

\begin{lemma}[Stack contents]
\label{lemma-content-stack}
Stack $S$ in Algorithm~\ref{algo1} satisfies following conditions at every point:
\\\noindent 1) $l<k$ and in Step 6, $\alpha>l$. 
\\\noindent 2) All ideals in $S$ are split ideals.
\\\noindent 3) Any two ideals in $S$ are prefix-free.
\end{lemma}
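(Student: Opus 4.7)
The plan is to prove all three invariants simultaneously by induction on the number of stack operations performed by Algorithm~\ref{algo1}.

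For the base case, after the initialization in Step~3, $S$ contains only $(\{h_0\}, f_I)$ where $h_0 = \gcd(\tilde{f}, x_0^p - x_0)$ and $I = \langle h_0\rangle$ has length $1$. Since $h_0$ divides the squarefree Frobenius polynomial $x_0^p - x_0$, it factors into distinct linear factors over $\F_p$; combined with the fact that every factor $(x_0 - a_0)$ makes $f(a_0) \equiv 0 \bmod p$, the conditions of Definition~\ref{def-split-ideal} are met, so $I$ is a split ideal. Since $k \geq 1$, we have $l = 0 < k$, and condition~3 is trivial for a single ideal. For the $\alpha > 0$ check in Step~6 of the upcoming iteration, note that $f_I = f(x_0 + px) \bmod \hat{I}$ has constant term (wrt $x$) equal to $f(x_0) \bmod h_0(x_0)$ in $R[x_0]$, whose coefficients are all divisible by $p$ (since the reduction vanishes mod $p$); so $\alpha \geq 1 > 0 = l$.

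For the inductive step, I analyze each way the stack is modified. Popping via Step~7 (maximal split ideal) or Step~16 (dead-end) does not add any ideal, so all three invariants carry over trivially. In the \emph{growing} case (Step~18), we pop $(I, f_I)$ with length $l+1$ and push $(J, f_J)$ where $J = I + \langle h_{l+1}\rangle$ has length $l+2$. The inductive hypothesis gives $\alpha \geq l+1$, and since Step~7 was not triggered we have $\alpha < k$; hence $l+2 \leq \alpha + 1 \leq k$, preserving invariant~1. That $J$ is a split ideal follows from Claim~\ref{claim-monic-degree} and Lemma~\ref{lemma-actual-gcd}: since $h_{l+1}$ is monic wrt $x$ by the gcd construction, each projection $h_{l+1, \bar{a}}(x_{l+1})$ equals $\gcd(\tilde{g}_{\bar{a}}, x^p - x)$ up to a unit in $\F_p^*$, which splits into distinct linear factors over $\F_p$; the vanishing condition $f \equiv 0 \bmod p^{l+2}$ at every zero of $J$ then follows from $f_I \equiv p^{\alpha} g \bmod \hat{I}$ together with $\tilde{g}(\bar{a}, a_{l+1}) = 0$ and $\alpha + 1 \geq l + 2$. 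Prefix-freeness of $J$ with respect to every other ideal in $S$ reduces to prefix-freeness of $I$ via projecting zeros of $J$ onto their first $l+1$ coordinates. In the \emph{factoring} case (Step~20), a generator $h_i$ factors as $\prod_{j=1}^{m} h_{i,j}$ mod $I_{i-1}$; each ideal $U \in S$ containing $h_i$ is replaced by $m$ smaller split ideals $U_1, \ldots, U_m$ via Corollary~\ref{cor-split-ideal-str}, each of the same length as $U$. The newly added $U_j$'s are pairwise prefix-free because their root-sets partition $\calZ_{\F_p}(U)$ via the $i$-th coordinate, and prefix-freeness between $U_j$'s from different origins or between a $U_j$ and an untouched $V \in S$ is inherited, since every new root-set is a subset of an old one.

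The main obstacle I anticipate is rigorously establishing the $\alpha > l$ half of invariant~1 for a freshly pushed ideal, since $R[\bar{x}_l]/\hat{I}$ is neither a field nor a domain. I plan to deduce it via a CRT-style argument: lifting Lemma~\ref{lemma-split-ideal-str}'s decomposition $\F_p[\bar{x}_l]/I \cong \bigoplus_{\bar{a} \in \calZ_{\F_p}(I)} \F_p$ to the quotient $R[\bar{x}_l]/\hat{I}$ gives a direct-sum decomposition of the latter into local pieces indexed by $\bar{a}$. In each local component, the split ideal property of $I$ forces $f(x_0 + px_1 + \ldots + p^l x_l) \equiv 0 \bmod p^{l+1}$, so the constant term (wrt $x$) of $f_I$ is divisible by $p^{l+1}$ globally in $R[\bar{x}_l]/\hat{I}$. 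The $x$-dependent tail $p^{l+1} x \cdot f'(x_0 + \ldots) + O(p^{2(l+1)})$ is trivially $p^{l+1}$-divisible, so together this yields $p^{l+1} \mid f_I$ in $R[\bar{x}_l, x]/\hat{I}$, i.e., $\alpha \geq l+1 > l$ as required.
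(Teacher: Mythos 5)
Your proof is correct and follows the same overall plan as the paper's: simultaneous induction on stack operations, with the three cases (initialize, grow, factor) handled via Corollary~\ref{cor-split-ideal-str}, Lemma~\ref{lemma-actual-gcd}, and Claim~\ref{claim-monic-degree} in precisely the same way. The one place you genuinely deviate is in establishing $\alpha > l$. The paper's proof deduces $f\bigl(\sum_{i\le l} x_i p^i\bigr) \equiv 0 \bmod \hat{I}+\langle p^{l+1}\rangle$ by ``looking at the $f_I$ analogues pushed in Steps 3, 18, 20''---i.e.~it traces through each push and uses the previous iteration's bound $p^\alpha\mid f_I$ combined with $h_{l+1}\mid\tilde{g}$ to inductively show the constant term picks up an extra factor of $p$. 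You instead invoke a CRT decomposition of $R[\bar{x}_l]/\hat{I}$ into local pieces and read off $p^{l+1}$-divisibility of the constant term componentwise from Definition~\ref{def-split-ideal}(3). Both routes work; yours is conceptually cleaner (it uses the split-ideal property abstractly rather than by re-examining the code), while the paper's is more self-contained within the stack machinery. The only thing to watch in your version is that lifting the decomposition $\F_p[\bar{x}_l]/I\cong\bigoplus_{\bar{a}}\F_p$ to $R[\bar{x}_l]/\hat{I}\cong\bigoplus_{\bar{a}}R$ requires a brief Hensel-lifting argument (each $h_i(\hat{a}_0,\ldots,\hat{a}_{i-1},x_i)$ is squarefree mod $p$, so it factors uniquely over $R$); you should state this explicitly, since that step is load-bearing and not covered by Lemma~\ref{lemma-split-ideal-str} as written, which is stated only over $\F_p$.
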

\begin{proof}
We first prove the invariant $1$. Step $6$ defines $g$ via $f_I$ as, $f_{I} =: p^{\alpha} g(\bar{x}_l,x) \bmod \hat{I}$. Looking at the $f_{I}$ analogues pushed in Steps $3, 18, 20$, one easily deduces the invariants: 

\noindent $f\left(\sum_{0\le i\le l} x_i p^i +xp^{l+1} \right) \,\equiv\, f_{I}(\bar{x}_l,x)  \bmod \hat{I}$, and 

\noindent $f\left(\sum_{0\le i\le l} x_i p^i\right) \,\equiv\, 0 \bmod \hat{I}+\langle p^{l+1}\rangle$ . 

Thus, $f\left(\sum_{0\le i\le l} x_i p^i\right) \equiv p^{\alpha} g(\bar{x}_l,x) \equiv 0  \bmod \hat{I}+\langle p^{l+1}\rangle$. Since, $p\nmid g \bmod \hat{I}$, we deduce $\alpha>l$. Moreover, by Step 7 we know that $l<k$ throughout the algorithm. 

There are three ways in which a new ideal is added to stack $S$. We show below that the invariant is maintained in all three cases.

\smallskip (Step $3$) $S$ is initialized with the ideal $I=\langle h_0(x_0) \rangle \subseteq \F_p[x_0]$. The triangular ideal $I$ is a split ideal, 
because $|\calZ_{\F_p}(I)|=\deg_{x_0}(h_0)$ and its root are all the distinct roots of $f(x_0)\bmod p$.

\smallskip (Step $20$) Ideal $I_l$ is popped from $S$, and some generator $h_i$ of $I_l$ splits. In this case, we update $S$ with the corresponding factors of any $(U, f_{\langle U\rangle})\in S$, wherever currently $U$ has $h_i$. 
Corollary~\ref{cor-split-ideal-str} shows that the factors of $U$ are split ideals themselves, and their root-sets partition that of $U$. 
Thus, these root-sets are prefix-free among themselves. Moreover, they are prefix-free with any other ideal $J$ appearing in $S$, because $U$ was prefix-free with $J$.   

\smallskip (Step $18$) Ideal $I_l$ is popped, it grows to $I_{l+1}$ by including $h_{l+1}(\bar{x}_{l}, x) = \gcd_x(\tilde{g}(\bar{x}_{l}, x), x^p-x) \bmod I_l$ ($\tilde{g}$ is defined in Step 8). 
First (resp.~third) condition for $I_{l+1}$ being a split ideal follows from the definition of $\tilde{g}$ (resp.~$h_{l+1}$).

For the second condition for $I_{l+1}$ being a split ideal, fix a particular root $\bar{a} \in \calZ_{\F_p}(I_l)$. 
Using Lemma~\ref{lemma-actual-gcd}, the projection $h_{l+1,\bar{a}}(x)$ equals  $\gcd(\tilde{g}_{\bar{a}}(x), x^p-x)$ (up to a unit multiple). 
By Lemma~\ref{lemma-gcd}, $h_{l+1}$ is monic mod $I_l$; giving $\deg(h_{l+1,\bar{a}}) = \deg_{x_{l+1}}(h_{l+1})$. Since $h_{l+1}| x^p-x$, 
there are exactly $\deg_{x}(h_{l+1})$-many $a_{l+1}\in \F_p$, such that $h_{l+1,\bar{a}}(a_{l+1})\equiv 0\bmod p$.
So, every root $\bar{a} \in \calZ_{\F_p}(I_l)$ can be extended to $\deg_{x}(h_{l+1})$-many roots; giving $|\calZ_{\F_p}(I_{l+1})|= \deg_{x}(h_{l+1}) \cdot\prod_{i=0}^{l} \deg_{x_i}(h_i)$. This makes $I_{l+1}$ a split ideal.

$I_{l+1}$ remains prefix-free with any other ideal $J$ of $S$, because roots of $I_{l+1}$ are extension of roots of $I_l$ (recall: $I_l$ was prefix-free with $J$ and it was popped out of $S$). 

\smallskip
This proves all the invariants for the stack $S$.
\end{proof}

Using the invariant, we prove that Algorithm~\ref{algo1} terminates on any input.

\begin{lemma}
\label{lemma-algo1-terminates}
Algorithm~\ref{algo1} finishes in finite number of steps for any $f\in \Z[x]$ and a prime power $p^k$.
\end{lemma}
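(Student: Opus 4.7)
The plan is to exhibit a well-founded measure on the stack $S$ that strictly drops with every iteration of the while-loop in Algorithm~\ref{algo1}. By Lemma~\ref{lemma-content-stack}, every entry of $S$ is a split ideal whose length lies in $\{1,\ldots,k\}$. To each such split ideal $I$ of length $l+1$ I would attach the pair $\lambda(I):=(k-l-1,\ \deg(I))\in\N\times\N$ with the lexicographic order, and lift this to a measure on stacks by taking the multiset of $\lambda$-values ordered by the Dershowitz--Manna multiset extension (which is well-founded since the base order is).

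Next I would verify, case by case, that one iteration of the while-loop replaces a single multiset element by strictly $\lambda$-smaller ones. If the popped ideal is declared maximal (Step~7) or a dead-end (Step~16), it is simply removed from $S$, so the multiset strictly shrinks. If we grow (Step~18), $I$ of length $l+1$ becomes $I':=I+\langle h_{l+1}\rangle$ of length $l+2$, so the first coordinate of $\lambda$ drops; regardless of how $\deg(I')$ compares with $\deg(I)$ we obtain $\lambda(I')<\lambda(I)$ lexicographically. If we split (Step~20, reached either directly or via Steps~10 or~14), some generator $h_i$ of the popped ideal factors non-trivially, and Corollary~\ref{cor-split-ideal-str} decomposes every stack entry $U$ mentioning $h_i$ as $m\ge 2$ split ideals $U_1,\ldots,U_m$ of the same length with $\deg(U_j)<\deg(U)$, so $\lambda(U_j)<\lambda(U)$ in the second coordinate. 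In every iteration the stack transition is therefore a legal (possibly multi-element) Dershowitz--Manna step, which forces termination.

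The main obstacle I anticipate is the bookkeeping for Step~20, which simultaneously rewrites \emph{every} stack entry that contains $h_i$, not just the popped one. I would handle this by checking the required strict decrease entry by entry: each affected $U$ is rewritten as $U_j$ sharing the list of generators outside position $i$ but with $\deg h_{i,j}<\deg h_i$, so its length is unchanged while $\deg(U_j)<\deg(U)$, giving $\lambda(U_j)<\lambda(U)$. Once this per-entry check is in place, the termination argument via the well-founded multiset order applies uniformly to all four cases and gives Lemma~\ref{lemma-algo1-terminates}.
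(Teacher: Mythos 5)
Your proof is correct and takes a genuinely different route from the paper. The paper argues termination via a \emph{bounded strictly increasing} potential: it collects all split ideals ever seen into a set $C$ (disjoint union over $S$, $\mathcal{L}$, and a virtual dead-end list $D$), uses prefix-freeness (Lemma~\ref{lemma-content-stack}) to deduce $|C|\le\min(d^k,p^k)$, and shows that the quantity $\bigl(\sum_{I\in C}\len(I)\bigr)+|\mathcal{L}|+|D|$ strictly increases every iteration yet is bounded above. You instead argue via a \emph{well-founded strictly decreasing} measure: the multiset of lexicographic pairs $\lambda(I)=(k-\len(I),\deg(I))$, compared in the Dershowitz--Manna order. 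Both are valid; your version avoids the ad hoc cardinality estimate $|C|\le\min(d^k,p^k)$ entirely (the well-foundedness of the multiset order requires no a priori bound), and your per-case verification is correct: popping to maximal/dead-end shrinks the multiset, growing (Step~18) drops the first coordinate, and factoring (Step~20) replaces each affected $U$ with $m\ge2$ ideals of the same length and strictly smaller degree (each $h_{i,j}$ is a monic nontrivial factor of $h_i$, so $\deg_{x_i}(h_{i,j})<\deg_{x_i}(h_i)$), a legitimate Dershowitz--Manna step even when several stack entries are rewritten at once.
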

\begin{proof}
We show that the number of iterations in Algorithm~\ref{algo1} are finite. Assume that all the ideals which result in a dead-end are moved to a list $D$; say $C$ is the disjoint union of all ideals in $S$, $\mathcal{L}$ and $D$. Whenever a split ideal $I$ from $S$ is moved to $\mathcal{L}$ or $D$, the underlying roots (of $I$) stop extending to the next precision.
Togetherwith Lemma~\ref{lemma-content-stack}, we deduce that in fact all the ideals in $C$ are prefix-free. Now by Step 18, and the rate of growth of split ideals up to length $l+1\le k$, we get a lazy estimate of $|C|\le \min(d^k, p^k)$.

Let $\len(I)$ denote the length of an ideal $I$, it is bounded by $k$. Notice that factoring/growing an ideal increases $\sum_{I\in C} \len(I)$; and getting a maximal split ideal/ dead-end increases $|\mathcal{L}| + |D|$. Thus, every iteration of the algorithm strictly increases the quantity $(\sum_{I\in C} \len(I)) + |\mathcal{L}| + |D|$. 
By the estimate on $|C|$, all the terms in this quantity are bounded; thus, the number of iterations are finite.
\end{proof}

The following lemma shows: if we see a restriction of $r \in \calZ_{R}(f)$ (say, up to length $l+1$) at some point in Algorithm~\ref{algo1}, 
we will again see its restriction of length $l+2$ at a later point in the algorithm.

\begin{lemma}[Getting roots with more precision]
\label{lemma-root-extension}
Assume that at some time (say $t$), Algorithm~\ref{algo1} pops an ideal $I$ of length $l+1$, that is not yet a maximal split ideal.  
Let $\bar{a}=(a_0,\ldots,a_l) \in \calZ_{\F_p}(I)$ partially represent a ``root'' $r=: \sum_{0\le i\le l+1} a_i p^i$ such that $f(r) \equiv 0 \mod p^{l'}$, but $f(r-a_{l+1}p^{l+1}) \not\equiv 0 \mod p^{l'}$, for some $l+2\le l'\le k$.
Then, there exists a time $t'>t$, when stack $S$ will pop an ideal $J$ of length $l+2$, such that, $(\bar{a},a_{l+1}) \in \calZ_{\F_p}(J)$.
\end{lemma}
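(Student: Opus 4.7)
The strategy tracks the evolution of Algorithm~\ref{algo1}'s stack after time $t$ and shows that some descendant of $I$ must eventually reach Step~18 with a length-$(l+2)$ extension picking up $a_{l+1}$. Three key facts drive the argument: an algebraic identification of $a_{l+1}$ with a root of a specific polynomial over $\F_p$, the non-triviality of the $\gcd$ in Step~11, and a termination argument for the iteration across zerodivisor splittings.

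First, I would show that the hypotheses on $r$ force $a_{l+1}$ to be a zero of the reduced polynomial $\tilde g(\bar{a}, x) \in \F_p[x]$ computed at Step~8. Writing $f_I \equiv p^\alpha g \bmod \hat I$ (as in Step~6) and using both invariants from Lemma~\ref{lemma-content-stack} -- namely $f(\sum x_i p^i + x p^{l+1}) \equiv f_I \bmod \hat I$ and $f(\sum x_i p^i) \equiv 0 \bmod \hat I + \langle p^{l+1}\rangle$ -- together with a Taylor expansion of $f$ around $r_0 := r - a_{l+1} p^{l+1}$, one extracts that $\alpha = l+1$ and that the coefficient of $p^{l+1}$ in $f(r)$ equals, up to a unit, $\tilde g(\bar a, a_{l+1})$. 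The hypothesis $f(r) \equiv 0 \bmod p^{l+2}$ then forces $\tilde g(\bar a, a_{l+1}) \equiv 0 \bmod p$, while $f(r_0) \not\equiv 0 \bmod p^{l+2}$ ensures $\alpha$ is exactly $l+1$ (so the ``leading'' layer of $f_I$ is genuinely $\tilde g$).

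Second, since $a_{l+1} \in \F_p$ is also a root of $x^p - x$, the linear factor $(x - a_{l+1})$ divides $\gcd(\tilde g_{\bar a}(x), x^p - x)$ in $\F_p[x]$. By Lemma~\ref{lemma-actual-gcd}, whenever the \Call{GCD}{} call in Step~11 succeeds, the multivariate gcd $h_{l+1}(\bar x_l, x) \bmod I$ projects at $\bar a$ to this univariate gcd up to a unit, hence is divisible by $(x - a_{l+1})$. If Algorithm~\ref{algo1} reaches Step~18 via $I$ at time $t$, the ideal $J := I + \langle h_{l+1}(\bar x_{l+1}) \rangle$ pushed onto the stack thus satisfies $(\bar a, a_{l+1}) \in \calZ_{\F_p}(J)$, and by termination $J$ is popped at some later $t' > t$. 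The same projection argument rules out the dead-end branch (Step~16): nontrivial $\gcd$ at $\bar a$ forces nontrivial $\gcd$ mod $I$.

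Third, if instead the algorithm discovers a zerodivisor at Steps~9--10 or Step~14, the ideal $I$ is split via Corollary~\ref{cor-split-ideal-str} into children $I_1, \ldots, I_m$ whose root-sets partition $\calZ_{\F_p}(I)$; exactly one child $I_j$ contains $\bar a$, and is pushed back onto $S$. This child cannot be a maximal split ideal, because $f(r_0) \not\equiv 0 \bmod p^k$ (which would be required to declare maximality at Step~7 with $x=0$), so when $I_j$ is popped the same analysis applies to it. By Lemma~\ref{lemma-algo1-terminates} the chain of such splittings is finite, and since both dead-end and maximal-ideal outcomes are excluded for every descendant containing $\bar a$, the iteration must terminate at Step~18, yielding the required $J$. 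The main obstacle I anticipate is the algebraic reduction in the first step: passing from the integer-valued hypothesis $f(r) \equiv 0 \bmod p^{l'}$ to the $\F_p$-identity $\tilde g(\bar a, a_{l+1}) \equiv 0$, since a naive substitution $\bar x_l \mapsto \bar a$ into $f_I \equiv p^\alpha g \bmod \hat I$ is only valid modulo the ideal $\hat I(\bar a) \subseteq p R$, which is generally too coarse; one must carefully combine both invariants of Lemma~\ref{lemma-content-stack} with the split-ideal conditions to control the Taylor error terms to sufficient precision.
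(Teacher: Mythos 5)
Your proof follows the same three-case skeleton as the paper's (grow via Step 18, rule out the dead-end, recurse after a zerodivisor split), but the first step contains a genuine quantitative error. You assert that the valuation $\alpha$ computed at Step 6 satisfies $\alpha = l+1$, deducing this from reading the hypothesis as $f(r_0) \not\equiv 0 \bmod p^{l+2}$. The lemma's actual hypothesis is $f(r_0) \not\equiv 0 \bmod p^{l'}$ for \emph{some} $l' \ge l+2$, and combined with the stack invariant $\alpha > l$ this gives only $\alpha < l'$, not $\alpha = l+1$. And $\alpha$ really can exceed $l+1$: take $p=2$, $k=4$, $f = x^2-4$, $I_0 = \langle x_0 \rangle$; then $f_{I_0}(x_0,x) = 4(x^2-1) \bmod \hat{I}_0$, so $\alpha = 2 > l+1 = 1$. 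In that regime your inference ``$f(r) \equiv 0 \bmod p^{l+2}$ forces $\tilde g(\bar a, a_{l+1}) \equiv 0 \bmod p$'' is vacuous, because $p^\alpha g(\bar a, a_{l+1}) \equiv 0 \bmod p^{l+2}$ holds automatically. The repair is exactly what the paper does: project, write $f(\sum a_i p^i + a_{l+1} p^{l+1}) = p^\alpha g(\bar a, a_{l+1})$, use $f(r) \equiv 0 \bmod p^{l'}$ together with $\alpha < l'$ (which does follow from $f(r_0)\not\equiv 0 \bmod p^{l'}$) to conclude $g(\bar a, a_{l+1}) \equiv 0 \bmod p$.

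The remainder of your argument is sound and tracks the paper closely: the gcd step via Lemma~\ref{lemma-actual-gcd}, ruling out the dead-end, and the recursion across zerodivisor splits. Your explicit observation that a descendant containing $\bar a$ can never be declared maximal (because $f(r_0)\not\equiv 0 \bmod p^{l'}$ with $l' \le k$ forces the valuation below $k$) makes precise what the paper only asserts informally in the Step~20 case, which is a small improvement. You also correctly flag the delicate point that substituting $\bar a$ into a congruence mod $\hat I$ is a priori only valid mod $p$; the intended fix — implicit in the paper's phrase ``project to (suitable integral lifts) of $\bar a$'' — is to take the Hensel lift of $\bar a$, on which the generators of $\hat I$ vanish exactly in $R$, and then reduce mod $p$ to compare with $\tilde g(\bar a, \cdot)$.
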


\begin{proof}
We again consider three possible situations.

\smallskip (Step $18$) Ideal $I$ grows to another split ideal, say $J$. Notice, $J$ is obtained by adding $h_{l+1}:=\Call{GCD}{g(\bar{x}_l,x),x^p-x} \bmod I$ to $I$ (setting $x\mapsto x_{l+1}$). 

Step $6$ defines $g$ via $f_I$ as, $f_{I} =: p^{\alpha} g(\bar{x}_l,x) \bmod \hat{I}$. Looking at the $f_{I}$ analogues pushed in Steps $3, 18, 20$, one can deduce the invariant: $f\left(\sum_{0\le i\le l} x_i p^i +xp^{l+1} \right) \equiv f_{I}(\bar{x}_l,x) \bmod \hat{I}$. 


Now, let us project to (suitable integral lifts of) $\bar{a}$ and consider $f\left(\sum_{0\le i\le l} a_i p^i + xp^{l+1}\right) \equiv f_{I}(\bar{a},x) \equiv p^{\alpha} g(\bar{a},x) \bmod \hat{I}$. By Step $9$, and Claim \ref{claim-monic-degree}, we are assured that $g(\bar{a},x)$, $g(\bar{x}_l,x) \bmod I$ are  equi-degree (wrt $x$). Thus, by non-maximality hypothesis we have $\alpha<l'$. Hypothesis tells us that $f\left(\sum_{0\le i\le l+1} a_i p^i\right) \equiv 0 \mod p^{l'}$. So, by the previous paragraph, $p^{\alpha} g(\bar{a}, a_{l+1}) \equiv 0 \mod p^{l'}$. Whence, 
$g(\bar{a},a_{l+1})\equiv0 \bmod p$. Clearly, $a_{l+1}^p-a_{l+1}\equiv0 \bmod p$. 
Thus, $h_{l+1}(\bar{a}, a_{l+1}) \equiv0 \bmod p$. So $(\bar{a},a_{l+1})$ is a root of $J$.

\smallskip (Step $16$) Proof of the previous case shows that $h_{l+1}(\bar{a}, x)$ has degree at least $1$, so $I$ could not result in a {\em dead-end}.

\smallskip (Step $20$) Ideal $I$ factors into (smaller) split ideals. In this case, $\bar{a}$ will be included in exactly one of those ideals (by Corollary~\ref{cor-split-ideal-str}). This ideal will be handled later in the algorithm and will give an ideal $J$ with $(\bar{a},a_{l+1})$ as root. 
\end{proof}

\subsection{Time complexity of Algorithm~\ref{algo1}--- introducing roots-tree $RT$}
\label{sec-pf-thm1}
\vspace{-1mm}


We know that Algorithm~\ref{algo1} takes finite amount of time and terminates (Lemma~\ref{lemma-algo1-terminates}). 
To show that it is efficient, note that the time complexity of the algorithm can be divided into two parts.

 1) Number of iterations taken by Algorithm~\ref{algo1}, which is clearly bounded by the number of updates on Stack $S$ in the algorithm.

 2) Time taken by the various algebraic operations in one iteration of the algorithm: reduction by a triangular ideal, valuation computation modulo a split ideal, testing if some polynomial is a zerodivisor modulo a split ideal, performing repeated squaring modulo a triangular ideal and computing gcd of two multi-variates modulo a split ideal. 

\smallskip
For the purpose of bounding iterations, we define a `virtual' tree, called {\em roots-tree ($RT$)}, which essentially keeps track of the updates on Stack $S$. 
We will map a node $N=(I,f_I)$ in roots-tree to the element $(I,f_I)$ in stack $S$.
Each {\em push} will create a new node in $RT$. The nodes are {\em never} deleted from $RT$.

\smallskip\noindent\textbf{Construction of roots-tree ($RT$):} 
Denote the root of $RT$ by $N_{\langle 0\rangle} :=(\langle 0\rangle, f_{\langle 0\rangle}:=f(x))$. Add a child node $N_{I_{0}}$ to the root corresponding to the initialization of Stack $S$ by $(I_0, f_{I_0})$, 
where $I_0:=\langle h_0(x_0)\rangle$ (label the edge $h_0$ in $RT$). 

If, at some time $t$, the algorithm pops $(I_{l-1},f_{I_{l-1}})$ from $S$ then the \emph{current node} in $RT$ will be the leaf node $N_{I_{l-1}}=(I_{l-1}, f_{I_{l-1}})$.
We map the updates on stack $S$ to $RT$ as follows:

(Step $18$) 
If ideal $I_{l-1}$ {\em grows} to $I_l:= I_{l-1}+\langle h_l\rangle$ and $(I_{l},f_{I_l})$ is pushed in $S$, 
then create a child of $N_{I_{l-1}}$ in $RT$ using an edge labelled $h_l$ (label the node $N_{I_l}:=(I_l,f_{I_l})$).  

(Steps $7, 16$) 
 If the algorithm reached {\em dead-end} (no update in stack $S$ or list $\mathcal{L}$), then add a child labelled $\mathcal{D}$ to node $N_{I_{l-1}}$. It indicates a dead-end at the current branch. 
Analogously, if the algorithm finds a {\em maximal split ideal}, we add a child labelled $\mathcal{M}$ to Node $N_{I_{l-1}}$ (indicating $I_{l-1}$ is a maximal split ideal). 

 (Step $20$) 
Suppose, processing of length-$l$ split ideal $I_{l-1}$ results in factoring each ideal $U$ in $S$, containing $h_i$, to $m$ split ideals. 
We describe the \emph{duplication process} for a particular $U$ (repeat it for each split ideal containing $h_i$).

Let $U_{i-1}$ be the length-$i$ restriction of $U$.
First, we move to the ancestor node $N_{U_{i-1}} := (U_{i-1},f_{U_{i-1}})$ of $N_{U}$. Make $m$ copies of the sub-tree at Node $N_{U_{i-1}}$, each of them attached to $N_{U_{i-1}}$ by edges labelled with $h_{i,1},\ldots,h_{i,m}$ respectively. 
The copy of each old node $N=(V,f_V)$, in sub-tree corresponding to $h_{i,j}$, will be relabelled with $(V_j,f_{V_j})$ corresponding to the factor split ideal $V_j$ of $V$ and the newly computed $f_{V_j}$. 

This step does not increase the height of the tree, though it increases the size.

\smallskip
For the rest of this section, $RT$ denotes the final roots-tree created at the end of the above process. 
We state some easy properties of $RT$, which will help us in analyzing the time complexity.

 1) By construction, size of the roots-tree increases at every iteration. We never delete a node or an edge (though relabelling might be done). 
So, the size of $RT$ bounds the number of iterations taken by Algorithm~\ref{algo1}.

 2) 
Consider a node $N_I=:(I, f_I)$ in $RT$. Here $f_I(\bar{x}_l,x)\in R[\bar{x}_l,x]$, and let $g_I\in R[\bar{x}_l,x]$ be defined as in Algorithm~\ref{algo1}, 
$g_I :=f_I(\bar{x}_l,x)/p^\alpha \bmod \hat{I}$, where $p^\alpha \;||\; f_I \mod \hat{I}$, and $\hat{I}$ is a lift of $I$ over $R$. 
Then, $g_I \bmod I$ is a nonzero polynomial over $\F_p$. 

 3) For each node $N_I=: (I, f_I(\bar{x}_{l},x))$ and its child $N_J=: (J,f_J(\bar{x}_{l+1},x))$, we have the relation, 
$ f_J \,=\, f_I(\bar{x}_{l},x_{l+1}+p x) \bmod \hat{J} $.


\smallskip
\noindent\textbf{Bounding $|RT|$:}
To bound the size of $RT$, we define a parameter for a node $N$ of $RT$, called the {\em degree} of the node $N$ and denoted by $[N]$.

\begin{definition}[Degree of a node in $RT$]
\label{def-deg-node}
The degree of root node $N_{\langle 0\rangle}$ is $[N_{\langle 0\rangle}] :=d$ ($=\deg(f)$). 
Degree of leaves $\mathcal{D}$ resp.~$\mathcal{M}$ is defined to be $1$.
	
Let $N_I=:(I, f_I)$ be a node corresponding to a split ideal $I \subseteq \F_p[\bar{x}_{l}]$, where $f_I(\bar{x}_l,x)$ belongs to $R[\bar{x}_l,x]$.
Let $p^\alpha \;||\; f_{I} \bmod \hat{I}$ and $g_I(\bar{x}_l,x) := f_{I}/p^\alpha \bmod \hat{I}$. Except, $g_I :=0$ if $\alpha\ge k$.

Then, the degree of $N$ is defined as, 
$[N] \;:=\; \max\left(1, \,\deg_{x}(g_{I} \bmod I) \times \deg(I) \right)$. 
\end{definition}

We show that the degree of a parent node bounds the sum of the degree of its children.


\begin{lemma}[Degree distributes in $RT$]
\label{lemma-property-of-degree}
Let $N$ be a node in roots-tree $RT$ and $\des(N)$ denote the set of all children of $N$. Then, $[N] \geq \sum_{C \in \des(N)} [C] $.

So, the sum of the degrees of all nodes, at any level $l$, is at least the sum of the degrees of all nodes at level $l+1$.
\end{lemma}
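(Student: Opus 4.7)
The plan is to enumerate the children of a fixed node $N=N_I$ (sitting at depth $l+1$, $I=\langle h_0,\ldots,h_l\rangle\subseteq\F_p[\bar{x}_l]$), establish a local multiplicity bound for each grow/split child, and then add these bounds up using the combinatorics of split ideals. By Corollary~\ref{cor-split-ideal-str} and the construction of $RT$, every non-leaf child $C$ of $N_I$ is a node $N_{J_C}$ with split ideal $J_C=I+\langle\tilde{h}_C(\bar{x}_{l+1})\rangle$, where the $\tilde{h}_C$'s are the pairwise-coprime monic factors of the single polynomial
\[
h_{l+1}(\bar{x}_l,x_{l+1})\;:=\;\gcd{}_x\bigl(g_I\bmod I,\; x^p-x\bigr)
\]
computed the one time $N_I$ was popped by Algorithm~\ref{algo1}; moreover $\prod_C\tilde{h}_C\equiv h_{l+1}\bmod I$, since later zerodivisor splits only refine this factorization. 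If instead $N_I$ produced an $\mathcal{M}$-child ($\alpha_I\ge k$) or a $\mathcal{D}$-child, then $N_I$ has a unique child of degree $1$ and the inequality $[N_I]\ge 1=[C]$ is immediate from $[N_I]\ge 1$. So I may assume every child of $N_I$ is of the grow/split type above.

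The heart of the argument is the following local multiplicity bound: for every such child $C$ and every $(\bar{a},a')\in\calZ_{\F_p}(J_C)$,
\[
\deg{}_x(g_{J_C}\bmod J_C)\;\le\; e_{a',\bar{a}},
\]
where $e_{a',\bar{a}}$ is the multiplicity of $a'$ as a root of $g_I(\bar{a},x)\bmod p\in\F_p[x]$. To prove it I combine the invariants $f_{J_C}\equiv f_I(\bar{x}_l,x_{l+1}+px)\bmod \hat{J}_C$ and $f_I\equiv p^{\alpha_I}g_I\bmod \hat{I}$ with the Taylor/Hasse expansion
\[
f_{J_C}\;\equiv\;p^{\alpha_I}\sum_{j\ge 0}(px)^j\;g_I^{[j]}(\bar{x}_l,x_{l+1})\bmod \hat{J}_C.
\]
The $x^j$-coefficient of the right side has $p$-valuation at least $\alpha_I+j$ in $R[\bar{x}_{l+1}]/\hat{J}_C$, so every $j>\alpha_{J_C}-\alpha_I$ contributes $0$ to $g_{J_C}\bmod p$; hence $\deg_x(g_{J_C}\bmod J_C)\le\alpha_{J_C}-\alpha_I$. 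In the opposite direction, at the projection $(\bar{a},a')$ the coefficient of $x^e$ with $e=e_{a',\bar{a}}$ equals $p^{\alpha_I+e}g_I^{[e]}(\bar{a},a')$, and $g_I^{[e]}(\bar{a},a')$ is a unit mod $p$ by the very definition of multiplicity; since $\alpha_{J_C}$ is the minimum $p$-valuation over all coefficients of $f_{J_C}\bmod\hat{J}_C$, which is bounded above by this projective coefficient's valuation, we deduce $\alpha_{J_C}\le\alpha_I+e$. Combining the two inequalities gives the claim.

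Finally, I would assemble the sum. Fix any reference $\bar{a}\in\calZ_{\F_p}(I)$. By Step~11 of Algorithm~\ref{algo1} and Claim~\ref{claim-monic-degree}, each $\tilde{h}_C$ is monic in $x_{l+1}$ mod $I$ with $\deg_{x_{l+1}}(\tilde{h}_C)=|S_C(\bar{a})|$ for $S_C(\bar{a}):=\{a'\in\F_p:\tilde{h}_C(\bar{a},a')=0\}$; moreover, since $\prod_C\tilde{h}_C\equiv h_{l+1}\bmod I$ and $h_{l+1}(\bar{a},x_{l+1})$ is squarefree with exactly the $\F_p$-roots of $g_I(\bar{a},x)\bmod p$, the family $\{S_C(\bar{a})\}_C$ partitions those $\F_p$-roots. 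Using the local bound together with $\deg(J_C)=\deg(I)\cdot\deg_{x_{l+1}}(\tilde{h}_C)$,
\[
\sum_{C\in\des(N)}[C]\;=\;\deg(I)\sum_C\deg{}_x(g_{J_C}\bmod J_C)\cdot|S_C(\bar{a})|\;\le\;\deg(I)\sum_{a'}e_{a',\bar{a}}\;\le\;\deg(I)\cdot\deg{}_x(g_I\bmod I)\;=\;[N_I],
\]
where the last inequality uses $\sum_{a'}e_{a',\bar{a}}\le\deg g_I(\bar{a},x)$. Summing this node-wise inequality over all level-$l$ nodes, and using that each level-$(l{+}1)$ node is the child of a unique level-$l$ node, yields the second sentence of the lemma.

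The main obstacle is the local multiplicity bound: one must transfer a purely point-wise statement about the multiplicity of $a'$ in the projected polynomial $g_I(\bar{a},x)\bmod p$ into a bound on the global invariant $\deg_x(g_{J_C}\bmod J_C)$. The clean way out is that the single quantity $\alpha_{J_C}-\alpha_I$ serves as a two-sided bridge--upper-bounding $\deg_x(g_{J_C}\bmod J_C)$ via the $p$-valuation of each $x^j$-coefficient, and being upper-bounded by the local multiplicity $e_{a',\bar{a}}$ via the projective $x^e$-coefficient.
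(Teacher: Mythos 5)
Your proof is correct and follows the same high-level plan as the paper's: reduce to the grow/split children, establish a local multiplicity bound at each child, then sum using the fact that the $S_C(\bar a)$ partition the $\F_p$-roots of $g_I(\bar a,x)\bmod p$. Where you differ is in how the local bound $\deg_x(g_{J_C}\bmod J_C)\le e_{a',\bar a}$ is obtained. The paper projects to a fixed $\bar a\in\calZ_{\F_p}(I)$, Hensel-lifts the factorization $\tilde g_I(\bar a,x)\equiv (x-c_i)^{m_i}G(x)\bmod p$ to $g_I(\bar a,x)\equiv\bigl((x-c_i)^{m_i}+pG_1\bigr)G_2\bmod p^k$, substitutes $x\mapsto c_i+px$, and reads off the degree from the unit factor $G_2$. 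You instead work directly with the Hasse expansion $f_{J_C}\equiv p^{\alpha_I}\sum_j(px)^j g_I^{[j]}(\bar x_l,x_{l+1})$ and use the valuation $\alpha_{J_C}-\alpha_I$ as a two-sided bridge: the $x^j$-coefficient already forces $\deg_x(g_{J_C}\bmod J_C)\le\alpha_{J_C}-\alpha_I$, while the $x^e$-coefficient evaluated at $(\bar a,a')$ forces $\alpha_{J_C}\le\alpha_I+e$. This is equivalent in content (Hasse derivatives are what make the Hensel substitution explicit) but avoids the explicit Hensel lifting and yields a bound phrased directly in terms of the valuations $\alpha$ that the algorithm already computes; I find it a bit cleaner than the paper's own write-up.

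One small imprecision, which you share with the paper and which does not invalidate the argument: the definition of $[C]$ has a $\max(1,\cdot)$ floor, so the displayed \emph{equality} $\sum_C[C]=\deg(I)\sum_C\deg_x(g_{J_C}\bmod J_C)\cdot|S_C(\bar a)|$ is false whenever some child has $\deg_x(g_{J_C}\bmod J_C)=0$ (that child contributes $1$, not $0$). This does happen (e.g.\ $f=x^2+p$, $k=2$). The fix is routine: for such a $C$ bound $[C]=1\le\deg(I)\cdot\sum_{a'\in S_C(\bar a)}e_{a',\bar a}$ (each $a'\in S_C(\bar a)$ is a genuine root of $g_I(\bar a,x)\bmod p$, so $e_{a',\bar a}\ge1$), after which the downstream chain of inequalities goes through with $\le$ in place of the first $=$. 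Worth stating explicitly, but not a gap in the idea.
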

\begin{proof}
Let $N=(I, f_{I})$, where $I=\langle h_0,\ldots,h_l\rangle$ and $f_I(\bar{x}_l,x)\in R[\bar{x}_l,x]$. Define $\tilde{g}_{I}\in \F_p[\bar{x}_l, x]$ as 
$\tilde{g}_{I}:= g_{I}(\bar{x}_l,x) \bmod I$. 
Assume $\alpha<k$, otherwise we are done.
So, $g_I \bmod I$ is nontrivial wrt $x$; by Step $9$ (failure) and Claim~\ref{claim-monic-degree}, we get,
\begin{equation}
\label{eq-1}
\forall \bar{a}\in \calZ_{\F_p}(I): \ \deg_{x}(\tilde{g}_I \bmod I) \;=\; \deg_{x}(\tilde{g}_I(\bar{a},x)) \;.
\end{equation}

Recall $h_{l+1}(\bar{x}_l,x) := \gcd(\tilde{g}_{I}(\bar{x}_l,x), x^p-x)$.
Let $C$ be a child node of $N$ in $RT$ such that $C=:(J_C, f_{J_C})$, where $J_C=:I+\langle h_{l,C}(\bar{x}_{l+1})\rangle$ and $f_{J_C}(\bar{x}_{l+1},x) := f_I(\bar{x}_l, x_{l+1}+px) \bmod \hat{J}_C$. This gives us the factorization  
$h_{l+1}(\bar{x}_l,x) \;=\; \prod_{C\in\des(N)} h_{l,C} (\bar{x}_l,x) \bmod I$ (Step $20$, and `duplication step' when we constructed $RT$).
Again,
\begin{equation}
\label{eq-2}
	\forall \bar{b}\in \calZ_{\F_p}(J_C): \ \deg_{x}(\tilde{g}_{J_C} \bmod J_C) \;=\; \deg_{x}(\tilde{g}_{J_C}(\bar{b},x)) \;.
\end{equation}
If $g_{J_C} =: f_{J_C}/p^{v'} \bmod \hat{J_C}$ for some $v'\in \N$, by property $3$ of $RT$, we have $g_{J_C} = f_I(\bar{x}_l,x_{l+1}+p x)/p^{v'} \bmod \hat{J_C}$.
	
By definition, $[N] = \deg(I)\cdot \deg_{x}(\tilde{g}_I)$ and $[C] = \deg(J_C)\cdot \deg_{x}(\tilde{g}_{J_C})$. 
Since $\deg(J_C) = \deg(I)\cdot \deg_x(h_{l,C}(\bar{x}_{l}, x))$, the lemma statement is equivalent to showing,
\begin{equation}
\label{eq-dist-root}
	\deg_{x}(\tilde{g}_I) \,\ge\, \sum_{C \in \des(N)}  \deg_x(h_{l,C}(\bar{x}_{l}, x))\cdot \deg_{x}(\tilde{g}_{J_C}) \;.
\end{equation}

Continuing with the notation of a particular child $C$, fix an $\bar{a}\in \calZ_{\F_p}(I)$. 
Since $J_C$ is a split ideal, $h_{l,C}(\bar{a},x)$ (of degree $d'_C$) can be written as $\prod_{i=1}^{d'_C} (x-c_i)$, where each $c_i\in \F_p$ and are distinct. 
Then, each $c_i$ is also a root of $\tilde{g}_I(\bar{a},x)$, say with multiplicity $m_i\in \N$. 
So, there exists $G(x) \in \F_p[x]$ (coprime to $x-c_i$), such that, 
	$\tilde{g}_I(\bar{a},x) \;\equiv\; (x-c_i)^{m_i}\cdot G(x) \bmod p $. 
Lifting this equation mod $p^k$, there exists $G_1(x)\in R[x]$, of degree less than $m_i$, and a unique lift $G_2(x)\in R[x]$ of $G(x)$ (Hensel lemma (\ref{hensel-lemma})) : 
	$g_I(\bar{a},x) \;\equiv\; ((x-c_i)^{m_i}+ p G_1(x))\cdot G_2(x) \bmod p^k$ . 
Substituting $x \to c_i+px$, we get, 
	$g_I(\bar{a},c_i+p x) \;\equiv\; ((p x)^{m_i}+ p G_1(c_i+p x))\cdot G_2(c_i+p x) \bmod p^k$ . 

Let $\bar{b}_i=(\bar{a},c_i)\in \calZ_{\F_p}(J_C)$. We know that $\tilde{g}_{J_C}(\bar{b}_i,x) = f_I(\bar{a},c_i+p x)/p^{v'} \bmod p$ is nontrivial. This implies that, 
$((p x)^{m_i}+ p G_1(c_i+p x))/p^{v'} \bmod p$ is a nonzero polynomial of degree at most $m_i$ ($\because p\nmid G_2(c_i)$).

Since $G_2(c_i+p x)\not\equiv 0\bmod p$ is a unit, $\deg_{x}(\tilde{g}_{J_C}(\bar{b}_i,x)) = \deg_{x}(\tilde{g}_{J_C})\leq m_i$ (Eqn.~\ref{eq-2}). 
Summing up over all the roots $c_i$ of $\tilde{g}_I(\bar{a},x)$,	
	\[ \sum_{i=1}^{d'_C} \deg_{x}(\tilde{g}_{J_C}(\bar{b}_i,x)) \;=\; d'_C\cdot \deg_{x}(\tilde{g}_{J_C}) \;\le\; \sum_{i=1}^{d'_C} m_i  \;=: d_C(g_I)\;.\]	
Summing over all children $C\in\des(N)$ (using Eqn.~\ref{eq-1}, factorization of $h_{l+1}$ \& distinctness of $\F_p$-roots), we deduce,
	\[  \sum_{C\in\des(N)} \deg_x(h_{l,C}) \deg_{x}(\tilde{g}_{J_C}) \;\leq\; \sum_C d_C(g_I)  \;\leq\; \deg_{x}(\tilde{g}_I(\bar{a},x)) \;=\; \deg_{x}(\tilde{g}_I) \;.\]
This proves Eqn.~\ref{eq-dist-root}, and hence the lemma.
\end{proof}
%
%
%
%
%

Define the degree of list $\mathcal{L}$ as, $\deg(\mathcal{L}):= \Sigma_{I\in \mathcal{L}} \deg(I)$. 

\begin{lemma}[Bounding $|RT|$, $\deg(I)$, $\deg(\mathcal{L})$, $|\mathcal{L}|$]
\label{lemma-iterations}
Let $RT$ be the roots-tree constructed from the execution of Algorithm~\ref{algo1}.
The number of leaves of $RT$, resp.~$\deg(\mathcal{L})$, is at most $d=\deg(f(x))$. Also, the size $|RT|$ of the roots-tree (hence, the number of iterations by Algorithm \ref{algo1}) is bounded by $dk$.
\end{lemma}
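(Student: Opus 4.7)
The plan is to use the degree-distribution inequality $[N]\ge\sum_{C\in\des(N)}[C]$ of Lemma~\ref{lemma-property-of-degree} as a potential function that telescopes across the tree $RT$. First I would show, by structural induction on $RT$, that the sum $S(N):=\sum_{L\text{ leaf in subtree}(N)}[L]$ satisfies $S(N)\le[N]$ for every node: the base case (a leaf) is trivial, and at an internal node $S(N)=\sum_{C\in\des(N)}S(C)\le\sum_{C\in\des(N)}[C]\le[N]$. Applied to the root of $RT$ (whose degree is $d=\deg(f)$), this yields $\sum_{L}[L]\le d$. Since every leaf in $RT$ is an $\mathcal{M}$- or $\mathcal{D}$-terminal of degree $1$, the total leaf count—and in particular $|\mathcal{L}|$—is at most $d$.

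For the refinement $\deg(\mathcal{L})\le d$, I would run the same telescoping but charge each $\mathcal{M}$-leaf attached to a maximal node $N_I$ of length $l+1$ the value $\deg(I)$ in place of $1$. The key bound is $\deg(I)\le[N_{I'}]$, where $I'$ is $I$'s parent ideal (of length $l$) in $RT$: since $h_l=\gcd_x(\tilde g_{I'},x^p-x)$ divides $\tilde g_{I'}$, one gets $\deg(I)=\deg(I')\cdot\deg_{x_l}(h_l)\le\deg(I')\cdot\deg_x(\tilde g_{I'})=[N_{I'}]$. When several maximal children share a common parent (via Step~20 factorings, cf.\ Lemma~\ref{lemma-content-stack}'s prefix-freeness), the corresponding $h_l$-factors are coprime divisors of $\tilde g_{I'}$, so their $\deg$-contributions still aggregate to at most $[N_{I'}]$. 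Telescoping this up $RT$, using Lemma~\ref{lemma-property-of-degree} for the non-maximal children along the way, yields $\sum_{I\in\mathcal{L}}\deg(I)\le d$.

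The bound $|RT|\le dk$ is then immediate: at every level $l$, applying Lemma~\ref{lemma-property-of-degree} inductively gives $\sum_{N\text{ at level }l}[N]\le d$, and since $[N]\ge 1$ for all nodes, each level has at most $d$ nodes; split ideals have length $\le k$ (so $RT$ has depth $O(k)$), giving $|RT|=O(dk)$. The only genuinely subtle step is the aggregation in the $\deg(\mathcal{L})\le d$ argument, where one must use both the distributivity of Lemma~\ref{lemma-property-of-degree} and the prefix-free invariants of Lemma~\ref{lemma-content-stack} to ensure that contributions from sibling maximal children do not double-count; the remaining bounds on leaf count and $|RT|$ are routine consequences of the same telescoping.
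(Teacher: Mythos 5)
Your argument is correct and follows essentially the same route as the paper: apply Lemma~\ref{lemma-property-of-degree} inductively to bound level-wise degree sums by $d$, then pad/charge each terminal down to the last level, and combine with the depth bound $\le k$ to get $|RT|=O(dk)$. The one place you go further than the paper's prose is in making explicit the key inequality $\deg(I)\le[N_{I'}]$ (via $h_l\mid\tilde g_{I'}$) and the coprime aggregation of sibling factors, which the paper compresses into the phrase ``extend every leaf to bring it to the last level (create a chain of nodes of same degree)''; this is a clarification of the paper's padding step, not a different method.
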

\begin{proof}
Applying Lemma~\ref{lemma-property-of-degree} inductively, sum of the degrees of nodes at any level is bounded by the degree $d$ of the root node.
In particular, 

 1) We can extend every leaf to bring it to the last level (create a chain of nodes of same degree) without changing the degree distribution property. So, $\deg(\mathcal{L})= \Sigma_{I\in \mathcal{L}} \deg(I) \leq d$. 
	      Since the number of leaves is $\ge|\mathcal{L}|$, we get $|\mathcal{L}| \leq d$. 

2) For any split ideal $I$ in stack $S$, $\deg{I} \leq d$.
	
3) Since the depth of the roots-tree is at most $k$, $|RT| \leq kd$.
\end{proof}


\begin{lemma}[Computation cost at a node]
\label{lemma-comp-cost}
Computation cost at each node of $RT$ (time taken by Algorithm~\ref{algo1} in every iteration of the while loop) is bounded by $\poly(d, k\log p)$.
\end{lemma}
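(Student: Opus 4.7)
The plan is to bound the cost of a single iteration by decomposing it into its constituent primitive operations (reduction, valuation computation, zerodivisor test, repeated squaring, multivariate gcd, and stack update), and showing each runs in $\poly(d, k\log p)$ time. I will use two structural bounds from the preceding analysis: at any moment the popped split ideal $I$ satisfies $\deg(I) \leq d$ (Lemma \ref{lemma-iterations}), its length $l+1$ is at most $k$, and all coefficients live in $R = \Z/\langle p^k\rangle$, so every element has bit-size $O(k\log p)$.

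First I would establish size bounds on the algebraic objects manipulated in the iteration. The triangular split ideal $I = \langle h_0(\bar{x}_0), \ldots, h_l(\bar{x}_l)\rangle$ has $\prod_i \deg_{x_i}(h_i) = \deg(I) \leq d$, so the normal forms modulo $\hat{I}$ are $(l+1)$-variate polynomials with at most $d$ reduced monomials; the tagged polynomial $f_I(\bar{x}_l, x)$ has $x$-degree at most $d$ and, once reduced modulo $\hat{I}$, fits in $\poly(d, k\log p)$ bits. Importantly, the invariants maintained in Lemma~\ref{lemma-content-stack} guarantee $I$ is a split ideal throughout, so these reduced forms are well-defined.

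Next I would invoke the subroutine cost bounds from Sections \ref{appen-reduce-valuation} and \ref{appen-GCD-zerodiv}: \Call{Reduce}{$\cdot, \cdot$}, \Call{Test-Zero-Div}{$\cdot, \cdot$}, and \Call{GCD}{$\cdot, \cdot, \cdot$} on inputs of the above size each run in time $\poly(d, k\log p)$. The valuation computation of Step 6 extracts the largest power of $p$ dividing $f_I \bmod \hat{I}$ coefficient-by-coefficient, and so costs $O(d \cdot k \log p)$. The computation of $x^p \bmod I + \langle \tilde g\rangle$ via repeated squaring uses $O(\log p)$ multiplications, each followed by a reduction modulo a triangular ideal of degree $\leq d$, giving $\poly(d, k\log p)$ total. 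Step 18's update $f_J \leftarrow f_I(\bar{x}_l, x_{l+1}+px) \bmod \hat{J}$ is again a routine substitution-and-reduction call, bounded via Lemma~\ref{lemma-stack-update}.

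The main subtlety, and the one step that is not a single subroutine call, is the factoring update in Step 20, which touches \emph{every} entry of $S$ containing the split generator $h_i$. The key observation that makes this cheap is that $|S|$ is itself polynomially bounded: every element of $S$ corresponds to a distinct leaf currently in $RT$, and the number of leaves is at most $\deg(\mathcal{L}) \leq d$ by Lemma~\ref{lemma-iterations}. For each such entry we split into at most $m \leq d$ pieces and recompute the tagged polynomial by one invocation of Lemma~\ref{lemma-split-reduce}, costing $\poly(d, k\log p)$ per piece. Hence Step 20 contributes a further $d \cdot d \cdot \poly(d, k\log p) = \poly(d, k\log p)$, and summing all contributions yields the claimed per-iteration bound.
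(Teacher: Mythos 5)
Your proposal is correct and takes essentially the same route as the paper: decompose the iteration into its primitive subroutines, invoke the cost lemmas from the appendices, and control everything via $\deg(I)\le d$ from Lemma~\ref{lemma-iterations}. You do add one observation the paper glosses over: Step~20 touches every stack entry containing the split generator, so its cost depends on $|S|$, and this must itself be polynomially bounded. Your bound $|S|\le d$ via ``every stack entry is a distinct leaf of $RT$'' is a little imprecise as stated — Lemma~\ref{lemma-iterations} bounds the number of leaves of the \emph{final} tree $RT$, whereas the entries of $S$ at an intermediate time are ``current'' leaves whose correspondence to final leaves is not one-to-one after duplications. The cleaner justification, which the same lemma gives for free, is the cruder bound $|S|\le|RT|\le dk$, which is already $\poly(d,k\log p)$ and suffices for the argument. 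With that adjustment your proof is complete and slightly more careful than the paper's own on the Step~20 point.
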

\begin{proof}
During an iteration, the major computations performed by the algorithm are--- 
testing for zerodivisors (Step $9$), computing modular gcd (Step $13$), computing reduced $f_I$ (Steps $3,18$), performing reduction for repeated squaring (Step $12$), and factoring ideals (Step $20$).

These operations are described by Lemmas~\ref{lemma-reduction}, \ref{lem-div-mod-I}, \ref{lemma-stack-update}, \ref{lemma-zero-div} and \ref{lemma-gcd}.
All of them take time $\poly(d, k\log p$, $\deg(I))$, where $I$ is the concerned triangular ideal. 

For any split ideal $I$ (or its lift $\hat{I}$), we know that $\deg(I)\leq d$ (Lemma~\ref{lemma-iterations}). So, Steps $3, 9, 13, 18, 20$ take time $\poly(d, k\log p)$. 
Step $12$ to compute repeated squaring modulo $I+\langle\tilde{g}\rangle$ takes time $\poly(\deg_{x}(\tilde{g}), \deg(I), k\log p)$ (using Lemma~\ref{lemma-reduction}). 
Since $I$ is a split ideal with $\deg(I)\leq d$, and degree of $\tilde{g}$ is at most $d$, so Step $12$ also takes $\poly(d, k\log p)$ time.

Hence the computation cost at each node is $\poly(d, k\log p)$.
\end{proof}

\begin{proof}[Proof of Theorem \ref{thm1}]
The definition of roots-tree shows that the number of leaves upper bound the number of all maximal split ideals in $\mathcal{L}$.
Lemmas~\ref{lemma-iterations} and \ref{lemma-comp-cost} show that the time complexity of Algorithm~\ref{algo1} is bounded by $\poly(d, k\log p)$ 
(by bounding both number of iterations and the cost of computation at each iteration). 
Using Lemma~\ref{lemma-MSI-root-represent} on the output of Algorithm~\ref{algo1}, we get the exact count on the number of roots of $f\bmod p^k$ in time $\poly(d, k\log p)$.
\end{proof}

\section{Proof of Theorem \ref{thm2}} \label{sec-main2}
\vspace{-1mm}

A polynomial $f$ can be factored mod $p^k$ if it has two basic-irreducible factors of different degree (using distinct degree factorization~\cite{von2001factoring} and Hensel Lemma~\ref{hensel-lemma}).   

If two basic-irreducible factors appear with different exponents/multiplicities, then again $f$ can be factored (using formal derivatives~\cite{von2001factoring} and Hensel Lemma~\ref{hensel-lemma}).

So, for factoring $f\bmod p^k$, we can assume $f\equiv (\varphi_1\ldots \varphi_t)^e + p h \bmod p^k$, 
where every $\varphi_i \in (\Z/\langle p^k\rangle)[x]$ is a basic-irreducible polynomial of a fixed degree $b$. Also, $d:=\deg(f)= bte$.
Let us fix this assumption for this section, unless stated explicitly. 

A basic-irreducible factor of $f\bmod p^k$ has the form $\varphi_i + p w_i(x) \bmod p^k$, for $i\in [t]$ (Lemma~\ref{lemma-f-is-power}).

If $b=1$, counting basic-irreducible factors of $f$ is equivalent to counting roots of $f$. 

When $b>1$, we prove a simple generalization of this idea; it is enough to count all the roots of $f$ in the ring extension $\Z[y]/\langle p^k, \varphi(y)\rangle$, 
where $\varphi(y)$ is an irreducible mod $p$ of degree-$b$. These rings are called {\em Galois rings}, we denote them by $G(p^k, b)$ (unique, for fixed $k$ and $b$, up to isomorphism).


\noindent
\subsection{Reduction to root-counting in $G(p^k,b)$}\label{sec-reduction-root}

By Lemma \ref{lemma-f-is-power}, any basic-irreducible factor of $f \bmod p^k$ is a factor of a unique $({\varphi_i}^e + p w_i(x))$; and $\varphi_i$ are coprime mod $p$. 
So in this subsection, for simplicity of exposition, we will assume that $f(x)$ equals $\varphi^e \bmod p$ ($\varphi$ is a monic degree-$b$ irreducible mod $p$).

Define $G := G(p^k,b)$. Let $y_0,y_1,\ldots,y_{b-1}$ be the roots of $\varphi(x)$ in $G$ (Claim~\ref{claim1}). Wlog, taking $y:=y_0$, $y_i \equiv y^{p^i} \bmod p$, for all $i \in \{0,\ldots,b-1 \}$ (Frobenius conjugates in $\F_p$).
Note that $G \cong (\Z/\langle p^k \rangle)[y]=:G'$. We will prefer to use $G'$ below.

The lemma below associates a root of $f$, in $G$ or $G'$, to a unique basic-irreducible factor of $f$ in $(\Z/\langle p^k\rangle) [x]$.

\begin{lemma}[Root to factor]
\label{lemma-root-to-bif}
Let $r(y)\in G'$ be a root of $f(x)$. Then, $h(x):=\prod_{i=0}^{b-1} (x-r(y_i))$ is the unique basic-irreducible factor of $f$ having root $r(y)$. 
We say: $h(x)$ is the basic-irreducible factor {\em associated to root} $r(y)$.
\end{lemma}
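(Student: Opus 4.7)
The plan is to establish four things about $h(x) = \prod_{i=0}^{b-1}(x - r(y_i))$: (a) its coefficients lie in $\Z/\langle p^k\rangle$, not just in $G'$; (b) $h$ reduces mod $p$ to the irreducible $\varphi(x)$, so it is basic-irreducible; (c) $h$ divides $f$ in $(\Z/\langle p^k\rangle)[x]$; (d) $h$ is the unique such factor.

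First I would set up the Galois action. The ring $G'$ is a Galois extension of $\Z/\langle p^k\rangle$ whose Galois group is cyclic of order $b$, generated by the Frobenius lift $\sigma : y \mapsto y^p$, and its fixed subring is exactly $\Z/\langle p^k\rangle$. Since $f \in \Z[x]$, applying $\sigma^i$ to the identity $f(r(y)) \equiv 0 \bmod p^k$ yields $f(r(y_i)) \equiv 0 \bmod p^k$ for every $i$. Hence the coefficients of $h$, being the elementary symmetric polynomials in $\{r(y_0),\ldots,r(y_{b-1})\}$, are fixed by $\sigma$ and thus lie in $\Z/\langle p^k\rangle$, proving (a).

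Next I would compute $h \bmod p$. Under the working assumption $f \equiv \varphi^e \bmod p$, the reduction $\bar r(y) \in \mathbb{F}_p[y]/\langle \varphi\rangle \cong \mathbb{F}_{p^b}$ satisfies $\varphi(\bar r(y))^e = 0$, and since $\mathbb{F}_{p^b}$ is a field, $\bar r(y)$ is a root of $\varphi$ in $\mathbb{F}_{p^b}$. The roots of $\varphi$ there are $\{y, y^p, \ldots, y^{p^{b-1}}\}$, so $\bar r(y) = y^{p^j}$ for some $j$. Then $\bar r(y_i) = y^{p^{i+j}}$, and as $i$ ranges over $\{0,\ldots,b-1\}$ the set $\{y^{p^{i+j}}\}$ is exactly the set of roots of $\varphi$. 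Therefore $h(x) \bmod p = \prod_{i=0}^{b-1}(x - y^{p^{i+j}}) = \varphi(x)$, which is irreducible, giving (b).

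The key step is (c). Since $h$ is monic, polynomial division in $(\Z/\langle p^k\rangle)[x]$ gives $f(x) = h(x)q(x) + s(x)$ with $\deg s < b$ and $q, s \in (\Z/\langle p^k\rangle)[x]$. Evaluating at $r(y_i)$ shows $s(r(y_i)) = 0$ in $G'$ for every $i$. The crucial observation is that for $i \neq i'$ the reductions $\bar r(y_i) = y^{p^{i+j}}$ and $\bar r(y_{i'}) = y^{p^{i'+j}}$ are \emph{distinct} elements of $\mathbb{F}_{p^b}$, so their difference is a unit in $\mathbb{F}_{p^b}$; lifting, $r(y_i) - r(y_{i'})$ is a unit in $G'$. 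The Vandermonde matrix $\bigl[r(y_i)^j\bigr]_{0 \le i,j \le b-1}$ therefore has determinant $\prod_{i<i'}(r(y_{i'}) - r(y_i))$, which is a unit in $G'$. The system $s(r(y_i)) = 0$ is a linear system with this invertible Vandermonde coefficient matrix acting on the coefficient vector of $s$, forcing $s = 0$. Hence $h \mid f$ in $(\Z/\langle p^k\rangle)[x]$. This Vandermonde-in-a-non-field argument is the main obstacle: divisibility is not free in the non-UFD $(\Z/\langle p^k\rangle)[x]$, and we use the unit-pairwise-differences lifted from $\mathbb{F}_{p^b}$ to recover it.

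For (d), suppose $h'(x) \in (\Z/\langle p^k\rangle)[x]$ is any basic-irreducible factor of $f$ with $h'(r(y)) = 0$. By the same Galois argument used in (a), $h'(r(y_i)) = 0$ for all $i$. Under the running assumption all basic-irreducible factors of $f$ have degree $b$, so $h'$ is monic of degree $b$. Then $h' - h$ has degree $< b$ and vanishes at $r(y_0), \ldots, r(y_{b-1})$; the same invertible Vandermonde forces $h' - h = 0$, establishing uniqueness. (In the general case where $f \equiv (\varphi_1 \cdots \varphi_t)^e \bmod p$, the factor of $f$ containing $r(y)$ is pinned down by which $\varphi_i$ the element $\bar r(y)$ is a root of, by coprimality, reducing to the simplified case.)
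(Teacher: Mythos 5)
Your proof is correct and follows essentially the same route as the paper's: establish integrality of $h$'s coefficients via the Galois action of $G'$ over $\Z/\langle p^k\rangle$, identify $h \bmod p$ with $\varphi$, and use the fact that the conjugate roots $r(y_i)$ are pairwise distinct mod $p$ to get divisibility and uniqueness. The one place you genuinely add something is the divisibility step: the paper disposes of it with the terse sentence ``since these roots are coprime mod $p$, we actually get $h \mid h'$,'' whereas you carry out the monic polynomial division $f = hq + s$ in $(\Z/\langle p^k\rangle)[x]$ and deduce $s=0$ by inverting the Vandermonde matrix $\bigl[r(y_i)^j\bigr]$ over $G'$, whose determinant $\prod_{i<i'}\bigl(r(y_{i'})-r(y_i)\bigr)$ is a unit precisely because the conjugates are pairwise distinct mod $p$. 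That is a clean, self-contained way to make the paper's implicit coprimality argument explicit in the non-UFD ring $(\Z/\langle p^k\rangle)[x]$. One small imprecision to fix: the map $\sigma\colon y \mapsto y^p$ is not in general a ring endomorphism of $G' = (\Z/\langle p^k\rangle)[y]/\langle\varphi(y)\rangle$, because $\varphi(y^p)$ vanishes only mod $p$, not mod $p^k$; the correct generator of the Galois group is $\psi_1\colon y \mapsto y_1$, where $y_1$ is the Hensel lift of $y^p \bmod p$ to an actual root of $\varphi$ in $G'$ (exactly as in Claim~\ref{claim2}). Every use you make of $\sigma$ goes through verbatim with $\psi_1$ in its place, so this is a notational fix rather than a gap.
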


\begin{proof}
The coefficients of $h$ are symmetric polynomials in $r(y_i)$ (over $0\le i<b$). Since the automorphism $\psi_1:y \to y_1$ of $G'$ (as defined in Claim \ref{claim2}) permutes $r(y_i)$'s ($\because$ it permutes $y_i$'s), 
it fixes all the coefficients of $h$. From Claim~\ref{claim2}, all these coefficients are then in $\Z/\langle p^k \rangle$. Hence, $h \in (\Z/\langle p^k\rangle) [x]$.

If $r(y)$ is a root of another polynomial $h'$ in $(\Z/\langle p^k \rangle) [x]$, then $r(y_i)$'s are also roots of $h'$ (applying automorphisms $\psi_i$ of $G'$). Since these roots are coprime mod $p$, we actually get: $h|h'$. Thus, $h$ is the unique monic irreducible factor of $f$ containing $r(y)$.

Looking mod $p$, $r(y_i)$'s are a permutation of the roots of $\varphi(x)$, so $h(x) \equiv \varphi(x) \bmod p$. Hence, $h(x)$ is the unique monic basic-irreducible factor of $f$.
\end{proof}


Following is the reduction to counting all roots of $f$ in $G$.

\begin{theorem}[Factor to root]
\label{theorem-reduction}
Any degree-$b$ basic-irreducible factor of $f\bmod p^k$ has exactly $b$ roots in $G$. Conversely, if $f$ has a root $r(y)\in G$, then it must be a root of a unique degree-$b$ basic-irreducible factor of $f\bmod p^k$.

So, the number of degree-$b$ basic-irreducible factors of $f\bmod p^k$ is exactly the number of roots, of $f$ in $G$, divided by the degree $b$.
\end{theorem}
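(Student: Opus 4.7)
The plan is to establish the two directions of the root-to-factor correspondence separately, and then deduce the count.

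Forward direction. Let $h(x) \in (\Z/\langle p^k\rangle)[x]$ be a degree-$b$ basic-irreducible factor of $f \bmod p^k$. Then $h \bmod p$ is an irreducible polynomial of degree $b$ in $\F_p[x]$, and hence splits into $b$ distinct linear factors over the residue field $G/\langle p\rangle \cong \F_{p^b}$. Because these roots are simple (an irreducible over $\F_p$ is automatically separable), Hensel's lemma (Section~\ref{appen-prelim}) applies inside the local ring $G$ whose maximal ideal is $\langle p\rangle$, and lifts each simple root uniquely to a root of $h$ in $G$. Thus $h$ has exactly $b$ roots in $G$, and since $h \mid f$ in $(\Z/\langle p^k\rangle)[x]$, each such root is also a root of $f$.

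Converse direction. Let $r(y) \in G$ satisfy $f(r(y)) \equiv 0 \bmod p^k$. First reduce to the case where $f \equiv \varphi^e \bmod p$ for a single irreducible $\varphi$ of degree $b$: by distinct-degree factorization applied mod $p$ together with Hensel lifting on coprime mod-$p$ parts, the basic-irreducible factors of $f$ group naturally under the different $\varphi_i$'s, and the root $r(y)$ must lie over exactly one such $\varphi_i$ (namely the one satisfied by $r(y) \bmod p$). In the single-$\varphi$ setting, Lemma~\ref{lemma-root-to-bif} gives the unique basic-irreducible factor $h(x) := \prod_{i=0}^{b-1}(x - r(y_i))$ of $f$ containing $r(y)$. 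I then verify that $h$ has degree exactly $b$: the residues $r(y_i) \bmod p$ coincide with the Galois conjugates $y_i \bmod p = y^{p^i}$, which are the $b$ distinct roots of $\varphi$ in $\F_{p^b}$, so the $b$ linear factors $x - r(y_i)$ are pairwise distinct in $G[x]$, giving $\deg(h) = b$.

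Counting. Combining both directions, the assignment $r(y) \mapsto h$, sending each root of $f$ in $G$ to its associated degree-$b$ basic-irreducible factor, is surjective (converse direction) and every fiber has cardinality exactly $b$ (forward direction). Hence the number of degree-$b$ basic-irreducible factors of $f \bmod p^k$ equals the number of roots of $f$ in $G$ divided by $b$.

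The main technical point to be careful about is the Hensel step in the forward direction: I need Hensel's lemma in the Galois ring $G = G(p^k,b)$ rather than just in $\Z/\langle p^k\rangle$. This is routine because $G$ is still a local ring with maximal ideal $\langle p\rangle$ and residue field $\F_{p^b}$, and the standard Hensel argument (lifting a simple root in the residue field) applies verbatim. The other minor point, making sure that in the general case each root of $f$ in $G$ lands over exactly one irreducible $\varphi_i$, is handled by the coprimality of the $\varphi_i$'s mod $p$ and the corresponding Hensel decomposition of $f$.
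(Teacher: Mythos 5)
Your proof is correct and follows essentially the same route as the paper: the forward direction factors $h\bmod p\equiv\varphi$ over the residue field $\F_{p^b}$ and Hensel-lifts the $b$ simple roots into $G$ (exactly as the paper does), and the converse invokes Lemma~\ref{lemma-root-to-bif} to attach a unique degree-$b$ basic-irreducible factor to each root $r(y)\in G$; the paper has also already made the reduction to the single-$\varphi$ case a standing assumption for this subsection, which you re-justify explicitly. The only slight redundancy is that your verification that $\deg(h)=b$ (pairwise distinctness of the $r(y_i)$) is already part of what Lemma~\ref{lemma-root-to-bif} establishes, but spelling it out does no harm.
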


\begin{proof}
By Lemma \ref{lemma-root-to-bif} (\& uniqueness of Galois rings), for every root $r(y)\in G$ of $f$, we can associate a unique basic-irreducible factor of $f(x)$.

Conversely, let $h(x)=:\varphi(x)+p w(x)$ be a basic-irreducible factor of $f(x)$. It splits completely in $G$ (as, $h(x)\equiv \varphi \bmod p$; first factor in $G/\langle p\rangle$ and then Hensel lift to $G$).
So, $h$ has exactly $b$ roots in $G$, each of them is also a root of $f$ in $G$. 

Hence the theorem statement follows.
\end{proof}

{\bf Remark.} This `irreducible factor vs root' correspondence, for $f\bmod p^k$,  breaks down if $G$ is {\em not} a Galois ring. Eg. for the ring $\Z[y]/\langle p^k, y^2-p\rangle$?

\subsection{Counting roots in $G(p^k, b)$-- Wrapping up Thm.~\ref{thm2}}\label{sec-count-root}

In this section, we show how to count the roots of $f\equiv (\varphi_1 \varphi_2 \ldots \varphi_t)^e+p h(x) \bmod p^k$ in $G(p^k, b)$. 
Since $G := G(p^k, b)$ is a Galois ring, so $G/\langle p\rangle= \F_{p^b} =: \F_q$.
(Recall: $R=\Z/\langle p^k\rangle$.)

\smallskip
\textbf{Split ideals and zerosets in the Galois ring:} 
First, we will modify the definition of zerosets (Section~\ref{sec-pre}) to include zeros of $f$ in $G$. 
A {\em $G$-zeroset} of $f(x)\in R[x]$ will be defined as $\calZ_{G}(f):=\{ r\in G \mid f(r)\equiv 0 \bmod p^k \}$.
Similarly, for an ideal $I\subseteq \F_p[\bar{x}_l]$, its $\F_q$-zeroset is defined as $\calZ_{\F_q}(I):=\{ \bar{a}=(a_0,\ldots,a_l)\in (\F_q)^{l+1} \mid g(\bar{a})\equiv 0 \bmod p^k, \forall g \in I \}$.

The definition of triangular ideals, split ideals and maximal split ideals will remain exactly same (generators defined over $\F_p$, Section \ref{sec-pre}),
except that in the third condition for split ideals, zeroset will be over $\F_q$ instead of $\F_p$. But, they can now be seen as 
storing potential roots of $f(x)$ in $G$ (or, storing potential basic irreducible factors of $f\bmod p^k$).
The reason is, a root $r(y)\in G$ of $f\bmod p^k$ can be viewed as, $r(y)=r_0(y)+ p r_1(y)+ p^2 r_2(y)+\ldots+ p^{k-1} r_{k-1}(y)$, where each $r_i(y)\in G/\langle p\rangle=\F_q$.
So, the decomposition of formal variable $x=:x_0+p x_1+ p^2 x_2+\ldots+ p^{k-1} x_{k-1}$, now represents candidates for $r_0$, $r_1$, and so on, over $\F_q$.

A split ideal $I_l \subseteq \F_p[\bar{x}_l]$, defined as $I_l:=\langle h_0(x_0),\ldots, h_l(\bar{x}_l) \rangle$, now implicitly stores the candidates for $(r_0)$ in $h_0$, $(r_0, r_1)$ in $h_1$, and so on. These, in turn, give candidates for basic-irreducible factors of $f\bmod p^{l'}$ (some $l'\le k$).

In particular, when $I_l$ is a maximal split ideal, an $\bar{r}_l$ implicitly denote a basic-irreducible factor of $f\bmod p^{k}$. The number of such factors is $\deg(I_l)\cdot p^{k-l-1}/ b$ (Theorem~\ref{theorem-reduction} \& Lemma~\ref{lemma-MSI-root-represent}). 

Split ideals follow all the properties given in Section \ref{sec-pre}, just by replacing the fact that roots belong to $\F_q$ and not $\F_p$. 

\smallskip
\textbf{Description of the modified algorithm:} Algorithm~\ref{algo1}, to count roots in $R$, extends directly to count roots in $G$. 
The algorithm is exactly same except one change: to compute GCD (Steps $3$ and $13$), we now use the Frobenius polynomial $x^q-x$ instead of the prior $x^p-x$ 
(GCD computation implicitly stores the candidate roots, they are in $\F_q$ now).

So the algorithm works as follows:
\begin{enumerate}

\item It gets $f(x)\equiv (\varphi_1 \ldots \varphi_t)^e + p w(x) \bmod p^k$ as input, computes gcd $h_0(x):= \gcd(f(x), x^q-x)$ over $\F_p$. 
Since $x^q-x$, over $\F_p$, is the product of all irreducible factors of degree dividing $b$, we deduce: $h_0(x)= \varphi_1 \ldots \varphi_t \bmod p$; and define the first split ideal $I_0 :=\langle h_0 \rangle$. (Note-- We do not have access to $\varphi_i$'s themselves.)

\noindent{\bf Remark.}
The length $1$ split ideal stores all the roots of $f$ in $G/\langle p\rangle$, or all the basic irreducible factors of $f\bmod p$; as $h_0(x)= \varphi_1 \ldots \varphi_t$. 
Also, its degree is $t b$, which when divided by $b$, gives the count of the basic-irreducible factors of $f\bmod p$.

\item The algorithm then successively looks for the next precision candidates. It computes $h_l$ by taking gcd with $x^q-x$, and adds it to the previous ideal $I_{l-1}$ like before.

\item All the supporting algebraic algorithms and lemmas (given in appendix) work the same as before; since they are being passed the same parameters--- a split ideal, or a triangular ideal, or a polynomial over $R$.

\end{enumerate}

Thus, a similar proof of correctness and time complexity can be given as before. 

\begin{proof}[Proof of Theorem \ref{thm2}]
Consider a univariate $f(x)\bmod p^k$. As discussed in the beginning of this section, $f\bmod p^k$ can be efficiently factorized as $f\equiv \prod_{i=1}^{m} f_i \bmod p^k$, 
where each $f_i(x)$ is a power of a product of degree-$b_i$ irreducible polynomials mod $p$ (i.e.~of the form $\equiv (\varphi_1 \varphi_2 \ldots \varphi_t)^e+p h(x)$, where $\varphi_j$ is a degree-$b_i$ irreducible mod $p$).

On each such $f_i \bmod p^k$, we use Algorithm~\ref{algo1} with the new Frobenius polynomial $(x^{q_i}-x)$ ($q_i = p^{b_i}$), in Steps $3$ and $18$, as discussed above. 
Let the final list output, for $f_i\bmod p^k$, be $\mathcal{L}_i =: \{I_1(l_1, D_1),\ldots,I_n(l_n,D_n)\}$. 
Thus, we get the count on the $G(p^k,b_i)$-roots of $f_i\bmod p^k$ as $\Sigma_{j=1}^{n} D_j q_i^{k-l_j}$ (Lemma~\ref{lemma-MSI-root-represent}). 
Using Theorem \ref{theorem-reduction}, the number of the degree-$b_i$ basic-irreducible factors of $f\bmod p^k$ is $B_k(f_i):=$ $(1/b_i)\times\Sigma_{j=1}^{n} D_j q_i^{k-l_j}$.

Using Lemma \ref{lemma-f-is-power}, we get the count on the basic-irreducible factors of $f\bmod p^k$ as, $B_k(f)= \Sigma_{i=1}^{m} B_k(f_i)$. 

For the time complexity, only difference is the repeated-squaring to compute the reduced form of polynomial $x^{q_i}-x$ (Steps $3, 12$), it will take $b_i \log p$ operations instead of $\log p$ operations.
But $b_i \leq d$, so the algorithm runs in time $\poly(d, k\log p)$ (\& remains deterministic).

\end{proof}

\vspace{-1mm}
\section{Conclusion}
\vspace{-1mm}

There are well known efficient deterministic algorithms to count the number of roots/irreducible factors over prime characteristic. 
Surprisingly, not many results are known when the characteristic is a {\em prime-power}. The main difficulty is that the ring has {\em non}-unique factorization.


We give the first efficient deterministic algorithm to count the number of basic-irreducible factors modulo a prime-power. Restricting it to degree-one irreducibles, we get a deterministic polynomial-time algorithm to count the roots too. This is achieved by storing and improving roots (wrt precision) virtually using \emph{split ideals} (we do not have access to roots directly).
As a corollary: we can compute the Igusa zeta function deterministically, and we also get a deterministic algorithm to count roots in $p$-adic rings (resp.~formal power-series ring). 

Many interesting questions still remain to be tackled. For $p$-adic fields, there is only a randomized method to count the number of irreducible factors.
Analogously, the question of counting irreducible factors modulo a prime-power also remains open; no efficient method is known even in the randomized setting. The {\em ramified} roots seem to elude practical methods. 
On the other hand, the problem of actually {\em finding} an irreducible factor (resp.~a root) deterministically, seems much harder; it subsumes the analogous classic problem in prime characteristic.

\smallskip
\noindent
{\bf Acknowledgements. }
We thank Vishwas Bhargava for introducing us to the open problem of factoring $f \bmod p^3$ and the related prime-power questions. A.D.~thanks Sumanta Ghosh for the discussions.
N.S.~thanks the funding support from DST (DST/SJF/MSA-01/2013-14).
R.M.~would like to thank support from DST through grant DST/INSPIRE/04/2014/001799.


\bibliographystyle{alpha}
\bibliography{bibliography}

\appendix

\section{Preliminaries}
\label{appen-prelim}


{\bf Lifting factorization:} Below we state a lemma, originally due to Kurt Hensel~\cite{Hensel1918}, for $\mathcal{I}$-adic lifting of factorization of a given univariate polynomial. 
Over the years, Hensel's lemma has acquired many forms in different texts, version presented here is due to Zassenhaus~\cite{zassenhaus1969hensel}.

\begin{lemma}[Hensel's lemma \cite{Hensel1918}]
\label{hensel-lemma}
Let $R$ be a commutative ring with unity, denote the polynomial ring over it by $R[x]$. 
Let $\mathcal{I}\subseteq R$ be an ideal of ring $R$. Given a polynomial $f(x) \in R[x]$, suppose $f$ factorizes as 
\[f=gh \bmod \mathcal{I} , \] 
such that $gu+hv=1 \bmod  \mathcal{I}$ (for some $g,h,u,v \in R[x]$). 
Then, given any $l \in \mathbb{N}$, we can efficiently compute $g^*,h^*,u^*,v^* \in R[x]$, such that,
\[f=g^* h^* \bmod \mathcal{I}^{l} .\]
Here $g^*=g \bmod \mathcal{I}$, $h^*=h \bmod \mathcal{I}$ and $g^*u^*+h^*v^*=1 \bmod \mathcal{I}^{l}$ (i.e.~{\em pseudo}-coprime lifts). Moreover $g^*$ and $h^*$ are unique up to multiplication by a unit.
\end{lemma}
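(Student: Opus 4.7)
My plan is to prove Hensel's lemma by iterated lifting: start from the given factorization modulo $\mathcal{I}$ and double the precision at each step (quadratic lifting), so that only $\lceil \log_2 l\rceil$ iterations are needed to reach precision $\mathcal{I}^l$. At each step I would maintain the invariant that we have a quadruple $(g_j,h_j,u_j,v_j)\in R[x]^4$ with
\[ f \equiv g_j h_j \bmod \mathcal{I}^{2^j}, \qquad g_j u_j + h_j v_j \equiv 1 \bmod \mathcal{I}^{2^j},\qquad g_j\equiv g, \ h_j\equiv h \bmod \mathcal{I}, \]
initializing at $j=0$ from the hypothesis.

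For the lifting step, set $e := f - g_j h_j$, which by hypothesis lies in $\mathcal{I}^{2^j}\cdot R[x]$. Define
\[ g_{j+1} := g_j + v_j\, e,\qquad h_{j+1} := h_j + u_j\, e. \]
Then $g_{j+1}h_{j+1} = g_j h_j + (g_j u_j + h_j v_j)\,e + u_j v_j\,e^2$, and since $(g_j u_j + h_j v_j - 1)\,e$ and $u_j v_j e^2$ both lie in $\mathcal{I}^{2\cdot 2^j} = \mathcal{I}^{2^{j+1}}$, we conclude $f \equiv g_{j+1}h_{j+1} \bmod \mathcal{I}^{2^{j+1}}$. For the coprime lift, let $\delta := g_{j+1} u_j + h_{j+1} v_j - 1$; this lies in $\mathcal{I}^{2^j}$ because $g_{j+1} \equiv g_j, h_{j+1}\equiv h_j \bmod \mathcal{I}^{2^j}$. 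Taking $u_{j+1} := u_j(1-\delta)$ and $v_{j+1} := v_j(1-\delta)$ gives $g_{j+1} u_{j+1} + h_{j+1} v_{j+1} = (1+\delta)(1-\delta) = 1 - \delta^2 \equiv 1 \bmod \mathcal{I}^{2^{j+1}}$. After $\lceil\log_2 l\rceil$ doubling steps one truncates modulo $\mathcal{I}^l$ and outputs $(g^*,h^*,u^*,v^*)$. Each step is polynomial-time in the sizes of the inputs, so the whole lifting is efficient. If one wants $g^*$ to be monic (or of a prescribed leading shape), one replaces $v_j e$ by its remainder upon dividing by $h_j$ (or by $g_j$) at each step, which can be done whenever $g_j$ or $h_j$ has a unit leading coefficient; I would discuss this only if the paper's usage demands it.

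For uniqueness up to a unit, suppose $f = g^* h^* = \tilde g\, \tilde h \bmod \mathcal{I}^l$ are two lifts with the same reductions $g,h$ modulo $\mathcal{I}$ and with pseudo-coprime witnesses. Then $g^*\tilde h \equiv g^* h^* \equiv \tilde g\,\tilde h \bmod \mathcal{I}^l$; multiplying by the Bezout witness and telescoping the coprimality identity yields that $g^* \equiv \tilde g \cdot w \bmod \mathcal{I}^l$ for some $w\in R[x]/\mathcal{I}^l$ that reduces to a unit modulo $\mathcal{I}$, and hence is itself a unit in $R[x]/\mathcal{I}^l$ (by a geometric-series argument in $\mathcal{I}/\mathcal{I}^l$). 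The corresponding relation $h^* = w^{-1}\tilde h$ follows.

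The main obstacle I anticipate is the \emph{uniqueness} clause: the symmetric decomposition $g^* = w\tilde g,\ h^* = w^{-1}\tilde h$ is clean only when one can perform division (or at least canonical reduction) by one of the factors inside $R[x]/\mathcal{I}^l$. When neither $g$ nor $h$ has a unit leading coefficient (as could happen in a general commutative ring $R$), uniqueness holds only in the right equivalence class of associates modulo the ideal, and care is needed to formulate ``unique up to a unit'' correctly. Aside from this bookkeeping, the lifting itself is entirely mechanical and follows from the two identities displayed above.
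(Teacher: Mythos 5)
The paper does not prove Lemma~\ref{hensel-lemma}: it is stated as a classical result with citations to Hensel~\cite{Hensel1918} and Zassenhaus~\cite{zassenhaus1969hensel}, so there is no ``paper's own proof'' to compare against. Your quadratic-lifting argument is precisely the Zassenhaus version of the lemma that the paper references, and the core lifting identities you display are correct: $e\in\mathcal{I}^{2^j}R[x]$, so $(g_ju_j+h_jv_j-1)e$ and $u_jv_je^2$ lie in $\mathcal{I}^{2^{j+1}}R[x]$, giving $f\equiv g_{j+1}h_{j+1}\bmod\mathcal{I}^{2^{j+1}}$; and $\delta\in\mathcal{I}^{2^j}$ gives $g_{j+1}u_{j+1}+h_{j+1}v_{j+1}=1-\delta^2\equiv 1\bmod\mathcal{I}^{2^{j+1}}$.

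Two weak spots. First, the efficiency claim is not automatic from the raw lifting formulas: $g_{j+1}=g_j+v_je$ and $h_{j+1}=h_j+u_je$ can have degrees that grow at each round, so without the degree-control step (reducing $v_je$ mod $g_j$, etc.) the iterate sizes can blow up; you mention this fix only as an optional aside, but it is needed for the ``efficiently compute'' clause, not merely for a monic normalization. Second, the uniqueness sketch has a non-sequitur: the displayed chain $g^*\tilde h\equiv g^*h^*\equiv\tilde g\tilde h\bmod\mathcal{I}^l$ does not follow from the hypotheses---the first congruence $g^*\tilde h\equiv g^*h^*$ already presupposes $\tilde h\equiv h^*$, which is what uniqueness is meant to establish. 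The standard argument instead uses the Bezout witness directly: from $g^*u^*+h^*v^*\equiv 1$ and $g^*h^*\equiv\tilde g\tilde h$ one gets $\tilde g\equiv \tilde g(g^*u^*+h^*v^*)\equiv g^*(\tilde g u^*+\tilde h v^*)$, and then one shows the cofactor $\tilde gu^*+\tilde hv^*$ is a unit (it reduces to $gu+hv=1$ mod $\mathcal{I}$, and $\mathcal{I}/\mathcal{I}^l$ is nilpotent). Repairing the sketch along these lines would make the uniqueness claim rigorous.
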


Using Hensel's lemma, for the purpose of counting roots (resp.~basic-irreducible factors), a univariate polynomial $f(x)\in \mathbb{Z}[x]$ can be assumed to be a power of an irreducible modulo $p$.

\begin{lemma}
\label{lemma-f-is-power}
By the fundamental theorem of algebra, a univariate $f(x)\in \mathbb{Z}[x]$ factors uniquely, over $\mathbb{F}_p$, into coprime powers as,  
$f \equiv \prod_{i=1}^{m} {\varphi_i}^{e_i}$ ,
where each $\varphi_i \in \mathbb{Z}[x]$ is irreducible mod $p$ and $m, e_i \in \mathbb{N}$. Then, for all $k\in\N$,
\begin{enumerate}
\item $f$ factorizes mod $p^k$ as $f=g_1 g_2 \ldots g_m$, where $g_i$'s are mutually co-prime mod $p^k$ and $g_i\equiv {\varphi_i}^{e_i} \bmod p$, for all $i\in [m]$.

\item any basic-irreducible factor of $f(x) \bmod p^k$ is a basic-irreducible factor of a unique $g_j \bmod p^k$, for some $j\in [m]$. 
Let $B_k(h)$ denote the number of (coprime) basic-irreducible factors of $h(x) \bmod p^k$. Then,
$ B_k(f)= \Sigma_{i=1}^{m} B_k(g_i)$ .

\item any root of $f$ mod $p^k$ is a root of a unique $g_i$ mod $p^k$. Let $N_k(h)$ denote the number of (distinct) roots of $h(x) \bmod p^k$. Then,
$ N_k(f)= \Sigma_{i=1}^{m} N_k(g_i) $.
\end{enumerate}
\end{lemma}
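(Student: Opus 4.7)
The plan is to derive all three parts as iterated applications of Hensel's Lemma to the mod-$p$ factorization $f \equiv \prod_{i=1}^m \varphi_i^{e_i}$, since the distinct monic irreducibles $\varphi_i$ yield pairwise coprime prime-powers $\varphi_i^{e_i}$ in $\F_p[x]$. For part (1), I would peel off the factors one at a time: write $f \equiv \varphi_1^{e_1} \cdot F_1 \bmod p$ with $F_1 := \prod_{i\ge 2} \varphi_i^{e_i}$. These two factors are coprime in $\F_p[x]$, so there exist $u, v \in \F_p[x]$ with $\varphi_1^{e_1} u + F_1 v \equiv 1 \bmod p$. Hensel's Lemma~\ref{hensel-lemma} with $\mathcal{I} = \langle p \rangle$ lifts the factorization to $f \equiv g_1 \cdot h_1 \bmod p^k$, where $g_1 \equiv \varphi_1^{e_1} \bmod p$, $h_1 \equiv F_1 \bmod p$, and $g_1 u^* + h_1 v^* \equiv 1 \bmod p^k$ (pseudo-coprimality). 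Recursing on $h_1$ (whose mod-$p$ reduction again splits coprimely as $\varphi_2^{e_2}$ times the rest) produces $g_2, \ldots, g_m$ with the required properties; pairwise coprimality mod $p^k$ is maintained at each step via the Hensel-lifted Bezout identities.

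For part (2), let $h$ be a basic-irreducible factor of $f \bmod p^k$. Its reduction $h \bmod p$ is irreducible in $\F_p[x]$ and divides $\prod_i \varphi_i^{e_i}$, so unique factorization in $\F_p[x]$ forces $h \equiv \varphi_j \bmod p$ for a unique index $j$. Since $f \equiv g_1 \cdots g_m \bmod p^k$ with pairwise coprime factors, I would use the Bezout identities $g_i u_{ij} + g_j v_{ij} \equiv 1 \bmod p^k$ for $i \ne j$ to conclude that $h \mid g_j$ in $(\Z/\langle p^k \rangle)[x]$; conversely every basic-irreducible factor of some $g_j$ is automatically a basic-irreducible factor of $f$. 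This yields the claimed partition and the additive count $B_k(f) = \sum_i B_k(g_i)$. Part (3) proceeds similarly: a root $r$ of $f \bmod p^k$ gives $\prod_i g_i(r) \equiv 0 \bmod p^k$, and reducing mod $p$ shows $\varphi_i(r) \equiv 0 \bmod p$ for exactly one $i$ (by coprimality of the $\varphi_j$), so $g_j(r) \equiv \varphi_j(r)^{e_j}$ is a unit mod $p$, hence a unit in $\Z/\langle p^k\rangle$, for every $j \ne i$. Cancelling these units forces $g_i(r) \equiv 0 \bmod p^k$; and any root of a single $g_i$ is trivially a root of $f$.

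The main obstacle, though modest, is converting the \emph{pseudo-coprimality} mod $p^k$ produced by Hensel's Lemma into a genuine Bezout-style divisibility principle: if $h$ divides a product in $(\Z/\langle p^k \rangle)[x]$ and is coprime (in the pseudo sense) to all but one factor, then $h$ divides that factor. This is a routine lift of the classical argument, but it must be set up carefully in the appropriate polynomial ring and extended inductively from the two-factor coprime split delivered by Hensel's Lemma to the full $m$-factor split used in parts (2) and (3).
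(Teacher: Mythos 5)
Your proposal is correct and follows essentially the same route as the paper: Hensel-lift the coprime mod-$p$ factorization $f \equiv \prod \varphi_i^{e_i}$ to a coprime factorization $f \equiv g_1\cdots g_m \bmod p^k$, then observe that any basic-irreducible factor (resp.\ root) reduces mod $p$ to a $\varphi_i$ (resp.\ a root of some $\varphi_i$) and use the lifted Bezout identities to force divisibility by (resp.\ membership in the zero set of) a unique $g_i$. The paper states the coprimality-implies-divisibility step more tersely and waves off part (3) as ``similar reasoning,'' whereas you spell out the Bezout and unit-cancellation arguments explicitly; your closing worry about ``pseudo-coprimality'' is harmless, since the Hensel-lifted identity $g^*u^* + h^*v^* \equiv 1 \bmod p^k$ is genuine coprimality in $(\Z/\langle p^k\rangle)[x]$ and supports the standard divisibility argument directly.
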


\begin{proof}
We can apply Hensel's lemma by taking ring $R:=\mathbb{Z}$ and ideal $\mathcal{I}:=\langle p\rangle$. The co-prime factorization of $f \bmod p$ lifts to a 
unique coprime factorization $f\equiv g_1 g_2\ldots g_m \bmod p^k$, for any $k \in \mathbb{N}$ and $g_i \equiv {\varphi_i}^{e_i} \bmod p$.

Any basic-irreducible factor $h(x)$ of $f(x) \bmod p^k$ has to be $h\equiv \varphi_i \bmod p$ for some $i \in [m]$; otherwise, $h$ will become reducible mod $p$. 
Since $g_i$'s are co-prime and $h|f \bmod p^k$, $h$ must divide a unique $g_i$. So, any basic-irreducible factor $h$ of $f(x) \bmod p^k$ is a basic-irreducible factor of a unique $g_j \bmod p^k$.
Clearly, any basic-irreducible factor of a $g_i$ is also a basic-irreducible factor of $f \bmod p^k$. This proves $B_k(f)= \Sigma_{i=1}^{m} B_k(g_i)$.

The third part follows from a similar reasoning as the second part.
\end{proof}


{\bf Root finding over a finite field:} The following theorem, called CZ in this paper and given by Cantor-Zassenhaus~\cite{cantor1981new}, finds all roots of a given univariate polynomial over a finite field in randomized polynomial time. (Equivalently, it finds all irreducible factors as well.)

\begin{theorem}[Cantor-Zassenhaus Algo (CZ)]
\label{thm-CZ}
Given a univariate degree $d$ polynomial $f(x)$ over a finite field $\mathbb{F}_q$, all roots of $f$ in $\mathbb{F}_q$ can be found in randomized poly($d,\log{q}$) time.
\end{theorem}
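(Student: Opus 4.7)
The plan is to reduce root finding to equal-degree factorization over $\F_q$ and then apply the classical randomized splitting trick.

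First I would isolate the set of $\F_q$-roots of $f$. Compute $x^q \bmod f(x)$ by repeated squaring ($O(\log q)$ multiplications of polynomials of degree ${<}d$ in $\F_q[x]$), and then set $g(x) := \gcd(f(x), x^q - x)$ over $\F_q$. Because $x^q - x = \prod_{a \in \F_q}(x-a)$ splits completely over $\F_q$, this gcd is precisely $\prod_{r \in \F_q,\ f(r)=0}(x-r)$. Hence $g$ is squarefree, all its roots lie in $\F_q$, and the root-finding problem for $f$ reduces to fully factoring $g$ in $\F_q[x]$. All steps so far are deterministic and cost $\poly(d,\log q)$.

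Second I would recursively split $g$ using a random shift. Assume $q$ is odd (the characteristic-two case is handled analogously by replacing $(q-1)/2$-powering with the additive trace $\sum_{0\le i<m} x^{2^i}$). Pick $\alpha \in \F_q$ uniformly at random, compute $s_\alpha(x) := (x+\alpha)^{(q-1)/2} \bmod g(x)$ by repeated squaring, and form $h := \gcd(g(x),\, s_\alpha(x)-1)$. For each root $r$ of $g$, the value $s_\alpha(r)$ is $+1$ if $r+\alpha$ is a nonzero quadratic residue, $-1$ if a nonzero non-residue, and $0$ if $r+\alpha=0$. So $h$ collects exactly those roots $r$ of $g$ for which $r+\alpha$ is a nonzero QR. A direct counting argument shows that for any fixed pair of distinct roots $r_1\neq r_2$, a uniformly random $\alpha$ sends $(r_1+\alpha,\, r_2+\alpha)$ to opposite QR-classes with probability at least $(q-1)/(2q) \ge 1/4$, so with constant probability $h$ is a proper nontrivial factor of $g$.

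Third I would recurse on $h$ and $g/h$, with an expected $O(\log d)$ levels of recursion and at most $d-1$ total successful splits. Each invocation costs $\poly(d,\log q)$, dominated by the repeated-squaring step, so the total expected randomized running time is $\poly(d,\log q)$. At the leaves of the recursion $g$ is a product of linear factors $x-r_i$, and reading off the $r_i$ gives all the roots of $f$ in $\F_q$.

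The main obstacle is the splitting-probability analysis: one must show that for any two fixed distinct roots $r_1,r_2 \in \F_q$, a uniformly random $\alpha \in \F_q$ separates them with constant probability. This reduces to the elementary fact that the map $\alpha \mapsto \bigl(\chi(r_1+\alpha), \chi(r_2+\alpha)\bigr)$, where $\chi$ is the quadratic character extended by $\chi(0)=0$, takes each of the four sign patterns on a roughly $1/4$-fraction of $\alpha$, after accounting for the $O(1)$ values at which either argument vanishes. The characteristic-two trace variant and the degenerate case $r+\alpha=0$ only perturb the bound by lower-order terms, so constant success probability per round is preserved.
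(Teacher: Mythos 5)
The paper does not prove Theorem~\ref{thm-CZ}; it states it as a black box and cites Cantor--Zassenhaus (1981). So there is no in-paper proof to compare against, and the question is only whether your reconstruction of the classical argument is sound. It is: you correctly reduce to the squarefree product of linear factors via $\gcd(f,x^q-x)$, then split recursively with $\gcd(g,(x+\alpha)^{(q-1)/2}-1)$ for random $\alpha$, note the additive-trace variant in characteristic two, and bound the recursion. Your separation probability $(q-1)/(2q)$ is exactly right: for $c:=r_1-r_2\neq 0$ one has $\sum_{\beta\in\F_q}\chi(\beta)\chi(\beta+c)=-1$, there are two values of $\alpha$ making one factor vanish, and solving the resulting system gives $(q-1)/2$ values of $\alpha$ with $\chi(r_1+\alpha)\chi(r_2+\alpha)=-1$; and the two degenerate $\alpha$'s only help (they reveal a root outright). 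So constant success probability per round, expected $O(\log d)$ rounds, at most $d-1$ successful splits, each costing $\poly(d,\log q)$ (dominated by modular exponentiation), giving the claimed randomized $\poly(d,\log q)$ total. This is precisely the Cantor--Zassenhaus root-finding argument that the paper invokes by reference.
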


\subsection{Properties of Galois rings-- Analogues of finite fields}

A {\em Galois ring}, of characteristic $p^k$ and size $p^{kb}$, is denoted by $G(p^k,b)$ (where $p$ is a prime, $k,b \in \mathbb{N}$). 
It is known that two Galois rings of same characteristic and size are isomorphic to each other.
We will define Galois ring $G(p^k,b)$ as the ring $\G :=\mathbb{Z}[y]/\langle p^k, \varphi(y)\rangle$, where $\varphi(y)\in \mathbb{Z}[y]$ is an irreducible mod $p$ of degree $b$ \cite{mcdonald1974finite}. 
Let us prove some useful properties of $\G$ below.

\begin{claim}[Roots of $\varphi$]
\label{claim1}
Let $\varphi'(x)\in \mathbb{Z}[x]$ be any irreducible mod $p$ of degree $b$.
There are $b$ distinct roots of $\varphi'(x)$ in $\G$.
Let $r$ denote one of the roots, then all other roots, modulo $p$, are of the form $r^{p^i}$ ($i \in \{0,\ldots,b-1 \}$). 
\end{claim}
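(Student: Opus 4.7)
The plan is to prove Claim~\ref{claim1} in three short steps: first identify the residue field of $\G$, then split $\varphi'$ over that field, and finally lift via Hensel.

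First I would observe that $\G/\langle p\rangle = \mathbb{F}_p[y]/\langle \varphi(y)\rangle$, which is the finite field $\mathbb{F}_{p^b} =: \mathbb{F}_q$ since $\varphi(y)$ is irreducible mod $p$ of degree $b$. Over $\mathbb{F}_q$, the polynomial $\varphi'(x)\bmod p$ is an irreducible of degree $b$; since $\mathbb{F}_q$ is the unique degree-$b$ extension of $\mathbb{F}_p$, it contains the splitting field of every irreducible over $\mathbb{F}_p$ of degree dividing $b$. Hence $\varphi'(x) \equiv \prod_{i=0}^{b-1}(x-r_0^{p^i})\bmod p$ in $\mathbb{F}_q[x]$, where $r_0 \in \mathbb{F}_q$ is any single root and the Frobenius conjugates $r_0,\,r_0^p,\ldots,r_0^{p^{b-1}}$ are the complete set of roots. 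These roots are pairwise distinct because irreducibility in characteristic $p$ forces separability (so $\gcd(\varphi',(\varphi')')=1$ in $\mathbb{F}_p[x]$, and hence in $\mathbb{F}_q[x]$).

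Next I would lift this factorization to $\G[x]$. Viewing $\varphi'(x)$ as a polynomial in $\G[x]$, its reduction mod $p$ factors in $(\G/\langle p\rangle)[x] = \mathbb{F}_q[x]$ as a product of pairwise coprime linear factors $(x-r_0^{p^i})$. Applying Hensel's lemma (Lemma~\ref{hensel-lemma}) iteratively, separating one linear factor from the product of the remaining ones at each stage (these are coprime mod $p$ because all roots $r_0^{p^i}$ are distinct mod $p$), yields a unique lifted factorization
\[
\varphi'(x) \;=\; \prod_{i=0}^{b-1} (x - \tilde r_i) \quad\text{in } \G[x],
\]
with $\tilde r_i \equiv r_0^{p^i} \bmod p$ for each $i$.

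Finally, the conclusion is immediate. The lifted roots $\tilde r_0,\ldots,\tilde r_{b-1}\in\G$ are pairwise distinct because their reductions mod $p$ already are, so $\varphi'$ has (at least) $b$ distinct roots in $\G$; it cannot have more because any additional root would give an extra linear factor contradicting the degree. Setting $r := \tilde r_0$, we have $r\equiv r_0 \bmod p$, and thus $\tilde r_i \equiv r_0^{p^i} \equiv r^{p^i} \bmod p$ for every $i \in \{0,\ldots,b-1\}$, which is precisely the asserted Frobenius-conjugate form. The only conceptual step is the identification $\G/\langle p\rangle\cong \mathbb{F}_{p^b}$; everything else is a direct invocation of Hensel's lemma, so no real obstacle is expected.
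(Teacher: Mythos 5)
Your proof is correct and follows essentially the same route as the paper's: identify $\G/\langle p\rangle\cong\mathbb{F}_{p^b}$, split $\varphi'$ into its Frobenius-conjugate roots there, and Hensel-lift each simple root to $\G$. You simply spell out the iterative coprime-factorization step more explicitly than the paper does.
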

\begin{proof}
$\G/\langle p\rangle$ is isomorphic to the finite field of degree $b$ over $\mathbb{F}_p$.
So, irreducible $\varphi'(x)\in \mathbb{F}_p[x]$ has exactly $b$ roots in $\G/\langle p\rangle$ \cite[Ch.2]{lidl1994introduction}. 
By Hensel Lemma \ref{hensel-lemma}, roots in $\G/\langle p\rangle$ can be lifted to $\G$ uniquely. 
Hence, $\varphi'(x)$ has exactly $b$ distinct roots in $\G$. Modulo $p$, they are of the form $r^{p^i}$ ($i \in \{0,\ldots,b-1\}$) for a root $r$ (lifted from roots in $\G/\langle p \rangle$).
\end{proof}

Using Claim~\ref{claim1}, denote roots of $\varphi(x)$ as $y_0,\ldots,y_{b-1}$; here $y_i \equiv y_0^{p^i} \bmod p$ for all $i \in \{0,\ldots,b-1 \}$. 
For all roots $y_j$, $\G \equiv R[y_j]$. In other words, $y_j$ generate the extension $\G$ over $R$. 

\begin{claim}[Symmetries of $\G$]
\label{claim2}
There are exactly $b$ automorphisms of $\G$ fixing $R = \mathbb{Z}/ \langle p^k \rangle$, denoted by $\psi_j$ ($j\in \{0,\ldots,b-1\}$). 
Each of these automorphisms can be described by a map taking $y_0$ to one of the roots of $\varphi(x)$ and fixing $R$.
Wlog, assume $\psi_j$ maps $y_0 \to y_j$. 
	
Moreover, for all $j$ coprime to $b$, $\psi_j$ fixes $R$ and nothing else.
\end{claim}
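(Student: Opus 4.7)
The plan is to analyze $R$-algebra endomorphisms of $\G$ via their action on the generator $y_0$, construct the $b$ automorphisms explicitly, identify their group structure as cyclic of order $b$, and then use induction on $k$ to pin down the fixed subring. First I would observe that since $\G = R[y_0]$, any $R$-algebra endomorphism $\psi$ is determined uniquely by $\psi(y_0)$; moreover $\psi(\varphi(y_0)) = \varphi(\psi(y_0)) = 0$ forces $\psi(y_0)$ to be one of the $b$ roots $y_0,\ldots,y_{b-1}$ of $\varphi$ in $\G$ enumerated by Claim~\ref{claim1}. Conversely, for each such root $y_j$, the universal property of $R[x]/\langle\varphi\rangle$ yields a well-defined $R$-algebra homomorphism $\psi_j: y_0 \mapsto y_j$ whose image contains both $R$ and $y_j$, hence contains $R[y_j] = \G$; since $\G$ is finite, $\psi_j$ is bijective, i.e., an automorphism. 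This pins down exactly $b$ automorphisms $\psi_0,\ldots,\psi_{b-1}$.

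Next I would compute the group law. The composition $\psi_i \circ \psi_j(y_0) = \psi_i(y_j)$ reduces modulo $p$ to $\bar{y}_0^{p^{i+j}} = \overline{y_{(i+j) \bmod b}}$, and since distinct roots of $\varphi$ in $\G$ reduce to distinct roots in $\F_{p^b}$ that lift uniquely by Hensel's Lemma~\ref{hensel-lemma}, this forces $\psi_i(y_j) = y_{(i+j) \bmod b}$. Hence $\psi_i \circ \psi_j = \psi_{(i+j) \bmod b}$, so $\{\psi_j\}$ is cyclic of order $b$ generated by $\psi_1$. For any $j$ coprime to $b$, $\psi_j = \psi_1^j$ also generates this group, so the fixed subring of $\psi_j$ coincides with the common fixed subring $\G^{\langle \psi_1\rangle}$.

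Finally I would show by induction on $k$ that $\G^{\langle \psi_1\rangle} = R$. The base case $k=1$ is the classical fact that $\F_p$ is the fixed subfield of Frobenius in $\F_{p^b}$. For $k \ge 2$, if $\psi_1(\alpha) = \alpha$, then reducing modulo $p$ gives $\bar\alpha \in \F_p$, so I choose a lift $r_0 \in R$ and write $\alpha = r_0 + p\beta$ for some $\beta \in \G$. Then $p(\psi_1(\beta) - \beta) = 0$, forcing $\psi_1(\beta) - \beta$ into the annihilator of $p$ in $\G$, which equals $p^{k-1}\G$ (since $\G$ is free of rank $b$ over $R$). Viewing $\beta$ in $\G/p^{k-1}\G \cong G(p^{k-1}, b)$, the inductive hypothesis produces $r_1 \in R$ with $\beta \equiv r_1 \pmod{p^{k-1}\G}$, and substituting back gives $\alpha = r_0 + pr_1 \in R$ (since $p^k = 0$ in $\G$). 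The main obstacle is precisely this inductive step: one must carefully verify that $\mathrm{Ann}_{\G}(p) = p^{k-1}\G$ and that $\G/p^{k-1}\G$ is canonically the Galois ring $G(p^{k-1}, b)$, so that the inductive hypothesis applies; everything else reduces cleanly to Claim~\ref{claim1} and Hensel's lemma.
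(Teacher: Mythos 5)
Your proposal is correct and follows essentially the same strategy as the paper's proof: both derive the $b$ automorphisms from the action on $y_0$ (which is forced to land on a root of $\varphi$), and both establish the fixed-subring statement by peeling off one power of $p$ at a time, with the classical Frobenius fixed-field fact in $\F_{p^b}$ as the base case. You are somewhat tidier in two spots where the paper is terse: you invoke the universal property and finiteness to get bijectivity rather than verifying the homomorphism axioms by hand, and you make explicit both the cyclic group law $\psi_i\circ\psi_j=\psi_{(i+j)\bmod b}$ (to reduce the coprime-$j$ case to $j=1$) and the annihilator identity $\mathrm{Ann}_{\G}(p)=p^{k-1}\G$, both of which the paper's induction step uses only implicitly.
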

\begin{proof}

Since coefficients of $\varphi(x)$ belong to $R$, an automorphism fixing $R$ should map the root $y_0=:y$ to another of its roots $y_j$. 
We only need to show that $\psi_j$ is an automorphism (it is a valid map because $y_j \in \G$) 

Writing elements of $\G$ in terms of $y_0$ (i.e.~$\G \cong R[y_0]$), it can be verified that $\psi_j(ab)=\psi_j(a)\psi_j(b)$ and $\psi_j(a+b)=\psi_j(a)+\psi_j(b)$, so $\psi_j$ is a homomorphism.

Similarly, if $\psi_j(g) = 0$, writing $g$ in terms of $y_0$, we get that $g=0$. So, kernel of $\psi_j$ is the set $\{0\}$; thus, it is an isomorphism.

\smallskip
For the moreover part, let $\psi_j$ be such that $j$ is coprime to $b$. 
We will show a stronger statement by induction: for any $i\leq k-1$, if $a(y_0)=\psi_j (a(y_0))$ in $\G/\langle p^i \rangle$, then $a(y_0) \in \mathbb{Z}/\langle p^i\rangle$.

{\em Base case:} If $i=1$ and $j=1$, then $a(y_0)=\psi_1(a(y_0)) \bmod p \Rightarrow a(y_0)={a(y_0)}^p \bmod p$. It means $a(y_0) \in \mathbb{Z}/\langle p\rangle$.
 
If $j$ is coprime to $b$, then $\psi_j$ generates $\psi_1$ modulo $p$. So, $a(y_0) = \psi_j(a(y_j)) \bmod p$ implies that, $a(y_0) \bmod p =: a_0 \in \mathbb{Z}/\langle p \rangle$.

This argument also proves: for any $i\leq k$, if $a(y_0)=a(y_j)$ in $\G/\langle p^i \rangle$, then $a(y_0) \in \mathbb{F}_p$ (in other words, $a(y_0)$ is $y_0$ free).

{\em Induction step:} Let us assume that $a(y_0) = \psi_j(a(y_0))$ in $\G/\langle p^i \rangle$. 
By the previous argument, $a(y_0) = a_0 + p a'(y_0)$, where $a_0 \in \mathbb{Z}/\langle p \rangle$ and $a'(y_0) \in \G/\langle p^{i-1}\rangle $. 

From the definition, $a(y_0)= \psi_j (a(y_0))$ iff $a'(y_0)=\psi_j(a'(y_0))$ in $\G/\langle p^{i-1}\rangle$. 
By induction hypothesis, the latter is equivalent to $a'(y_0) \in \mathbb{Z}/\langle p^{i-1}\rangle$. 
So, $a(y_0) \in \mathbb{Z}/\langle p^i\rangle$.

\smallskip
Hence, the only fixed elements under the map $\psi_j$ ($j$ coprime to $b$) are integers; in $\mathbb{Z}/\langle p^k\rangle$.
\end{proof}

\section{Proofs of Section~\ref{sec-pre}}
\label{app-split-proofs}


\begin{proof}[Proof of Lemma~\ref{lemma-split-ideal-is-recursive}]
It is enough to show the lemma for $j=l-1$. It is easy to observe that $I_{l-1}$ is triangular.

Looking at the second condition for being a split ideal, $|\calZ_{\F_p}(I_{l-1})|  \leq \prod_{i=0}^{l-1} \deg_{x_i}(h_i)$ follows because a degree $d\ge1$ polynomial 
can have at most $d$ roots in $\mathbb{F}_p$. 

To show equality, notice that for any $\bar{a}=(a_0,\ldots,a_{l-1}) \in \mathcal{Z}_{\mathbb{F}_p}(I_{l-1})$, $\deg_{x_l}(h_l(\bar{a},x_l))$ is bounded by $\deg_{x_l}(h_l)$.
This implies $h_l(\bar{a},x_l)$ can have at most $\deg_{x_l}(h_l)$ roots in $\F_p$.
If $|\calZ_{\F_p}(I_{l-1})| < \prod_{i=0}^{l-1} \deg_{x_i}(h_i)$ then $|\calZ_{\F_p}(I_{l})| < \deg_{x_l}(h_l) \cdot\prod_{i=0}^{l-1} \deg_{x_i}(h_i)$, contradicting that $I_l$ is a split ideal. 
	\footnote{This argument also shows that every $\F_p$-zero of $I_{l-1}$ `extends' to exactly $\deg_{x_l}(h_l)$ many $\F_p$-zeros of $I_l$.} 

For the third condition, since $I_l$ is a split ideal, for any $(a_0,\ldots,a_{l-1}) \in \calZ_{\F_p}(I_{l-1})$,
$f(a_0+p a_1+\ldots+p^l a_l)\equiv 0 \bmod p^{l+1} \Rightarrow f(a_0+p a_1+\ldots+p^{l-1} a_{l-1})\equiv 0 \bmod p^{l}.$
\end{proof}
%
%

Lemma~\ref{lemma-split-ideal-str} shows that a split ideal $I$ can be decomposed in terms of ideals $I_{\bar{a}} := \langle x_0-a_0,\ldots,x_l-a_l \rangle$, where $\bar{a} =: (a_0,\ldots,a_l)$ is a root of $I$.
Before we prove this structural lemma, let us see some properties of these ideals $I_{\bar{a}}$'s.

\begin{claim} \label{properites_maximal_ideal}
Let $I$ be a split ideal.
\begin{enumerate}
	\item For any ideal $I_{\bar{a}}$, quotient $\F_p[x_0,\ldots,x_l]/I_{\bar{a}} \cong \F_p$ is a {\em field}. 
	\item $I_{\bar{a}}$ and $I_{\bar{b}}$ are {\em coprime} for any two distinct roots $\bar{a}, \bar{b} \in \calZ_{\F_p}(I)$. This is because there exists $i$, for which $a_i \neq b_i$; yielding $(a_i-b_i)^{-1} \left((x_i-b_i)-(x_i-a_i)\right) = 1$ in the sum-ideal $I_{\bar{a}}+I_{\bar{b}}$.
	\item $I_{\bar{a}} \cap I_{\bar{b}} = I_{\bar{a}} I_{\bar{b}}$ for any two distinct roots $\bar{a}, \bar{b} \in \calZ_{\F_p}(I)$. 
	      It follows because there exist $r_{\bar{a}} \in I_{\bar{a}}$ and $r_{\bar{b}}\in I_{\bar{b}}$, s.t., $r_{\bar{a}} + r_{\bar{b}} = 1$. 
	      So, $r \in I_{\bar{a}} \cap I_{\bar{b}} \Rightarrow r = r(r_{\bar{a}}+r_{\bar{b}}) \in I_{\bar{a}} I_{\bar{b}}$. On the other hand, $I_{\bar{a}} I_{\bar{b}} \subseteq I_{\bar{a}} \cap I_{\bar{b}}$ follows from the definition of the product-ideal.
	\item Generalizing the previous point--- for a set $A$ of distinct roots $\bar{a}$'s, $\bigcap_{\bar{a}\in A} I_{\bar{a}} = \prod_{\bar{a}\in A} I_{\bar{a}}$.
\end{enumerate}
\end{claim}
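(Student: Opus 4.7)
The plan is to establish the four listed properties in order, treating each as a direct instance of the Chinese remainder machinery in the multivariate polynomial ring $\F_p[x_0,\ldots,x_l]$.

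For part $1$, I would identify $I_{\bar{a}}$ as the kernel of the evaluation homomorphism $\mathrm{ev}_{\bar{a}}\colon \F_p[x_0,\ldots,x_l] \to \F_p$ sending $x_i \mapsto a_i$. Surjectivity is immediate (constants map to themselves), and any $g \in \ker(\mathrm{ev}_{\bar{a}})$ can, by expanding each $x_i$ as $a_i + (x_i - a_i)$ and collecting, be written as an $\F_p[\bar{x}_l]$-combination of the $(x_i - a_i)$'s, hence lies in $I_{\bar{a}}$. The first isomorphism theorem then gives $\F_p[\bar{x}_l]/I_{\bar{a}} \cong \F_p$, so the quotient is a field (and $I_{\bar{a}}$ is maximal).

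For parts $2$ and $3$, I would formalize the arguments already indicated inline. Distinctness of $\bar{a}$ and $\bar{b}$ forces a coordinate $i$ with $a_i - b_i \in \F_p^*$, and the identity $(x_i - b_i) - (x_i - a_i) = a_i - b_i$ exhibits a unit of $\F_p$ inside $I_{\bar{a}} + I_{\bar{b}}$, proving comaximality. Once comaximality is known, pick $r_{\bar{a}} \in I_{\bar{a}}$ and $r_{\bar{b}} \in I_{\bar{b}}$ with $r_{\bar{a}} + r_{\bar{b}} = 1$; then any $r \in I_{\bar{a}} \cap I_{\bar{b}}$ satisfies $r = r r_{\bar{a}} + r r_{\bar{b}} \in I_{\bar{a}} I_{\bar{b}}$, while the reverse inclusion $I_{\bar{a}} I_{\bar{b}} \subseteq I_{\bar{a}} \cap I_{\bar{b}}$ is immediate from the definition of the product-ideal.

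For part $4$, the natural route is induction on $|A|$, with $|A| = 2$ being part $3$. For the step, write $A = A' \sqcup \{\bar{b}\}$ and set $J := \prod_{\bar{a} \in A'} I_{\bar{a}}$; by induction $\bigcap_{\bar{a} \in A'} I_{\bar{a}} = J$, so it suffices to establish $J \cap I_{\bar{b}} = J \cdot I_{\bar{b}}$, which is just part $3$ applied to the pair $(J, I_{\bar{b}})$ once we know they are comaximal. Comaximality of $J$ and $I_{\bar{b}}$ is the only genuinely new ingredient: it follows by choosing $\alpha_{\bar{a}} \in I_{\bar{a}}$ and $\beta_{\bar{a}} \in I_{\bar{b}}$ with $\alpha_{\bar{a}} + \beta_{\bar{a}} = 1$ for each $\bar{a} \in A'$ (via part $2$), and then expanding $\prod_{\bar{a} \in A'} (\alpha_{\bar{a}} + \beta_{\bar{a}}) = 1$; every term in the expansion except $\prod_{\bar{a}} \alpha_{\bar{a}}$ contains some $\beta_{\bar{a}} \in I_{\bar{b}}$, so $1 \in J + I_{\bar{b}}$. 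I expect no serious obstacle: the whole statement is a packaging of standard CRT in the polynomial ring, and the only place that requires a moment's care is this reduction of coprimality from the factors $I_{\bar{a}}$ to their product $J$ in the induction step.
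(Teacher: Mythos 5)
Your proposal is correct and follows essentially the same route as the paper: parts~1–3 are the standard evaluation-homomorphism and comaximality arguments that the paper sketches inline within the claim, and your induction for part~4 (expanding $\prod_{\bar a\in A'}(\alpha_{\bar a}+\beta_{\bar a})=1$ to get comaximality of $\prod_{\bar a\in A'} I_{\bar a}$ with $I_{\bar b}$, then invoking part~3) is exactly the standard CRT step the paper alludes to with ``generalizing the previous point.''
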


\begin{proof}[Proof of Lemma~\ref{lemma-split-ideal-str}]

We will prove this decomposition by applying induction on the length of the split ideal. For the base case, length of $I$ is $1$ and $I=\langle h_0(\bar{x}_0)\rangle \subseteq \F_p[x_0]$.
Since $I$ is a split ideal, $h_0(x_0) = \prod_{i=1}^{\deg(h_0)} (x_0 - a_i)$ for {\em distinct} $a_i \in \F_p$. 
So, $I = \prod_{i=1}^{\deg(h_0)} I_{a_i} = \bigcap_{i=1}^{\deg(h_0)} I_{a_i}$ by Claim~\ref{properites_maximal_ideal}.

\smallskip
Let $I$ be a split ideal of length $l+1$, $I=:\langle h_0(\bar{x}_0),\ldots,h_l(\bar{x}_l)\rangle \subseteq \F_p[x_0,\ldots,x_l]$. 
Define ideal $I':=\langle h_0(\bar{x}_0),\ldots,h_{l-1}(\bar{x}_{l-1}) \rangle$. By Lemma~\ref{lemma-split-ideal-is-recursive}, $I'$ is a split ideal. 
From the induction hypothesis (\& Claim~\ref{properites_maximal_ideal}), we have $I'=\bigcap_{\bar{a}\in \calZ_{\F_p}(I)} I'_{\bar{a}} = \prod_{\bar{a}} I'_{\bar{a}}$, where $I'_{\bar{a}} :=\langle x_0-a_0,\ldots,x_{l-1}-a_{l-1} \rangle$ for a zero $\bar{a} =: (a_0,\ldots,a_{l-1})$ of $I'$.
We know that, 
\begin{equation} \label{ideal_product}
	I \;=\; I' + \langle h_l(\bar{x}_l) \rangle \;=\; \prod_{\bar{a}\in \calZ_{\F_p}(I')} \left(I'_{\bar{a}} + \langle h_l(\bar{x}_l) \rangle \right) \;.
\end{equation}
Claim~\ref{claim-monic-degree} shows $\deg(h_l(\bar{a},x_l))=\deg_{x_l}(h_l)$ for all $\bar{a}\in \calZ_{\F_p}(I')$, and 
$h_l(\bar{a},x_l)$ splits completely over $\F_p$. 
So, for any $\bar{a} \in \calZ_{\F_p}(I')$, $I'_{\bar{a}}+\langle h_l(\bar{x}_l) \rangle = \prod_{i=1}^{\deg_{x_l}(h_l)} I_{\bar{a}, b_i} $, where $(\bar{a},b_i)$ are roots of $I$ extended from $\bar{a}$. 
From Eqn.~\ref{ideal_product} (\& Claim~\ref{properites_maximal_ideal}), $I = \prod_{\bar{b} \in \calZ_{\F_p}(I)} I_{\bar{b}} = \bigcap_{\bar{b} \in \calZ_{\F_p}(I)} I_{\bar{b}}$.

This finishes the inductive proof, completely factoring $I$. 	
\end{proof}

Lemma~\ref{lemma-MSI-root-represent} shows that a root of a maximal split ideal represents a set of roots of $f\bmod p^k$ and provides the size of that set.

\begin{proof}[Proof of Lemma~\ref{lemma-MSI-root-represent}]
By definition of a maximal split ideal, for any $\bar{a}=(a_0,\ldots,a_l) \in \calZ_{\F_p}(I)$, $p^k | g(x)$ where $g(x)=f(a_0+pa_1+p^2 a_2+\ldots+p^l a_l + p^{l+1} x)$. 
So, $g(x) = 0 \bmod p^k$ for any $p^{k-l-1}$ choices of $x$. 
For each such fixing of $x$, $a_0+pa_1+p^2 a_2+\ldots+p^l a_l + p^{l+1} x$ is a {\em distinct} root of $f(x) \bmod p^k$. Hence proved. 
\end{proof}

\section{Computation modulo a triangular ideal-- Reduce \& Divide}
\label{appen-reduce-valuation}

For completeness, we show that it is efficient to reduce a polynomial $a(\bar{x}_l)\in \G[\bar{x}_l]$ modulo a triangular ideal $J_l =\langle b_0(\bar{x}_0),b_1(\bar{x}_1),\ldots,b_l(\bar{x}_l) \rangle \subseteq \G[\bar{x}_l]$, where $\G$ is any Galois ring (in particular, $R=\Z/p^k$, or $\F_p$).

Note: $J_l$ need not be a split ideal for $f \bmod p^k$, though the algorithms of this section work for split ideals ($\because$ they are triangular by definition).  

Assumptions: In the generators of the triangular ideal we assume $\deg_{x_i}b_i(\bar{x}_i)\ge2$ (for $0\le i\le l$). Otherwise, we could eliminate variable $x_i$ and work with fewer variables (\& smaller length triangular ideal). Additionally, each $b_i(\bar{x}_i)$ (for $0\le i\le l$) is monic (leading coefficient is $1$ wrt $x_i$), and presented in a {\em reduced} form modulo the prior triangular ideal $J_{i-1}:=\langle b_0(\bar{x}_0),\ldots,b_{i-1}(\bar{x}_{i-1})\rangle\subseteq \G[\bar{x}_{i-1}]$.

Let us first define reduction mod an ideal (assume $\G$ to be the Galois ring $G(p^k,b)$).

\begin{definition}[Reduction by a triangular ideal]
\label{def-reduction}
The {\em reduction} of a multivariate polynomial $a(\bar{x}_l) \in \G[\bar{x}_l]$ by a triangular ideal $J_l=\langle b_0(\bar{x}_0),\ldots,b_l(\bar{x}_l) \rangle \subseteq \G[\bar{x}_l]$ 
is the  unique polynomial $\tilde{a}(\bar{x}_l) \equiv a(\bar{x}_l) \bmod  J_l$,
where $\deg_{x_i}(\tilde{a})< \deg_{x_i}(b_i)$, for all $i\in \{0, \ldots,l\}$.
\end{definition}

{\bf Idea of reduction:} The idea behind the algorithm is inspired from the univariate reduction. 
If $l=0$, then reduction of $a(x_0)$ modulo $b_0(x_0)$ is simply the remainder of the division of $a$ by $b_0$ in the underlying polynomial ring $\G[x_0]$. 
For a larger $l$, the reduction of $a(\bar{x}_l)$ modulo the triangular ideal $J_l=\langle b_0(x_0),\ldots,b_l(\bar{x}_l)\rangle$ 
is the remainder of the division of $a(\bar{x}_l)$ by $b_l(\bar{x}_l)$ in the polynomial ring $(\G[x_0,\ldots,x_{l-1}]/J_{l-1})[x_l]$. The fact that $b_l$ is monic, helps in generalizing `long division'.

\smallskip
\textbf{Input:} An $a(\bar{x}_l)\in \G[\bar{x}_l]$ and a triangular ideal $J_l=\langle b_0(\bar{x}_0),\ldots,b_l(\bar{x}_l) \rangle \subseteq \G[\bar{x}_l]$.

\textbf{Output:} Reduction $\tilde{a}$ of $a\bmod J_l$ as defined above.

\begin{breakablealgorithm}
\caption{Reduce $a(\bar{x}_l)$ modulo $J_l$}
\label{algo-reduction}
\begin{algorithmic}[1]
\Procedure{Reduce}{$a(\bar{x}_l)$, $J_{l}$}

\If{$l=0$}

\parState{[\textbf{Reduce $a(x_0)$ by $b_0(x_0)$}] return remainder of univariate division of $a$ by $b_0$ in $R[x_0]$.}

\EndIf

\parState{ $d_a \leftarrow \deg_{x_l}(a)$ and $d_b \leftarrow \deg_{x_l}(b_l)$.}

\parState{Let $a(\bar{x}_l)=: \Sigma_{i=0}^{d_a} a_i(\bar{x}_{l-1}) x_{l}^{i} $ be the polynomial representation of $a(\bar{x}_l)$ with respect to $x_l$.} 

\parState{Recursively reduce each coefficient $a_i(\bar{x}_{l-1})$ of $a$ $\bmod J_{l-1}$: \\ $\tilde{a}_i(\bar{x}_{l-1}) \leftarrow$ \Call{Reduce}{$a_i(\bar{x}_{l-1})$, $J_{l-1}$}, for all $i \in \{0,\ldots,d_a\}$.}


\While{$d_a \geq d_b$}

\parState{$a(\bar{x}_l) \leftarrow a - \left(a_{d_a}\cdot x_l^{d_a-d_b}\cdot b_l\right)$} 

\parState{Update $d_a \leftarrow \deg_{x_l}(a)$. Update $a_i$'s such that $a(\bar{x}_l)=: \Sigma_{i=0}^{d_a} a_i(\bar{x}_{l-1})\cdot x_{l}^{i}$ .} 

\parState{Call \Call{Reduce}{$a_i(\bar{x}_{l-1})$, $J_{l-1}$} for all $i \in \{0,\ldots,d_a\}$: recursively reduce each coefficient $a_i(\bar{x}_{l-1}) \bmod J_{l-1}$ (like Step $7$).}

\EndWhile

\parState{\Return $a(\bar{x}_l)$.}

\EndProcedure
\end{algorithmic}
\end{breakablealgorithm}




Following lemma shows that reduction modulo a triangular ideal (Algorithm~\ref{algo-reduction}) is efficient.

\begin{lemma}[Reduction]
\label{lemma-reduction}
Given $a(\bar{x}_l)\in \G[\bar{x}_l]$ and $J_l\subseteq \G[\bar{x}_l]$, to reduce $a(\bar{x}_l)$ mod $J_l$, Algorithm~\ref{algo-reduction} takes time
$\poly\left(\prod_{i=0}^{l} \deg_{x_i}(a), \log |\G|, \deg(J_l) \right)$. 

In particular, if each coefficient $a_i(\bar{x}_{l-1})$ of $a(\bar{x}_l)$ (viewed as a polynomial in $x_l$) is in reduced form mod $J_{l-1}$, then reduction takes time $\poly\left(d_a, \log |\G|, \deg(J_l)\right)$, where $d_a=\deg_{x_l}(a)$.
\end{lemma}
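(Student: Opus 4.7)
I would prove both statements of Lemma~\ref{lemma-reduction} by induction on the length $l+1$ of the triangular ideal $J_l$, mirroring the recursive structure of Algorithm~\ref{algo-reduction}. For the base case $l=0$, the algorithm is simply univariate polynomial long division of $a(x_0)$ by the monic polynomial $b_0(x_0)$ over the Galois ring $\G$; this takes $\poly(\deg_{x_0}(a), \deg_{x_0}(b_0), \log|\G|)$ time by standard analysis, which already conforms to both claimed bounds since $\deg(J_0) = \deg_{x_0}(b_0)$.

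For the inductive step, assume the claim holds for triangular ideals of length $l$, and consider reduction of $a(\bar{x}_l)$ by $J_l$. The algorithm first writes $a = \sum_{i=0}^{d_a} a_i(\bar{x}_{l-1}) x_l^i$ and reduces each of the $d_a+1$ coefficients $a_i$ modulo $J_{l-1}$ (Step 7). By the induction hypothesis, each such reduction costs $\poly\bigl(\prod_{j=0}^{l-1}\deg_{x_j}(a_i),\,\log|\G|,\,\deg(J_{l-1})\bigr)$, and once reduced, each coefficient has $\deg_{x_j}$ strictly less than $\deg_{x_j}(b_j)$ for $j<l$, so its bit-size is bounded by $\deg(J_{l-1})\cdot \log|\G|$. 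The while loop then performs at most $d_a - d_b + 1 \le d_a$ iterations; in each iteration we form $a_{d_a}\cdot x_l^{d_a - d_b}\cdot b_l$ (a product of two polynomials whose coefficients are already $J_{l-1}$-reduced), subtract it from $a$, and re-reduce the new coefficients via a recursive call. Monicity of $b_l$ in $x_l$ guarantees that $d_a$ strictly decreases, so the loop terminates.

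The main obstacle is controlling the intermediate coefficient blow-up under multiplication: the product $a_{d_a}\cdot b_l$ has coefficients of degree up to twice that allowed in reduced form, so naively iterating could cause a geometric blow-up in size. The key observation is that after each inductive call to \textsc{Reduce} on the coefficients, they are once again bounded by $\deg(J_{l-1})$ in ``size'', so entering the next iteration we pay only $\poly(\deg(J_{l-1}),\log|\G|)$ for the coefficient arithmetic and another invocation of the induction hypothesis for the re-reduction. Multiplying the per-iteration cost by the number of iterations yields the overall bound $\poly\bigl(\prod_{i=0}^l \deg_{x_i}(a),\,\log|\G|,\,\deg(J_l)\bigr)$, using $\deg(J_l) = \deg_{x_l}(b_l)\cdot \deg(J_{l-1})$.

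For the ``in particular'' clause, if each coefficient $a_i(\bar{x}_{l-1})$ of $a$ is already reduced modulo $J_{l-1}$, Step 7 is a no-op and we enter the while loop immediately. Since reduced coefficients have size bounded purely in terms of $\deg(J_{l-1})$ and $\log|\G|$ (independent of $a$'s original multi-degree), the per-iteration cost becomes $\poly(\deg(J_l),\log|\G|)$ by the inductive hypothesis applied to each of the $O(d_b)$ new coefficients that need re-reduction after subtraction. Multiplying by the at most $d_a$ iterations yields the sharper bound $\poly(d_a,\log|\G|,\deg(J_l))$, completing the proof.
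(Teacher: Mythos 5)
Your proof is correct and takes essentially the same approach as the paper: induction on the ideal length, with the key observation that re-reducing coefficients inside the while loop keeps the intermediate individual-degrees bounded (so the size does not blow up geometrically), and the sharper ``in particular'' bound falls out because pre-reduced coefficients make each of the at most $d_a$ loop iterations cost only $\poly(\deg(J_l),\log|\G|)$.
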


\begin{proof}
We prove the lemma by induction on the length $l+1$ of the ideal $J_l$. 

For $l=0$, we have a standard univariate reduction which takes at most $O(\deg(a) \deg(b))$ ring operations in $\G$.
Since addition/multiplication/division in $\G$ take time at most $\tilde{O}(\log |\G|)$ \cite{shoup2009computational}, we get the lemma. 

Assume that the lemma is true for any ideal of length less than $l$.

\smallskip
Coefficients $a_i(\bar{x}_{l-1})$ can be reduced, in time $\poly\left(\prod_{i=0}^{l-1} \deg_{x_{i}}(a), \log |\G|, \deg(J_{l-1})\right)$, mod $J_{l-1}$ using induction hypothesis. We need to make $d_a+1$ such calls; total time is bounded by $\poly\left(\prod_{i=0}^{l} \deg_{x_{i}}(a), \log |\G|, \deg(J_{l-1})\right)$. In the same time we can compute Step 9.

After the update at Step $9$, individual-degrees $\deg_{x_i}(a)$ (for $0\le i<l$) can become at most double the previous degree (safely assuming $2\le \deg_{x_i}(b_i)\leq \deg_{x_i}(a)$). 
By induction hypothesis, each call to reduce $a_i(\bar{x}_{l-1}) \bmod J_{l-1}$ takes time $\poly\left(\prod_{i=0}^{l-1} \deg_{x_{i}}(a), \log |\G|, \deg(J_{l-1})\right)$. 
Algorithm makes at most $d_a$ such calls and the while-loop runs at most $d_a$ times. Hence, the algorithm takes time $\poly\left(\prod_{i=0}^{l} \deg_{x_{i}}(a), \log |\G|, \deg(J_l)\right)$; and we are done.

If coefficients of $a$ are already reduced modulo $J_{l-1}$, then $\deg_{x_i}(a)< \deg_{x_i}(b_i)$ for all $0\le i<l$. Hence, Algorithm~\ref{algo-reduction} takes time $d_a^2\cdot\poly\left(\log |\G|, \deg(J_{l-1})\right)$.
\end{proof}

\begin{lemma}[Division mod triangular ideal]\label{lem-div-mod-I}
Given a triangular ideal $J_l\subseteq \G[\bar{x}_l]$ and a unit $a(\bar{x}_l)\in \G[\bar{x}_l]/J_l$. We can compute $a^{-1} \bmod J_l$, in reduced form, in time
$\poly\left(\prod_{i=0}^{l} \deg_{x_i}(a), \log |\G|, \deg(J_l)\right)$. 
\end{lemma}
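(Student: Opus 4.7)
The plan is to compute $a^{-1}$ in three stages: first put $a$ into reduced form modulo $J_l$, then invert it modulo $p$ via a linear-algebra computation over $\F_p$, and finally Hensel-lift the inverse up to modulo $p^k$ by Newton iteration.

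To begin, I will replace $a$ by $\tilde{a} := \textsc{Reduce}(a, J_l)$, ensuring $\deg_{x_i}(\tilde{a}) < \deg_{x_i}(b_i)$ for every $i$; by Lemma~\ref{lemma-reduction} this costs $\poly(\prod_i \deg_{x_i}(a), \log|\G|, \deg(J_l))$ time. Henceforth assume $a$ is already reduced.

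Next, I will compute $u_0 \in \F_p[\bar{x}_l]$ with $u_0 \cdot a \equiv 1 \pmod{p, J_l}$. The quotient $\bar{R} := \F_p[\bar{x}_l]/(J_l + \langle p\rangle)$ is a free $\F_p$-module of rank $D := \deg(J_l) = \prod_i \deg_{x_i}(b_i)$, with monomial basis $\mathcal{B} = \{\prod_i x_i^{e_i} : 0 \le e_i < \deg_{x_i}(b_i)\}$ (using that each $b_i$ is monic in $x_i$). Multiplication by $a \bmod p$ defines an $\F_p$-linear endomorphism $\mu_a$ of $\bar{R}$ which is invertible precisely because $a$ is a unit. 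I will construct the matrix $M_a \in \F_p^{D \times D}$ of $\mu_a$ in the basis $\mathcal{B}$ by computing, for each $m \in \mathcal{B}$, the reduced form of $(a \cdot m) \bmod (p, J_l)$ via Algorithm~\ref{algo-reduction}. Then solving the single linear system $M_a \cdot v = e_1$ (with $e_1$ the coefficient vector of $1 \in \bar{R}$) by Gaussian elimination over $\F_p$ yields the reduced form of $u_0$.

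Finally, I will Hensel-lift $u_0$. Define $u_{i+1} := \textsc{Reduce}(u_i(2 - u_i a), J_l)$; a routine induction gives $u_i \cdot a \equiv 1 \pmod{p^{2^i}, J_l}$ via the identity $u_{i+1} a = 1 - (1 - u_i a)^2$. After $\lceil \log_2 k\rceil$ iterations I obtain $u$ with $u \cdot a \equiv 1 \pmod{p^k, J_l}$, which is the desired reduced-form inverse.

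The main technical hurdle will be the mod-$p$ inversion step: since $J_l$ is only assumed to be triangular and not split, the ring $\bar{R}$ need not be a field, a domain, or even a product of fields, so extended Euclidean in $(\F_p[\bar{x}_{l-1}]/(J_{l-1} \bmod p))[x_l]$ does not apply directly (an intermediate leading coefficient might be a zerodivisor even though $a$ itself is a unit). The linear-algebra formulation cleanly bypasses this difficulty by converting ``$a$ is a unit'' directly into ``$\mu_a$ is invertible over the field $\F_p$''. For the complexity bookkeeping: Step~2 costs $O(D^3)$ $\F_p$-operations plus $D$ reductions each of size $\poly(D, \log|\G|)$ by Lemma~\ref{lemma-reduction}; Step~3 costs $O(\log k)$ Newton iterations, each involving a constant number of multiplications of reduced polynomials followed by one reduction, of total size $\poly(D, \log|\G|)$. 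Summing across the three stages matches the claimed bound $\poly(\prod_i \deg_{x_i}(a), \log|\G|, \deg(J_l))$.
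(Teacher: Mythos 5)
Your proposal is correct and arrives at the same complexity bound, but via a somewhat different route than the paper. The paper's proof takes the single identity $u\cdot a\equiv 1\bmod J_l$, writes $u=\sum_{\bar e} u_{\bar e}\bar x_l^{\bar e}$ with $D=\deg(J_l)$ unknowns $u_{\bar e}\in\G$, reduces the monomial products $a\cdot\bar x_l^{\bar e}$ via Algorithm~\ref{algo-reduction} to obtain a $D\times D$ linear system directly over $\G$, and solves it by ``standard linear algebra.'' You set up the same multiplication-by-$a$ matrix, but solve it only modulo $p$ (over the residue field) and then recover the full inverse by Newton iteration $u_{i+1}\leftarrow u_i(2-u_ia)$ with $O(\log k)$ steps. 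The trade-off: your approach does Gaussian elimination over a genuine field, cleanly sidestepping the zerodivisor issue that the paper's phrase ``standard linear algebra'' over the local ring $\G$ quietly absorbs (there it works because the matrix is invertible over $\G$, so unit pivots always exist after reducing mod $p$), at the cost of an extra Hensel-lifting stage. Two small imprecisions to flag: the residue ring should be $(\G/\langle p\rangle)[\bar x_l]/(J_l\bmod p)\cong\F_q[\bar x_l]/(J_l\bmod p)$ with $q=p^b$ and $\G=G(p^k,b)$, not $\F_p[\bar x_l]/(J_l+\langle p\rangle)$ --- it is a free $\F_q$-module of rank $D$, hence a free $\F_p$-module of rank $Db$; and accordingly the linear system has size $Db$ over $\F_p$ (or $D$ over $\F_q$). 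Neither changes the $\poly(D,\log|\G|)$ bound, so the argument goes through.
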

\begin{proof}
Let $u(\bar{x}_l)\in \G[\bar{x}_l]/J_l$ be such that $u\cdot a \equiv 1 \bmod J_l$.
We can write $u$ as 
$$\sum_{\stackrel{\bar{e}\ \ge\ \bar{0}}{\forall\ 0\le i\le l,\ e_i \ <\ \deg_{x_i}(b_i)}} u_{\bar{e}}\cdot \bar{x}_l^{\bar{e}} \;.$$
We want to find the unknowns $u_{\bar{e}}$ in $\G$, satisfying $u\cdot a \equiv 1 \bmod J_l$. This gives us a linear system in the unknowns; it has size $\deg(J_l)$. The linear system can be written down, using Algorithm~\ref{algo-reduction}, by reducing the monomial products $\bar{x}_l^{\bar{e}} \cdot \bar{x}_l^{\bar{e}'}$ that appear in the product $u\cdot a$. This takes time $\poly\left(\prod_{i=0}^{l} \deg_{x_i}(a), \log |\G|, \deg(J_l)\right)$.

Since there exists a unique $u$, our linear system is efficiently solvable, by standard linear algebra, in the required time.
\end{proof}

Let us see two direct applications of the reduction Algorithm \ref{algo-reduction} to compute valuation and to compute reduced form of split ideals.

First, we explain how Algorithm~\ref{algo1} (Steps $3$, $18$) computes reduced $f_J$ modulo the lift $\hat{J}$ of the newly computed split ideal $J$, 
when $x$ is replaced by $x_{l+1}+p x$ in the intermediate polynomial $f_I(\bar{x}_l,x)$.

\begin{lemma}[Updating stack with reduced polynomial]
\label{lemma-stack-update}
Let $I\subseteq \F_p[\bar{x}_l]$ be a split ideal and $f_I(\bar{x}_l,x)\in R[\bar{x}_l,x]$ be reduced modulo $\hat{I}$ (the lift of $I$ over $R$). 
Define split ideal $J \subseteq \F_p[\bar{x}_{l+1}]$ as  $J:=I+\langle h_{l+1}(\bar{x}_{l+1})\rangle$, and $\hat{J}$ be the lift of $J$ over $R$. 

Then, in time $\poly\left(\log |R|, \deg_{x}(f_I), \deg(J)\right)$, we can compute a reduced polynomial $f_J$ modulo $\hat{J}$ defined by,
$ f_J(\bar{x}_{l+1},x) \;:=\; f_I(\bar{x}_l, x_{l+1}+p x) \bmod \hat{J} \;.$
\end{lemma}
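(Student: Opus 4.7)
\medskip
\noindent\textbf{Proof proposal.} The plan is to carry out the substitution and reduction in two well-separated stages, exploiting the hypothesis that $f_I$ is already reduced modulo $\hat{I}$, so that the only new individual-degree that appears is in the single variable $x_{l+1}$.

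First I would write $f_I(\bar{x}_l,x)=\sum_{j=0}^{d_x} a_j(\bar{x}_l)\, x^j$, where $d_x:=\deg_x(f_I)$ and each coefficient $a_j(\bar{x}_l)\in R[\bar{x}_l]$ is already reduced modulo $\hat{I}$; in particular, $a_j$ has bit-size $\poly(\deg(I),\log|R|)$. Performing the substitution $x\mapsto x_{l+1}+px$ gives the intermediate expansion
$$f_I(\bar{x}_l,x_{l+1}+px)\;=\;\sum_{j=0}^{d_x} a_j(\bar{x}_l)\sum_{i=0}^{j}\binom{j}{i}\,p^i\,x_{l+1}^{j-i}\,x^i\,,$$
which we regroup as $\sum_{i=0}^{d_x} c_i(\bar{x}_{l+1})\,x^i$, where each $c_i\in R[\bar{x}_{l+1}]$ has $\deg_{x_{l+1}}(c_i)\le d_x$ and, crucially, is still reduced in the variables $x_0,\ldots,x_l$ modulo $\hat{I}$ (since the substitution does not touch those variables). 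Carrying out this regrouping by naive expansion uses time $\poly(d_x,\deg(I),\log|R|)$, which is within the target bound because $\deg(I)\le \deg(J)$.

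Next, for each $i\in\{0,\ldots,d_x\}$, I would reduce $c_i(\bar{x}_{l+1})$ modulo $\hat{J}=\hat{I}+\langle h_{l+1}(\bar{x}_{l+1})\rangle$ by a single call to \textsc{Reduce} (Algorithm~\ref{algo-reduction}). Because $c_i$ is already reduced in $x_0,\ldots,x_l$ modulo $\hat{I}$ and has degree at most $d_x$ in $x_{l+1}$, the ``in particular'' clause of Lemma~\ref{lemma-reduction} applies: each such reduction costs $\poly(d_x,\log|R|,\deg(J))$. Summing over the $d_x+1$ values of $i$ yields the stated running time $\poly(\log|R|,\deg_x(f_I),\deg(J))$. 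The output $f_J(\bar{x}_{l+1},x):=\sum_{i=0}^{d_x}\bigl(c_i\bmod\hat{J}\bigr)\,x^i$ is, by construction, $f_I(\bar{x}_l,x_{l+1}+px)\bmod\hat{J}$ and is in reduced form wrt the triangular ideal $\hat{J}$.

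The only potential pitfall is that a careless treatment of the substitution could let the individual degree in $x_{l+1}$ grow like $d_x$ before any reduction, and one might fear a blow-up in the remaining variables as well; but the preservation of reduced form in $x_0,\ldots,x_l$ under a substitution that affects only $x$ removes this concern, and then Lemma~\ref{lemma-reduction}'s fine-grained bound (rather than the generic one) delivers the claimed complexity.
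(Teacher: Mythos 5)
Your proposal is correct and follows essentially the same route as the paper: expand $f_I(\bar{x}_l,x_{l+1}+px)$ (the paper phrases this via Taylor series in $x_{l+1}$, you via the binomial theorem, but it is the same regrouping), observe that the coefficients remain reduced in $x_0,\ldots,x_l$ modulo $\hat{I}$ since the substitution only touches $x$, and then apply the fine-grained ``in particular'' bound of Lemma~\ref{lemma-reduction} to each of the $O(\deg_x f_I)$ coefficients.
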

\begin{proof}
Since $f_I(\bar{x}_l,x)$ is already reduced modulo $\hat{I}$, $\deg_{x_i}(f_I)< \deg_{x_i}(h_i)$. 
Define $D:=\deg_{x}(f_I)$, perform the shift $x \to x_{l+1}+p x$ in $f_I$, and expand $f_I$ using Taylor series,
\[ f_J(\bar{x}_l,x) \;=\; f_I(\bar{x}_l, x_{l+1}+p x) \;=:\; g_0(\bar{x}_{l+1}) + g_1(\bar{x}_{l+1}) (p x) +\ldots + g_{D}(\bar{x}_{l+1}) (px)^{D} \;,\]
where $g_i$ could also be seen as the $i$-th derivative of $f_I(\bar{x}_l,x_{l+1})$ (wrt $x_{l+1}$) divided by $i!$. 
To compute $f_J \bmod \hat{J}$, we call \Call{Reduce}{$g_i$, $\hat{J}$} (for all $i$) to get the reduction of each term mod $\hat{J}$.

To calculate the time complexity of \Call{Reduce}{$g_i$, $\hat{J}$}, note that coefficients of each $g_i$, wrt $x_{l+1}$, is already reduced mod $\hat{I}$. 
Since $J=I+\langle h_{l+1}\rangle$, using Lemma~\ref{lemma-reduction}, time complexity of reducing each $g_i$ by $\hat{J}$ is at most $\poly(\deg_{x_{l+1}}(g_i), \log |R|, \deg(J))$ ($\deg(J)=\deg(\hat{J})$).

Since $\deg_{x_{l+1}}(g_i)\leq \deg_{x}(f_I)$ (for $i\leq D$), total time complexity is $\poly\left(\log |R|, \deg_{x}(f_I), \deg(J)\right)$.
\end{proof}

Next, we explain Step $20$ in Algorithm~\ref{algo1} a bit more.

\begin{lemma}[Ideal factors in reduced form]
\label{lemma-split-reduce}
Consider the tuple $(U:=\{h_0(\bar{x}_0),\ldots,h_l(\bar{x}_l)\}, f_{\langle U\rangle})\in S$ and consider a non-trivial factorization $h_i=: h_{i,1}\ldots h_{i,m}$ for some $h_i \in U$. Wlog each factor $h_{i,j}$ is monic wrt $x_i$.

Then, we can compute the factor-related tuples $(U_j, f_{\langle U_j\rangle})$, for all $j\in [m]$, in time $\poly(\deg(\langle U\rangle)$, $\log |R|, \deg_{x}(f_{\langle U\rangle}))$ 
($f_{\langle U_j\rangle}$ will be in reduced form mod $\langle U_j\rangle$).
\end{lemma}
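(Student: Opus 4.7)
The plan is to observe that each factor-related tuple $(U_j, f_{\langle U_j\rangle})$ differs from $(U, f_{\langle U\rangle})$ only through the single generator $h_i$ being replaced by its factor $h_{i,j}$, so the computation reduces to invoking the \textsc{Reduce} subroutine of Algorithm~\ref{algo-reduction} with respect to the tighter ideal. Explicitly, for each $j\in[m]$, define $U_j := \{h_0,\ldots,h_{i-1},h_{i,j},h_{i+1},\ldots,h_l\}$; by Corollary~\ref{cor-split-ideal-str} each $\langle U_j\rangle$ is a split ideal, and since $h_i = h_{i,1}\cdots h_{i,m} \bmod \langle U_{i-1}\rangle$, we have $\langle U\rangle \subseteq \langle U_j\rangle$, hence $\hat{U} \subseteq \hat{U}_j$ over $R$.

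The first step is to set $f_{\langle U_j\rangle} := \Call{Reduce}{f_{\langle U\rangle},\, \hat{U}_j}$. Because $\langle U\rangle \subseteq \langle U_j\rangle$, this is a well-defined representative of $f$ in the quotient associated to $U_j$, and by definition it is already in reduced form mod $\hat{U}_j$. Correctness then follows, since $f_{\langle U\rangle}$ was already a reduced representative of $f(x_0+px_1+\cdots+p^lx_l+p^{l+1}x)$ mod $\hat{U}$, and reducing further mod the larger ideal $\hat{U}_j$ preserves this evaluation identity.

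The main point is to bound the cost. Write $f_{\langle U\rangle} =: \sum_{t=0}^{D} c_t(\bar{x}_l)\cdot x^t$ with $D := \deg_x(f_{\langle U\rangle})$; since $f_{\langle U\rangle}$ is reduced mod $\hat{U}$, each coefficient $c_t$ already satisfies $\deg_{x_k}(c_t) < \deg_{x_k}(h_k)$ for all $k$. Reducing each $c_t$ mod $\hat{U}_j$ affects only the $x_i$-direction (since the other generators of $\hat{U}_j$ are identical to those of $\hat{U}$), and then the subsequent downward sweep only touches the already-reduced $x_0,\ldots,x_{i-1}$ components, to which the ``particular'' case of Lemma~\ref{lemma-reduction} applies. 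Thus each coefficient costs $\poly(\deg(\langle U\rangle),\log|R|)$, giving a per-$j$ cost of $\poly(\deg(\langle U\rangle), \log|R|, \deg_x(f_{\langle U\rangle}))$ once we multiply by $D+1$.

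Finally, summing over $j\in[m]$ and using the trivial bound $m \le \deg_{x_i}(h_i) \le \deg(\langle U\rangle)$ absorbs the outer factor $m$ into the polynomial, yielding the claimed total time $\poly(\deg(\langle U\rangle), \log|R|, \deg_x(f_{\langle U\rangle}))$. The only subtlety I would double-check is that the sequential reduction by $h_{i,j}$ and then by the lower-indexed $h_0,\ldots,h_{i-1}$ does not inflate intermediate individual degrees beyond what Lemma~\ref{lemma-reduction} controls; this follows because $h_{i,j}$ is already presented in reduced form modulo $\hat{U}_{i-1}$ (same ideal as in the factorization of $h_i$), so the products appearing in the long-division step stay within the reduced cone, and the iteration terminates in $\deg_{x_i}(h_i)$ rounds.
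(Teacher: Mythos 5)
There is a genuine gap: you never re-reduce the generators $h_{i+1},\ldots,h_l$ of $U_j$ modulo the new, tighter prior ideals, which is exactly the first step of the paper's proof and is not optional. Recall that Algorithm~\ref{algo-reduction} (and hence Lemma~\ref{lemma-reduction}) operates under the standing assumption that each generator $b_t$ of the triangular ideal is already \emph{presented in reduced form modulo the prior ideal} $\langle b_0,\ldots,b_{t-1}\rangle$; both the correctness of the reduced form (the degree constraints $\deg_{x_s}(\tilde a) < \deg_{x_s}(b_s)$) and the stated time bounds depend on this. When you replace $h_i$ by the strictly smaller-degree factor $h_{i,j}$, the later generators $h_{i+1},\ldots,h_l$ — which previously satisfied $\deg_{x_i}(h_{i+t}) < \deg_{x_i}(h_i)$ — may well have $\deg_{x_i}(h_{i+t}) \ge \deg_{x_i}(h_{i,j})$, so they are no longer reduced relative to $\langle h_0,\ldots,h_{i-1},h_{i,j},\ldots,h_{i+t-1}\rangle$. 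Your claim that ``the other generators of $\hat U_j$ are identical to those of $\hat U$'' is literally true but misses the point: identity of polynomials does not preserve the reduced-form invariant once the $i$-th generator has changed.

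Concretely, this breaks two things. First, your call $\Call{Reduce}{f_{\langle U\rangle},\hat U_j}$ is being made against an ideal that violates the preconditions of Algorithm~\ref{algo-reduction}, so Lemma~\ref{lemma-reduction} does not directly give you the claimed cost $\poly(\deg(\langle U\rangle),\log|R|,\deg_x(f_{\langle U\rangle}))$, nor does it guarantee the output satisfies the degree constraints $\deg_{x_s}(f_{\langle U_j\rangle}) < \deg_{x_s}(h'_s)$ for all $s$ (where $h'_s$ are the intended reduced generators). Second, the tuple $(U_j,f_{\langle U_j\rangle})$ you output is destined to be pushed back onto the stack $S$ and popped in later iterations of Algorithm~\ref{algo1}, which also relies on the invariant that all generators of any ideal on $S$ are in reduced triangular form; returning an unreduced $U_j$ silently corrupts that invariant. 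The paper's proof fixes this by a preliminary pass that successively reduces $h_{i+t}$ mod $I_{i+t,j}:=\langle h_0,\ldots,h_{i-1},h_{i,j},h_{i+1},\ldots,h_{i+t-1}\rangle$ for $t=1,\ldots,l-i$, each step costing $\poly(\deg(\langle U\rangle),\log|R|)$ by Lemma~\ref{lemma-reduction}; only then is $f_{\langle U\rangle}$ reduced coefficient-by-coefficient against the now-legitimate $\hat U_j$. Your bound-checking remark at the end addresses a different (and less dangerous) subtlety — inflation during the long-division sweep — but not this one.
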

\begin{proof}


First, we successively reduce $h_{i+t}$ ($1\leq t\leq l-i$) modulo triangular ideal $I_{i+t,j} := \langle h_0,\ldots,h_{i-1}, h_{i,j}, h_{i+1},\ldots,h_{i+t}\rangle$.
Time complexity of each of these steps is bounded by  $\poly(\deg(\langle U\rangle)$, $\log |R|)$ (Lemma~\ref{lemma-reduction}). 
This ensures that the degree of $h_{i+t}$ in a variable $x_s$ ($s < i+t$) is less than the individual-degree of the $s$-th generator of ideal $\langle U_j \rangle$.  

Then, $f_{\langle U_j\rangle}$ can be calculated by reducing each $\deg_{x}(f_{\langle U\rangle})+1$ coefficients of $f_{\langle U_j\rangle}$ (wrt $x$) by the lifted triangular ideal $\hat{I}_{l,j} = \hat{U}_j$. 
By Lemma~\ref{lemma-reduction}, this takes time $\poly( \prod_{i=0}^{l}\deg_{x_i}(f_{\langle U\rangle}), \deg_{x}(f_{\langle U\rangle}), \log |R|$, $\deg(\langle U\rangle))$.  
Since coefficients (wrt $x$) of $f_{\langle U\rangle}$ were already reduced modulo $\langle U\rangle$, $\prod_{i=0}^l \deg_{x_i}(f_{\langle U\rangle}) \leq \deg(\langle U\rangle)$. 

So, the computation time is bounded by $\poly\left(\deg(\langle U\rangle), \log |R|, \deg_{x}(f_{\langle U\rangle})\right)$.
\end{proof}

%
%
%

\section{Computation modulo a triangular ideal--- Zerodivisor test \& GCD}
\label{appen-GCD-zerodiv}

\Call{Test-Zero-Div}{$a(\bar{x}_l)$, $I_{l}$}, for a triangular ideal $I_l=:\langle h_0,\ldots,h_l \rangle$, either reports that $a(\bar{x}_l)$ is not a zerodivisor modulo $I_l$, or 
{\em returns a non-trivial factorization} of a generator $h_i=: h_{i,1} \cdots h_{i,m}$ (into monic, wrt $x_i$, factors mod prior ideal). In this section we assume $\F$ to be a finite field.

\smallskip
\textbf{Idea:} In the quotient ring $\F[\bar{x}_l]/\langle I_l \rangle$, a monic (wrt $x_i$) polynomial $a(\bar{x}_i)$ is a zerodivisor iff it contains a factor of 
$h_i(\bar{x}_i)$--- generator of triangular ideal $I_l$ with variables $\{x_0,\ldots,x_i\}$. 
So, firstly the algorithm checks if the given polynomial $a(\bar{x}_l)$ is monic (recursively, from variables $x_{l-1}$ to $x_0$). If it fails, it factors some generator $h_i$ for $i<l$. 
After making $a(\bar{x}_l)$ monic, we take gcd of $a$ with $h_l$--- if it finds non-trivial gcd it factors $h_l$, else $a(\bar{x}_l)$ is not a zerodivisor.

\begin{breakablealgorithm}
 \caption{Zerodivisor test of $a(\bar{x}_l)$ modulo $I_l$}
 \label{algo-gcd}
\begin{algorithmic}[1]
\Procedure{Test-Zero-Div}{$a(\bar{x}_l)$,$I_{l}$}

\If{$l=0$}
\parState{[\textbf{Take univariate GCD}] $gcd \leftarrow \gcd(a(x_0),h_0(x_0)) $.}
\If{$gcd$ is non-trivial}
\parState{Factorize $h_0(x_0)=: gcd \cdot\frac{h_0}{gcd} $; \Return $(True, gcd\cdot \frac{h_0}{gcd})$.}
\Else 
\parState{\Return $(False)$.}
\EndIf
\EndIf
\parState{Let the leading coefficient of $a(\bar{x}_l)$ wrt $x_l$ be $\tilde{a}(\bar{x}_{l-1})$.}
\parState{Call \Call{Test-Zero-Div}{$\tilde{a}(\bar{x}_{l-1})$, $I_{l-1}$}.}
\If{The test returned $True$}
\parState{\Return the result of the test including the factorization of a generator $h_i(\bar{x}_i)$.}
\EndIf

\noindent[Now, we will take gcd of $a$ and $h_l$ using iterated division method (Euclid's method).]

\parState{Define $b(\bar{x}_l) \leftarrow h_l(\bar{x}_l)$.}
\While{$b(\bar{x}_l)\not = 0$}
\parState{Let $\tilde{b}(\bar{x}_{l-1})$ be the leading coefficient of $b(\bar{x}_l)$ wrt $x_l$.}
\If{\Call{Test-Zero-Div}{$\tilde{b}(\bar{x}_{l-1})$, $I_{l-1}$} = $True$}
\parState{\Return result of \Call{Test-Zero-Div}{$\tilde{b}(\bar{x}_{l-1})$,$I_{l-1}$}, factorization of a generator $h_i(\bar{x}_i)$.}
\EndIf
\parState{Let $c(\bar{x}_l) \leftarrow$ \Call{Reduce}{$a(\bar{x}_l)$, $I_{l-1}+\langle b(\bar{x}_l)/\tilde{b} \rangle$} (same as taking remainder of $a(\bar{x}_l)$ when divided by the monic polynomial $b(\bar{x}_l)/\tilde{b}$ modulo $I_{l-1}$).}
\parState{$a(\bar{x}_l) \leftarrow b(\bar{x}_l)/\tilde{b}$, $b(\bar{x}_l) \leftarrow c(\bar{x}_l)$. [Invariant: $\deg_{x_l}(b)$ has fallen.]}
\EndWhile

\noindent[Gcd of original $a(\bar{x}_l)$ and $h_l(\bar{x}_l)$ mod $I_{l}$ is stored in $a(\bar{x}_l)$.]

\If{gcd $a(\bar{x}_l)$ is non-trivial}
\parState{\Return ($True$, a non-trivial factorization of $h_l(\bar{x}_l)$).}
\Else
\parState{\Return $(False)$. [$a(\bar{x}_l)$ is not a zerodivisor.]}
\EndIf

\EndProcedure
\end{algorithmic}
\end{breakablealgorithm}


\begin{lemma}[Efficiency of testing zerodivisors]
\label{lemma-zero-div}
Assuming, coefficients of $a(\bar{x}_l)$ wrt $x_l$ are in reduced form modulo $I_{l-1}$, Algorithm~\ref{algo-gcd} takes time $\poly(\deg_{x_l}(a), \log |\F|, \deg(I_{l}))$.
\end{lemma}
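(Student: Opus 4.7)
The plan is to induct on the length parameter $l$. For the base case $l=0$, Algorithm~\ref{algo-gcd} reduces to a single univariate $\gcd(a(x_0),h_0(x_0))$ over the field $\F$, which by standard Euclidean computation costs $\poly(\deg_{x_0}(a),\deg_{x_0}(h_0),\log|\F|)$; this matches the claimed bound since $\deg(I_0)=\deg_{x_0}(h_0)$.

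For the inductive step I would split the work into three parts. (i) The single recursive call in Step $10$ tests whether the leading coefficient $\tilde{a}(\bar{x}_{l-1})$ is a zerodivisor modulo the shorter triangular ideal $I_{l-1}$; by hypothesis the $x_l$-coefficients of $a$ are already reduced mod $I_{l-1}$, so the inductive hypothesis applies and gives time $\poly(\deg_{x_{l-1}}(\tilde{a}),\log|\F|,\deg(I_{l-1}))$, which is absorbed by $\poly(\deg_{x_l}(a),\log|\F|,\deg(I_l))$. (ii) The while-loop is an Euclidean-gcd computation in the ring $(\F[\bar{x}_{l-1}]/I_{l-1})[x_l]$; since each iteration strictly drops $\deg_{x_l}(b)$, there are at most $\deg_{x_l}(h_l)+1$ iterations. (iii) Within each iteration, the recursive zerodivisor-test on $\tilde{b}(\bar{x}_{l-1})$ is bounded by the inductive hypothesis; once $\tilde{b}$ is confirmed to be a non-zerodivisor (hence a unit, because $\F[\bar{x}_{l-1}]/I_{l-1}$ is a finite, hence Artinian, ring) we invert it via Lemma~\ref{lem-div-mod-I}, scale $b$ to be monic in $x_l$, and call \Call{Reduce}{$a(\bar{x}_l)$, $I_{l-1}+\langle b/\tilde{b}\rangle$} at cost $\poly(\deg_{x_l}(a),\log|\F|,\deg(I_{l-1}+\langle b/\tilde{b}\rangle))$ by Lemma~\ref{lemma-reduction}. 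Since $\deg(I_{l-1}+\langle b/\tilde{b}\rangle)\le\deg(I_l)$, the per-iteration cost is $\poly(\deg_{x_l}(a),\log|\F|,\deg(I_l))$.

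The main technical obstacle is to maintain the following invariant throughout the loop: the current $b(\bar{x}_l)$ (and the running $a(\bar{x}_l)$) has all its $x_l$-coefficients already reduced modulo $I_{l-1}$. This is what legitimizes invoking the inductive hypothesis in (iii), and also what guarantees $\deg_{x_i}(b)<\deg_{x_i}(h_i)$ for $i<l$ so that the complexity bound of \Call{Reduce}{} from Lemma~\ref{lemma-reduction} actually collapses to the promised polynomial in $\deg_{x_l}(a)$, $\log|\F|$, $\deg(I_l)$. The invariant holds at entry, because $b=h_l$ is a reduced generator by the global assumption on triangular-ideal generators in Section~\ref{appen-reduce-valuation}; after each iteration the new $b$ equals $c=\text{\Call{Reduce}{$a, I_{l-1}+\langle b/\tilde{b}\rangle$}}$, which is reduced mod $I_{l-1}$ by construction, and the new $a$ equals $b/\tilde{b}$, whose coefficients inherit reduced form after multiplying by the (reduced) inverse of $\tilde{b}$ mod $I_{l-1}$.

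Summing the per-iteration cost over at most $\deg_{x_l}(h_l)$ iterations, and adding the one-time overheads of (i) and the base-case recursion unrolling (whose depth is $l\le \len(I_l)-1$ and whose widths telescope into a single $\poly$-factor), yields the total $\poly(\deg_{x_l}(a),\log|\F|,\deg(I_l))$. The post-loop step that reports whether the remaining $a(\bar{x}_l)$ is a proper factor of $h_l$ modulo $I_{l-1}$, together with computing the complementary cofactor $h_l/a\bmod I_{l-1}$, is one further \Call{Reduce}{} call and is thus absorbed by the same bound.
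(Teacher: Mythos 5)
Your proposal is correct and follows essentially the same route as the paper: induction on the length of the triangular ideal, a base case of univariate Euclidean gcd, and a per-iteration analysis of the while-loop using the inductive hypothesis together with the cost bounds of Reduce and Division. You supply somewhat more detail than the paper (the invariant that coefficients remain reduced mod $I_{l-1}$, the Artinian-ring argument that a non-zerodivisor is a unit, and a tighter iteration count), but the underlying decomposition of the complexity is identical.
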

\begin{proof}
We apply induction on the length $l+1$ of ideal $I_l$. 

For $l=0$, it runs univariate gcd and takes time $\poly(\deg(a), \deg(h_0), \log |\F|)$ \cite{shoup2009computational}.
 
Assume lemma statement holds true for ideals of length $l$.

\smallskip
By induction, checking $\tilde{a}(\bar{x}_{l-1})$ is a zerodivisor mod $I_{l-1}$, takes $\poly(\deg_{x_{l-1}}(\tilde{a}), \log |\F|, \deg(I_{l-1}))$ time.

To compute gcd of $a$ and $h_l$, Euclidean gcd algorithm will run at most $\deg_{x_l}(a) + \deg_{x_l}(h_l)$ while-loops. 
From induction hypothesis, and Lemmas \ref{lemma-reduction}-\ref{lem-div-mod-I}, each loop takes at most $\poly(\deg_{x_{l}}(a), \log |\F|$, $\deg(I_{l}))$ time. So, we are done.
\end{proof}

\Call{GCD}{$a(\bar{x}_{l},x)$, $b(\bar{x}_{l},x)$, $I_{l}$} computes gcd of two polynomials $a(\bar{x}_{l},x)$ and $b(\bar{x}_{l},x)$ modulo a triangular ideal 
$I_l=\langle h_0(x_0),\ldots,h_l(\bar{x}_l) \rangle$ resp.~$False$. It computes the {\em monic gcd} resp.~returns a {\em non-trivial factorization} of some $h_i$. 

\begin{breakablealgorithm}
 \caption{GCD computation modulo $I_{l}$}
 \label{algo-gcd}
\begin{algorithmic}[1]
\Procedure{GCD}{$a(\bar{x}_{l},x)$, $b(\bar{x}_{l},x)$, $I_{l}$}
\parState{Let $\tilde{b}(\bar{x}_{l})$ be the leading coefficient of $b$ with respect to $x$.}

\If{\Call{Test-Zero-Div}{$\tilde{b}(\bar{x}_{l})$, $I_{l}$} = $True$}

\parState{\Return $False$, \Call{Test-Zero-Div}{$\tilde{b}(\bar{x}_{l})$, $I_{l}$} factors some generator $h_i(\bar{x}_i)$.}

\EndIf

\parState{Let $c(\bar{x}_l,x) \leftarrow$\Call{Reduce}{$a$, $I_{l}+\langle b/\tilde{b}\rangle$}.}

\If{$c=0$}

\parState{\Return $b/\tilde{b}$.}

\Else

\parState{\Return \Call{GCD}{$b(\bar{x}_l,x)$, $c(\bar{x}_l,x)$, $I_{l}$}.}

\EndIf

\EndProcedure
\end{algorithmic}
\end{breakablealgorithm}

\begin{lemma}[Multivariate GCD]
\label{lemma-gcd}
Algorithm~\ref{algo-gcd} either factors a generator $h_i$ (\& outputs $False$), or computes a monic polynomial $g(\bar{x}_l,x)\in \F[\bar{x}_{l},x]$, such that, 
$g$ divides $a, b$ modulo $I_{l}$. Moreover, $g=u a + v b \bmod I_{l}$, for some $u(\bar{x}_l,x), v(\bar{x}_l,x) \in \F[\bar{x}_l,x]$. 

If $a$ and $b$ are in reduced form mod $I_{l}$, then it takes time $\poly\left(\deg_{x}(a), \deg_{x}(b), \log |\F|, \deg(I_{l})\right)$.
\end{lemma}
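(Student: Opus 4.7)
The algorithm is extended Euclidean division adapted to the coefficient ring $R_l := \F[\bar{x}_l]/I_l$, which need not be a domain. The plan is to prove, by strong induction on $\deg_x(b)$, the three claims: (i) the algorithm either returns a nontrivial factorization of some generator $h_i$ or a polynomial $g$ that is monic in $x$ and divides both $a$ and $b$ modulo $I_l$; (ii) in the latter case $g \equiv ua + vb \bmod I_l$ for some $u,v\in\F[\bar{x}_l,x]$; and (iii) the total running time is $\poly(\deg_x(a),\deg_x(b),\log|\F|,\deg(I_l))$.

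\textbf{One level of the recursion.} If \Call{Test-Zero-Div}{$\tilde{b},I_l$} returns $True$, Lemma~\ref{lemma-zero-div} hands us the desired generator-factorization and we exit. Otherwise $\tilde{b}$ is a unit in $R_l$, so by Lemma~\ref{lem-div-mod-I} we may form $\tilde{b}^{-1}$, and $B := b/\tilde{b} \in R_l[x]$ is monic in $x$ with $\deg_x(B)=\deg_x(b)$. The call $c\leftarrow\Call{Reduce}{a,I_l+\langle B\rangle}$ is then the standard long division of $a$ by the monic polynomial $B$ over $R_l$, yielding $c \equiv a - qB \bmod I_l$ with $\deg_x(c) < \deg_x(b)$ and some quotient $q\in\F[\bar{x}_l,x]$. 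If $c=0$, the output $g=B$ is monic, divides $a$ by construction and divides $b$ since $b=\tilde{b}\cdot B$, and the trivial identity $B = 0\cdot a + \tilde{b}^{-1}\cdot b$ certifies B\'ezout.

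\textbf{Inductive step.} If $c\neq 0$, the recursive call on $(b,c,I_l)$ enters with a strictly smaller $\deg_x$ of the second argument, so by induction it either propagates a generator-factorization (and we are done) or returns a monic $g$ with $g\mid b$, $g\mid c$, and $g\equiv u'b+v'c \bmod I_l$ for some $u',v'$. Substituting $c\equiv a-q\tilde{b}^{-1}b$ gives $g\equiv v'\cdot a+(u'-v'q\tilde{b}^{-1})\cdot b \bmod I_l$, the required B\'ezout identity for the original pair; and $g\mid c$ combined with $g\mid b$ yields $g\mid a$. Termination is immediate because $\deg_x$ of the second argument drops at every recursive call, bounding the recursion depth by $\min(\deg_x(a),\deg_x(b))+1$.

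\textbf{Complexity and main obstacle.} If $a,b$ start reduced modulo $I_l$, every intermediate remainder $c$ has $x_i$-degree below $\deg_{x_i}(h_i)$ for $0\le i\le l$ and $\deg_x(c)<\deg_x(b)$, so sizes never blow up. One level therefore costs $\poly(\deg_x(a),\deg_x(b),\log|\F|,\deg(I_l))$: one call to \Call{Test-Zero-Div}{} (Lemma~\ref{lemma-zero-div}), one inversion of $\tilde{b}$ (Lemma~\ref{lem-div-mod-I}), and one reduction (Lemma~\ref{lemma-reduction}); multiplied by the recursion depth this yields the stated bound. The delicate point---and the real obstacle---is to be sure that ``division by $B$ in $R_l[x]$'' is legitimate: it is precisely the $False$ answer from \Call{Test-Zero-Div}{} that certifies $\tilde{b}$ to be a unit, so that \Call{Reduce}{} is truly dividing by a monic polynomial over $R_l$, which in turn underwrites both the strict drop in $\deg_x$ and the correctness of the Euclidean-style argument.
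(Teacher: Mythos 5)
Your proof is correct and follows essentially the same route as the paper's: the paper likewise treats Algorithm 4 as multivariate Euclidean gcd over the coefficient ring $\F[\bar{x}_l]/I_l$, invoking "standard Euclidean gcd arguments (using recursion)" for the divisibility and Bézout claims and bounding the depth by $\deg_x(a)+\deg_x(b)$ with the same per-step references to Lemmas~\ref{lemma-reduction}, \ref{lem-div-mod-I}, \ref{lemma-zero-div}. You simply unfold the "standard argument" the paper leaves implicit (the explicit induction on $\deg_x$ of the second argument, the substitution $c\equiv a-qB$ to propagate the Bézout identity, and the observation that the $False$ return of \Call{Test-Zero-Div}{} is exactly what licenses division by the monic $B$), which is a sound fleshing-out rather than a different approach.
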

\begin{proof}
Algorithm \ref{algo-gcd} is just an implementation of multivariate Euclidean gcd algorithm over the coefficient ring  $\F_p[\bar{x}_{l}]/I_{l} =:R'$. 
If the algorithm outputs $g(\bar{x}_l,x) \in R'[x_l]$ then, by standard Euclidean gcd arguments (using recursion), there exists $u(\bar{x}_l,x), v(\bar{x}_l,x) \in R'[x]$, such that, 
$u a+v b = g$, and $g$ divides both $a$ and $b$ modulo $I_{l}$.

The algorithm works fine if in each step it was able to work with a monic divisor.
Otherwise, it gets stuck at a `division' step, implying that the divisor's leading-coefficient is a zerodivisor, factoring some generator of $I_{l}$. 
	
For time complexity, each recursive step makes one call each to \Call{Test-Zero-Div}{}, \Call{Reduce}{}, and division procedures. 
They take time $\poly\left(\deg_{x}(a), \deg_{x}(b), \log |\F|, \deg(I_{l})\right)$ ($\because$ coefficients of $a$ and $b$ are in reduced form mod $I_{l}$, and use Lemmas \ref{lemma-reduction}, \ref{lem-div-mod-I} \& \ref{lemma-zero-div}).
Since number of recursive steps are bounded by $\deg_{x}(a)+\deg_{x}(b)$, total time is bounded by $\poly\left(\deg_{x}(a), \deg_{x}(b), \log |\F|, \deg(I_{l})\right)$.
\end{proof}

\end{document}